\documentclass[11pt,english]{article}
\usepackage{lmodern}

\usepackage[T1]{fontenc}
\usepackage[latin9]{inputenc}
\usepackage{geometry}
\usepackage{dsfont}
\usepackage{amsmath}
\usepackage{amsthm}
\usepackage{amssymb}

\makeatletter
\numberwithin{figure}{section}
\numberwithin{equation}{section}
\theoremstyle{plain}
\newtheorem{thm}{\protect\theoremname}[section]
\theoremstyle{plain}
\newtheorem{cor}[thm]{\protect\corollaryname}
\theoremstyle{definition}
\newtheorem{defn}[thm]{\protect\definitionname}
\theoremstyle{plain}
\newtheorem{lem}[thm]{\protect\lemmaname}
\theoremstyle{plain}
\newtheorem{prop}[thm]{\protect\propositionname}
\theoremstyle{remark}
\newtheorem*{acknowledgement*}{\protect\acknowledgementname}

\usepackage{float,psfrag,epsfig,color,url,hyperref}
\usepackage{algorithm,algorithmic}
\usepackage{graphicx,relsize}
\usepackage{amssymb,amsfonts,amsmath,amsthm,amscd,dsfont,mathrsfs,mathtools,microtype,nicefrac,pifont}
\usepackage{upgreek}
\usepackage[dvipsnames]{xcolor}
\usepackage{epstopdf,bbm,enumitem}
\usepackage{dsfont,tikz}
\usepackage[mathscr]{euscript}
\usepackage[toc,page]{appendix}
\hypersetup{
  colorlinks,
  linkcolor={red!50!black},
  citecolor={blue!50!black},
  urlcolor={blue!80!black}
}
\usepackage{etoolbox}
\patchcmd{\thebibliography}
  {\settowidth}
  {\setlength{\itemsep}{0pt plus 0.1pt}\settowidth}
  {}{}
\apptocmd{\thebibliography}
  {\small}
  {}{}
\allowdisplaybreaks
\usepackage{custom}
\makeatletter
\renewcommand{\paragraph}{%
  \@startsection{paragraph}{4}%
  {\z@}{1.25ex \@plus 1ex \@minus .2ex}{-1em}%
  {\normalfont\normalsize\bfseries}%
}
\makeatother

\makeatother

\usepackage{babel}
\providecommand{\acknowledgementname}{Acknowledgement}
\providecommand{\corollaryname}{Corollary}
\providecommand{\definitionname}{Definition}
\providecommand{\lemmaname}{Lemma}
\providecommand{\propositionname}{Proposition}
\providecommand{\theoremname}{Theorem}

\begin{document}
\def\balign#1\ealign{\begin{align}#1\end{align}}
\def\baligns#1\ealigns{\begin{align*}#1\end{align*}}
\def\balignat#1\ealign{\begin{alignat}#1\end{alignat}}
\def\balignats#1\ealigns{\begin{alignat*}#1\end{alignat*}}
\def\bitemize#1\eitemize{\begin{itemize}#1\end{itemize}}
\def\benumerate#1\eenumerate{\begin{enumerate}#1\end{enumerate}}

\newenvironment{talign*}
 {\let\displaystyle\textstyle\csname align*\endcsname}
 {\endalign}
\newenvironment{talign}
 {\let\displaystyle\textstyle\csname align\endcsname}
 {\endalign}

\def\balignst#1\ealignst{\begin{talign*}#1\end{talign*}}
\def\balignt#1\ealignt{\begin{talign}#1\end{talign}}

\let\originalleft\left
\let\originalright\right
\renewcommand{\left}{\mathopen{}\mathclose\bgroup\originalleft}
\renewcommand{\right}{\aftergroup\egroup\originalright}

\def\Gronwall{Gr\"onwall\xspace}
\def\Holder{H\"older\xspace}
\def\Ito{It\^o\xspace}
\def\Nystrom{Nystr\"om\xspace}
\def\Schatten{Sch\"atten\xspace}
\def\Matern{Mat\'ern\xspace}

\def\tinycitep*#1{{\tiny\citep*{#1}}}
\def\tinycitealt*#1{{\tiny\citealt*{#1}}}
\def\tinycite*#1{{\tiny\cite*{#1}}}
\def\smallcitep*#1{{\scriptsize\citep*{#1}}}
\def\smallcitealt*#1{{\scriptsize\citealt*{#1}}}
\def\smallcite*#1{{\scriptsize\cite*{#1}}}

\def\blue#1{\textcolor{blue}{{#1}}}
\def\green#1{\textcolor{green}{{#1}}}
\def\orange#1{\textcolor{orange}{{#1}}}
\def\purple#1{\textcolor{purple}{{#1}}}
\def\red#1{\textcolor{red}{{#1}}}
\def\teal#1{\textcolor{teal}{{#1}}}

\def\mbi#1{\boldsymbol{#1}} 
\def\mbf#1{\mathbf{#1}}
\def\mrm#1{\mathrm{#1}}
\def\tbf#1{\textbf{#1}}
\def\tsc#1{\textsc{#1}}

\def\mbiA{\mbi{A}}
\def\mbiB{\mbi{B}}
\def\mbiC{\mbi{C}}
\def\mbiDelta{\mbi{\Delta}}
\def\mbif{\mbi{f}}
\def\mbiF{\mbi{F}}
\def\mbih{\mbi{g}}
\def\mbiG{\mbi{G}}
\def\mbih{\mbi{h}}
\def\mbiH{\mbi{H}}
\def\mbiI{\mbi{I}}
\def\mbim{\mbi{m}}
\def\mbiP{\mbi{P}}
\def\mbiQ{\mbi{Q}}
\def\mbiR{\mbi{R}}
\def\mbiv{\mbi{v}}
\def\mbiV{\mbi{V}}
\def\mbiW{\mbi{W}}
\def\mbiX{\mbi{X}}
\def\mbiY{\mbi{Y}}
\def\mbiZ{\mbi{Z}}

\def\textsum{{\textstyle\sum}} 
\def\textprod{{\textstyle\prod}} 
\def\textbigcap{{\textstyle\bigcap}} 
\def\textbigcup{{\textstyle\bigcup}} 

\def\reals{\mathbb{R}} 
\def\integers{\mathbb{Z}} 
\def\rationals{\mathbb{Q}} 
\def\naturals{\mathbb{N}} 
\def\complex{\mathbb{C}} 

\def\what#1{\widehat{#1}}

\def\twovec#1#2{\left[\begin{array}{c}{#1} \\ {#2}\end{array}\right]}
\def\threevec#1#2#3{\left[\begin{array}{c}{#1} \\ {#2} \\ {#3} \end{array}\right]}
\def\nvec#1#2#3{\left[\begin{array}{c}{#1} \\ {#2} \\ \vdots \\ {#3}\end{array}\right]} 

\def\maxeig#1{\lambda_{\mathrm{max}}\left({#1}\right)}
\def\mineig#1{\lambda_{\mathrm{min}}\left({#1}\right)}

\def\Re{\operatorname{Re}} 
\def\indic#1{\mbb{I}\left[{#1}\right]} 
\def\logarg#1{\log\left({#1}\right)} 
\def\polylog{\operatorname{polylog}}
\def\maxarg#1{\max\left({#1}\right)} 
\def\minarg#1{\min\left({#1}\right)} 
\def\Earg#1{\E\left[{#1}\right]}
\def\Esub#1{\E_{#1}}
\def\Esubarg#1#2{\E_{#1}\left[{#2}\right]}
\def\bigO#1{\mathcal{O}\left(#1\right)} 
\def\littleO#1{o(#1)} 
\def\P{\mbb{P}} 
\def\Parg#1{\P\left({#1}\right)}
\def\Psubarg#1#2{\P_{#1}\left[{#2}\right]}
\def\Trarg#1{\Tr\left[{#1}\right]} 
\def\trarg#1{\tr\left[{#1}\right]} 
\def\Var{\mrm{Var}} 
\def\Vararg#1{\Var\left[{#1}\right]}
\def\Varsubarg#1#2{\Var_{#1}\left[{#2}\right]}
\def\Cov{\mrm{Cov}} 
\def\Covarg#1{\Cov\left[{#1}\right]}
\def\Covsubarg#1#2{\Cov_{#1}\left[{#2}\right]}
\def\Corr{\mrm{Corr}} 
\def\Corrarg#1{\Corr\left[{#1}\right]}
\def\Corrsubarg#1#2{\Corr_{#1}\left[{#2}\right]}
\newcommand{\info}[3][{}]{\mathbb{I}_{#1}\left({#2};{#3}\right)} 
\newcommand{\staticexp}[1]{\operatorname{exp}(#1)} 
\newcommand{\loglihood}[0]{\mathcal{L}} 


\providecommand{\arccos}{\mathop\mathrm{arccos}}
\providecommand{\dom}{\mathop\mathrm{dom}}
\providecommand{\diag}{\mathop\mathrm{diag}}
\providecommand{\tr}{\mathop\mathrm{tr}}
\providecommand{\card}{\mathop\mathrm{card}}
\providecommand{\sign}{\mathop\mathrm{sign}}
\providecommand{\conv}{\mathop\mathrm{conv}} 
\def\rank#1{\mathrm{rank}({#1})}
\def\supp#1{\mathrm{supp}({#1})}

\providecommand{\minimize}{\mathop\mathrm{minimize}}
\providecommand{\maximize}{\mathop\mathrm{maximize}}
\providecommand{\subjectto}{\mathop\mathrm{subject\;to}}

\def\openright#1#2{\left[{#1}, {#2}\right)}

\ifdefined\nonewproofenvironments\else
\ifdefined\ispres\else
 
\fi
\fi
\makeatletter
\@addtoreset{equation}{section}
\makeatother
\def\theequation{\thesection.\arabic{equation}}

\newcommand{\cmark}{\ding{51}}

\newcommand{\xmark}{\ding{55}}

\newcommand{\eq}[1]{\begin{align}#1\end{align}}
\newcommand{\eqn}[1]{\begin{align*}#1\end{align*}}
\renewcommand{\Pr}[1]{\mathbb{P}\left( #1 \right)}
\newcommand{\Ex}[1]{\mathbb{E}\left[#1\right]}

\newcommand{\matt}[1]{{\textcolor{Maroon}{[Matt: #1]}}}
\newcommand{\kook}[1]{{\textcolor{blue}{[Kook: #1]}}}
\definecolor{OliveGreen}{rgb}{0,0.6,0}
\newcommand{\sv}[1]{{\textcolor{OliveGreen}{[Santosh: #1]}}}

\global\long\def\on#1{\operatorname{#1}}%

\global\long\def\bw{\mathsf{Ball\ walk}}%
\global\long\def\sw{\mathsf{Speedy\ walk}}%
\global\long\def\gw{\mathsf{Gaussian\ walk}}%
\global\long\def\ps{\mathsf{Proximal\ sampler}}%
\global\long\def\dw{\mathsf{Dikin\ walk}}%

\global\long\def\chr{\mathsf{Coordinate\ Hit\text{-}and\text{-}Run}}%
\global\long\def\har{\mathsf{Hit\text{-}and\text{-}Run}}%
\global\long\def\gc{\mathsf{Gaussian\ cooling}}%
\global\long\def\ino{\mathsf{\mathsf{In\text{-}and\text{-}Out}}}%
\global\long\def\tgc{\mathsf{Tilted\ Gaussian\ cooling}}%
\global\long\def\PS{\mathsf{PS}}%
\global\long\def\psunif{\mathsf{PS}_{\textup{unif}}}%
\global\long\def\psexp{\mathsf{PS}_{\textup{exp}}}%
\global\long\def\psann{\mathsf{PS}_{\textup{ann}}}%
\global\long\def\psgauss{\mathsf{PS}_{\textup{Gauss}}}%
\global\long\def\eval{\mathsf{Eval}}%
\global\long\def\mem{\mathsf{Mem}}%

\global\long\def\O{O}%
\global\long\def\Otilde{\widetilde{O}}%
\global\long\def\Omtilde{\widetilde{\Omega}}%

\global\long\def\E{\mathbb{E}}%
\global\long\def\Z{\mathbb{Z}}%
\global\long\def\P{\mathbb{P}}%
\global\long\def\N{\mathbb{N}}%

\global\long\def\R{\mathbb{R}}%
\global\long\def\Rd{\mathbb{R}^{d}}%
\global\long\def\Rdd{\mathbb{R}^{d\times d}}%
\global\long\def\Rn{\mathbb{R}^{n}}%
\global\long\def\Rnn{\mathbb{R}^{n\times n}}%

\global\long\def\psd{\mathbb{S}_{+}^{d}}%
\global\long\def\pd{\mathbb{S}_{++}^{d}}%

\global\long\def\defeq{\stackrel{\mathrm{{\scriptscriptstyle def}}}{=}}%

\global\long\def\veps{\varepsilon}%
\global\long\def\lda{\lambda}%
\global\long\def\vphi{\varphi}%
\global\long\def\K{\mathcal{K}}%

\global\long\def\half{\frac{1}{2}}%
\global\long\def\nhalf{\nicefrac{1}{2}}%
\global\long\def\texthalf{{\textstyle \frac{1}{2}}}%
\global\long\def\ltwo{L^{2}}%

\global\long\def\ind{\mathds{1}}%
\global\long\def\op{\mathsf{op}}%
\global\long\def\ch{\mathsf{Ch}}%
\global\long\def\kls{\mathsf{KLS}}%
\global\long\def\ts{\mathsf{Ts}}%
\global\long\def\hs{\textup{HS}}%
\global\long\def\ls{\textup{LS}}%

\global\long\def\cpi{C_{\mathsf{PI}}}%
\global\long\def\clsi{C_{\mathsf{LSI}}}%
\global\long\def\cch{C_{\mathsf{Ch}}}%
\global\long\def\clch{C_{\mathsf{logCh}}}%
\global\long\def\cexp{C_{\mathsf{exp}}}%
\global\long\def\cgauss{C_{\mathsf{Gauss}}}%

\global\long\def\chooses#1#2{_{#1}C_{#2}}%

\global\long\def\frob{\on F}%

\global\long\def\vol{\on{vol}}%

\global\long\def\sym{\on{sym}}%

\global\long\def\law{\on{law}}%

\global\long\def\tr{\on{tr}}%

\global\long\def\diag{\on{diag}}%

\global\long\def\diam{\on{diam}}%

\global\long\def\poly{\on{poly}}%

\global\long\def\polylog{\on{polylog}}%

\global\long\def\Diag{\on{Diag}}%

\global\long\def\inter{\on{int}}%

\global\long\def\esssup{\on{ess\,sup}}%

\global\long\def\proj{\on{Proj}}%

\global\long\def\e{\mathrm{e}}%

\global\long\def\id{\mathrm{id}}%

\global\long\def\supp{\on{supp}}%

\global\long\def\spanning{\on{span}}%

\global\long\def\rows{\on{row}}%

\global\long\def\cols{\on{col}}%

\global\long\def\rank{\on{rank}}%

\global\long\def\T{\mathsf{T}}%

\global\long\def\bs#1{\boldsymbol{#1}}%

\global\long\def\eu#1{\EuScript{#1}}%

\global\long\def\mb#1{\mathbf{#1}}%

\global\long\def\mbb#1{\mathbb{#1}}%

\global\long\def\mc#1{\mathcal{#1}}%

\global\long\def\mf#1{\mathfrak{#1}}%

\global\long\def\ms#1{\mathscr{#1}}%

\global\long\def\mss#1{\mathsf{#1}}%

\global\long\def\msf#1{\mathsf{#1}}%

\global\long\def\textint{{\textstyle \int}}%
\global\long\def\Dd{\mathrm{D}}%
\global\long\def\D{\mathrm{d}}%
\global\long\def\grad{\nabla}%
 
\global\long\def\hess{\nabla^{2}}%
 
\global\long\def\lapl{\triangle}%
 
\global\long\def\deriv#1#2{\frac{\D#1}{\D#2}}%
 
\global\long\def\pderiv#1#2{\frac{\partial#1}{\partial#2}}%
 
\global\long\def\de{\partial}%
\global\long\def\lagrange{\mathcal{L}}%
\global\long\def\Div{\on{div}}%

\global\long\def\Gsn{\mathcal{N}}%
 
\global\long\def\BeP{\textnormal{BeP}}%
 
\global\long\def\Ber{\textnormal{Ber}}%
 
\global\long\def\Bern{\textnormal{Bern}}%
 
\global\long\def\Bet{\textnormal{Beta}}%
 
\global\long\def\Beta{\textnormal{Beta}}%
 
\global\long\def\Bin{\textnormal{Bin}}%
 
\global\long\def\BP{\textnormal{BP}}%
 
\global\long\def\Dir{\textnormal{Dir}}%
 
\global\long\def\DP{\textnormal{DP}}%
 
\global\long\def\Exp{\textnormal{Exp}}%
 
\global\long\def\Gam{\textnormal{Gamma}}%
 
\global\long\def\GEM{\textnormal{GEM}}%
 
\global\long\def\HypGeo{\textnormal{HypGeo}}%
 
\global\long\def\Mult{\textnormal{Mult}}%
 
\global\long\def\NegMult{\textnormal{NegMult}}%
 
\global\long\def\Poi{\textnormal{Poi}}%
 
\global\long\def\Pois{\textnormal{Pois}}%
 
\global\long\def\Unif{\textnormal{Unif}}%

\global\long\def\bpar#1{\bigl(#1\bigr)}%
\global\long\def\Bpar#1{\Bigl(#1\Bigr)}%

\global\long\def\abs#1{|#1|}%
\global\long\def\babs#1{\bigl|#1\bigr|}%
\global\long\def\Babs#1{\Bigl|#1\Bigr|}%

\global\long\def\snorm#1{\|#1\|}%
\global\long\def\bnorm#1{\bigl\Vert#1\bigr\Vert}%
\global\long\def\Bnorm#1{\Bigl\Vert#1\Bigr\Vert}%

\global\long\def\sbrack#1{[#1]}%
\global\long\def\bbrack#1{\bigl[#1\bigr]}%
\global\long\def\Bbrack#1{\Bigl[#1\Bigr]}%

\global\long\def\sbrace#1{\{#1\}}%
\global\long\def\bbrace#1{\bigl\{#1\bigr\}}%
\global\long\def\Bbrace#1{\Bigl\{#1\Bigr\}}%

\global\long\def\Abs#1{\left\lvert #1\right\rvert }%
\global\long\def\Par#1{\left(#1\right)}%
\global\long\def\Brack#1{\left[#1\right]}%
\global\long\def\Brace#1{\left\{  #1\right\}  }%

\global\long\def\inner#1{\langle#1\rangle}%
 
\global\long\def\binner#1#2{\left\langle {#1},{#2}\right\rangle }%

\global\long\def\norm#1{\lVert#1\rVert}%
\global\long\def\onenorm#1{\norm{#1}_{1}}%
\global\long\def\twonorm#1{\norm{#1}_{2}}%
\global\long\def\infnorm#1{\norm{#1}_{\infty}}%
\global\long\def\fronorm#1{\norm{#1}_{\text{F}}}%
\global\long\def\nucnorm#1{\norm{#1}_{*}}%
\global\long\def\staticnorm#1{\|#1\|}%
\global\long\def\statictwonorm#1{\staticnorm{#1}_{2}}%

\global\long\def\mmid{\mathbin{\|}}%

\global\long\def\otilde#1{\widetilde{O}(#1)}%
\global\long\def\wtilde{\widetilde{W}}%
\global\long\def\wt#1{\widetilde{#1}}%

\global\long\def\KL{\msf{KL}}%
\global\long\def\dtv{d_{\textrm{\textup{TV}}}}%
\global\long\def\FI{\msf{FI}}%
\global\long\def\tv{\msf{TV}}%
\global\long\def\TV{\msf{TV}}%

\global\long\def\cov{\on{cov}}%
\global\long\def\var{\on{Var}}%
\global\long\def\ent{\on{Ent}}%

\global\long\def\cred#1{\textcolor{red}{#1}}%
\global\long\def\cblue#1{\textcolor{blue}{#1}}%
\global\long\def\cgreen#1{\textcolor{green}{#1}}%
\global\long\def\ccyan#1{\textcolor{cyan}{#1}}%
\global\long\def\yk#1{\textcolor{red}{\textsf{[YK: #1]}}}%
\global\long\def\yb#1{\textcolor{blue}{\textsf{[yb: #1]}}}%

\global\long\def\iff{\Leftrightarrow}%
 
\global\long\def\textfrac#1#2{{\textstyle \frac{#1}{#2}}}%

\global\long\def\Expo{\textnormal{Expo}}%
\global\long\def\Tr{\on{Tr}}%
\global\long\def\onu{\bar{\nu}}%
\global\long\def\intk{\inter\K}%
\global\long\def\ncal{\mathcal{N}}%
\global\long\def\svec{\operatorname{svec}}%
\global\long\def\tvec{\operatorname{vec}}%
\global\long\def\del{\partial}%
\global\long\def\ovec{\operatorname{ovec}}%

\title{Beyond the $d^{2.5}$-mixing bound for Dikin walks on polytopes\\
\date{}\author{Yunbum Kook\\ Georgia Tech\\  \texttt{yb.kook@gatech.edu}}}
\maketitle
\begin{abstract}
Inspired by interior-point methods (IPM) for structured convex optimization,
Kannan and Narayanan introduced the Dikin walk for sampling uniformly
from polytopes in 2009. As in IPMs, the Dikin walk is affine-invariant,
and its convergence is governed by the barrier geometry used to define
its local proposal. They showed that the Dikin walk with the logarithmic
barrier for a polytope in $\mathbb{R}^{d}$ with $m$ linear inequalities
mixes in $md$ iterations. In 2017, Chen, Dwivedi, Wainwright, and
Yu improved this to $d^{2.5}$ using a Lewis-weight barrier, and conjectured
that the correct mixing time should be $d^{2}$. 

We make progress toward this conjecture by improving the previous
$d^{2.5}$-mixing bound. For exponential sampling over a polytope,
we prove that the Dikin walk with a scaled Lee--Sidford metric mixes
from a warm start in $d^{2.25}$ iterations. This also yields an improved
cold-start complexity via a known annealing framework. The main technical
ingredient is improved average self-concordance of the Lee--Sidford
metric, which gives high acceptance probability for the Metropolis
filter along a random Dikin proposal. While previous analyses were
effectively limited to second-order control due to technical difficulties,
we develop a principled higher-order analysis. The proof combines
a selective higher-order expansion of recursive bottleneck terms,
a moving orthonormal-frame calculus for higher derivatives of the
Lewis weights, and Wiener-chaos decompositions via multiple stochastic
integrals to control the resulting Gaussian polynomials.
\end{abstract}
\thispagestyle{empty}\tableofcontents{}

\section{Introduction}

\setcounter{page}{1} We study the complexity of sampling from a high-dimensional
polytope. Throughout the paper, for $A\in\R^{m\times d}$ and $b\in\R^{m}$,
let $\K:=\{x\in\Rd:Ax\geq b\}$ be a bounded full-dimensional polytope
with $m$ linear inequalities. The goal of this problem is to output
a sample whose law is close to the uniform distribution (more generally,
an exponential distribution) over $\K$. In theoretical terms, this
problem has rich connections to convex optimization and convex-body
sampling \cite{DFK91random,LS93random,KLS97random,LV06hit,CV18Gaussian,JLLV26reducing,KVZ26INO,KZ25Renyi},
and on the applied side, it has found important applications in studying
human metabolic networks in systems biology \cite{TSF13community,HCTFV17chrr,KLSV22sampling,KLSV23condition}.

This is an important and well-studied special case of the more general
problem of \emph{log-concave sampling} with only zeroth-order oracle
access to the distribution \cite{LV07geometry,KV25sampling,KV25faster,KV26zeroLC}.
The best-known approaches currently have complexity that grows roughly
quadratically in the dimension \cite{KLS97random,KVZ24INO,KVZ26INO}
from a warm start (i.e., when already close to a target log-concave
distribution) and sub-cubic \cite{KV25faster,KV26zeroLC} from a cold
start, after an appropriate affine transformation called (near-)isotropic
rounding \cite{JLLV26reducing}. Here, each step involves a zeroth-order
query (i.e., checking whether a point $x$ is in $\K$), which for
a polytope has arithmetic complexity $O(md)$ (for checking all the
inequalities)\footnote{In the case of a polytope, by amortizing membership queries and using
fast matrix multiplication, this can be reduced to $md^{c_{1}+0.5}$
for rounding and $md^{c_{2}}$ for each sample thereafter, where $c_{1}<3.2$
and $c_{2}<2.26$ with current matrix multiplication bounds \cite{JLLV26reducing}.}.

A natural question is whether the \emph{structure} of a polytope,
and in particular its explicit inequality description, can be exploited
in the design of the sampler, rather than only through oracle queries.
General log-concave samplers treat the inequality description only
as a black-box oracle. Hence, a more ``structural'' approach would
leverage the additional information to accelerate mixing of polytope
samplers. To motivate principled approaches to this question and highlight
its deep connection to optimization theory, we first turn to the corresponding
optimization problem.

\paragraph{Sampling and optimization.}

Sampling and optimization can be related through the correspondence
between (1) sampling from a Gibbs distribution $\D\pi(x)\propto e^{-V(x)}\,\D x$
and (2) minimizing the potential $V:\Rd\to\R\cup\{\infty\}$. Indeed,
one can use sampling to solve the optimization problem via simulated
annealing \cite{KGV83simulated,HJJ03simulated,KV06simulated}---sampling
from the annealed distributions $\D\pi_{t}\propto e^{-V/t}\,\D x$
while gradually decreasing the temperature $t>0$: as $t$ becomes
small, the mass of $\pi_{t}$ concentrates near the minimizers of
$V$. Conversely, ideas from optimization have inspired the design
and analysis of sampling algorithms. A prominent example is the connection
between gradient flow and Langevin dynamics: the latter can be interpreted
as a Wasserstein gradient flow over the space of probability measures
\cite{JKO98variational}.

Under the above correspondence, the optimization counterpart of exponential
sampling over a polytope $\K$ is \emph{linear programming} (LP):
minimizing a linear function over a set of linear inequalities. Since
LP is a special case of convex optimization, one might first consider
applying a general-purpose optimization method, such as gradient descent.
However, such methods typically depend on the conditioning of the
problem instance and may struggle on ill-conditioned problems. This
brings us back to the structural question above: when solving LP,
how can one exploit the explicit inequality description of the polytope,
rather than treating it merely as a black-box convex body?

\paragraph{Interior-point methods (IPM).}

This question is precisely what led to the development of interior-point
methods. In 1967, Dikin introduced an interior-point method based
on the Dikin ellipsoid \cite{Dikin1967iterative}, a local ellipsoidal
approximation of the feasible region induced by the logarithmic barrier.
Following the subsequent breakthroughs of Karmarkar \cite{Karmarkar84polytime},
Renegar \cite{Renegar88polytime}, and Nesterov and Nemirovski \cite{NN94ipm},
the IPM literature has evolved into a general theory based on \emph{self-concordant}
barriers. In this theory, the iteration complexity is controlled by
the \emph{barrier parameter}. Roughly speaking, self-concordance imposes
a regularity condition on the third-order derivative of the barrier,
ensuring that its local Hessian geometry does not change too rapidly,
while the barrier parameter quantifies the global quality of the barrier.
For example, for a polytope $\K$, the logarithmic barrier $\phi(x)=-\sum_{i=1}^{m}\log(a_{i}^{\T}x-b_{i})$
has barrier parameter $\nu=O(m)$.

A crucial feature of IPM is \emph{affine-invariance}. Unlike general-purpose
optimizers, their performance is not affected by the conditioning
of the problem instance, but by the geometry induced by the chosen
barrier. Nesterov and Nemirovski showed that IPM based on a self-concordant
barrier with parameter $\nu$ requires $O(\nu^{1/2})$ iterations.
Thus, using the logarithmic barrier gives an $O(m^{1/2})$-iteration
method for LP. A major direction in IPM theory has been to reduce
this dependence on the number $m$ of constraints, since $m$ can
be inflated by nearly redundant inequalities without changing the
intrinsic geometry of the feasible region\footnote{In principle, this $m$-dependence can be avoided: the universal barrier
\cite{NN89selfconcordant} and the entropic barrier \cite{BE15entropic}
are $d$-self-concordant \cite{LY21universal,Chewi23entropic}, and
hence yield $O(d^{1/2})$-iteration IPM. However, these barriers are
generally difficult to compute or implement efficiently. }.

This motivated the search for efficiently computable barriers with
smaller parameters. In 1989, Vaidya's breakthrough \cite{Vaidya96convex}
constructed such a barrier with parameter $O(\sqrt{md})$, based on
leverage scores of the constraint matrix. Finally in 2014, Lee and
Sidford obtained an efficiently computable barrier with parameter
$\Otilde(d)$ based on \emph{Lewis weights} of the constraint matrix
\cite{LS14pathfinding,LS19solving}, which leads to $\Otilde(d^{1/2})$-iteration
IPM. Thus, in the optimization theory of LP, the right choice of a
barrier removes the linear dependence on the number of inequalities.

\paragraph{IPM for sampling: the Dikin walk.}

The connection above suggests that the IPM perspective can also be
used to address the structural question for sampling. In the setting
of uniform sampling from a polytope, which may be viewed as the ``random''
feasibility for linear inequalities, Kannan and Narayanan initiated
an interior-point theory for sampling by proposing the $\dw$ \cite{KN09Dikin,KN12random}
in 2009. This walk is based on the classical \emph{Dikin ellipsoid}
from convex optimization: given a local metric $g:\inter\K\to\pd$
and a radius $r>0$, let $\msf N_{g}^{r}(x):=\msf N(x,\frac{r^{2}}{d}\,g(x)^{-1})$
be the Gaussian proposal with mean $x$ and covariance matrix $\frac{r^{2}}{d}\,g(x)^{-1}$.
The Dikin ellipsoid of radius $r$ at $x$ is defined as
\[
\msf D_{g}(x,r):=\{y\in\Rd:\norm{y-x}_{g(x)}\leq r\}\,,\qquad\norm v_{g(x)}^{2}=\norm v_{x}^{2}:=v^{\T}g(x)v\quad\text{for }v\in\Rd\,.
\]

Starting from $x_{0}\sim\pi_{0}$, the $\dw$ with target distribution
$\pi$ and local metric $g$ repeats the following steps: for $i=0,1,\dots$,
\begin{enumerate}
\item Sample $y\sim\msf N_{g}^{r}(x_{i})$.
\item If $y\notin\K$, then $x_{i+1}\gets x_{i}$. If $y\in\K$, set $x_{i+1}\gets y$
with probability 
\begin{equation}
\min\Bbrace{1,\frac{\pi(y)\sqrt{\det g(y)}\exp(-\frac{d}{2r^{2}}\,\norm{y-x_{i}}_{g(y)}^{2})}{\pi(x_{i})\sqrt{\det g(x_{i})}\exp(-\frac{d}{2r^{2}}\,\norm{y-x_{i}}_{g(x_{i})}^{2})}}\,,\label{eq:acceptance-probability}
\end{equation}
and $x_{i+1}\gets x_{i}$ otherwise.
\end{enumerate}
Kannan and Narayanan used the Hessian of the logarithmic barrier for
the local metric $g$ for uniform sampling $\pi\propto\ind_{\K}$.
This makes the walk affine-invariant, and therefore avoids the usual
preprocessing step of putting the body into near-isotropic position.

They showed that, from an $M$-warm start\footnote{An initial distribution $\mu$ is said to be $M$-warm with respect
to $\pi$ if $\mu(S)\le M\pi(S)$ for every measurable set $S$.}, the $\dw$ with the log-barrier mixes in $\Otilde(md\log\frac{M}{\veps})$
iterations to total variation (TV) distance $\varepsilon$ from the
uniform distribution; see also \cite{SV16mixing} for a simplified
proof. Moreover, each step of the log-barrier $\dw$ can be implemented
in amortized $\Otilde(md)$ time~\cite{LLV20strong}. Subsequently,
Narayanan extended this viewpoint to convex bodies equipped with self-concordant
barriers, further developing the interpretation of the $\dw$ as a
randomized IPM for sampling and optimization \cite{Narayanan16randomized}.

More recently, Kook and Vempala improved the iteration complexity
to $\Otilde(md\log\frac{1}{\veps})$ from a cold start (i.e., no warm
start) \cite{KV24ipm}. A related recent direction is due to Jiang
and Chen~\cite{JC24regularized}, who studied \emph{regularized}
$\dw$ for sampling log-concave distributions truncated to polytopes.
Their results include a soft-threshold $\dw$ with warm-start mixing
$\Otilde(md+\kappa d)$ for log-concave and log-smooth targets with
condition number $\kappa$.

\paragraph{Toward dimension-square mixing.}

For polytopes with many constraints (i.e., $m\gg d$), the $\dw$
may suffer from slow mixing. This mirrors the central question from
the optimization IPM: can the linear dependence on $m$ be improved
by choosing a better barrier?

There have been a series of works toward this goal, closely paralleling
the development of improved barriers in optimization. Gustafson and
Narayanan showed that by using a geometry related to the classical
John ellipsoid, one can obtain an iteration complexity polynomial
only in the dimension, though with a large exponent, namely $d^{27}$
\cite{GN23john}. Subsequently, Chen, Dwivedi, Wainwright, and Yu
showed that the $\dw$ with a variant of the Lewis-weights metric
mixes in $\Otilde(d^{2.5}\log\frac{M}{\veps})$ steps from an $M$-warm
start~\cite{CDW18MCMC}, where the Lewis-weights metric can be viewed
as an efficiently implementable proxy for the John ellipsoid geometry.
In particular, they left the conjecture that the mixing time could
be $\Otilde(d^{2}\polylog\frac{M}{\veps})$. This is the main motivation
of the present paper: \emph{can the iteration complexity of the $\dw$
be reduced from the log-barrier bound $\Otilde(md)$ to $\Otilde(d^{2}\polylog m)$?}

There are several reasons why a dimension-square mixing bound is the
natural target. First, in the optimization IPM theory, the barrier
parameter can be reduced from $m$ to $\Otilde(d)$ by replacing the
log-barrier with the Lee--Sidford barrier. Second, for general-purpose
samplers such as the $\bw$ \cite{KLS97random} and $\ino$ \cite{KVZ26INO,KV25sampling},
a dimension-square rate arises naturally once the target distribution
has been put in isotropic position. For example, the mixing bounds
for the aforementioned samplers scale as $d^{2}\,\norm{\cov\pi}$,
where the largest eigenvalue $\norm{\cov\pi}$ of the covariance matrix
measures the geometric skewness of the target distribution. Hence,
those samplers require (near-)isotropic rounding (i.e., $\norm{\cov\pi}=O(1)$),
after which they mix in $d^{2}$ steps from an $O(1)$-warm start.
Since the $\dw$ is affine-invariant, it does not necessitate this
global rounding step. Intuitively, the local metric used by the $\dw$
plays the role of \emph{implicitly} rounding the body around the current
point. From this perspective, the $\dw$ with a properly chosen metric
should behave as though the target were locally isotropic, making
$d^{2}$-mixing the fundamental target. 

\paragraph{Earlier attempts.}

Since then, there have been several attempts toward the conjectured
$d^{2}$-mixing bound. The challenge is that we need delicate analysis
of the change of local metric: in bounding the acceptance probability
of the proposal $y$ from the current point $x$, one must not only
show that a proposal is well-behaved in the forward ellipsoid centered
at the current point (i.e., $\msf N(x,\frac{r^{2}}{d}\,g(x)^{-1})$),
but also control the corresponding reverse proposal density at the
proposal (i.e., $\msf N(y,\frac{r^{2}}{d}\,g(y)^{-1})$). This is
referred to as average self-concordance (ASC) in the terminology of
\cite{KV24ipm}. This is where some previous claims of dimension-square
mixing have gaps that cannot be readily fixed.

In 2020, Laddha, Lee, and Vempala~\cite{LLV20strong} introduced
extensions of the classical theory of self-concordance, including
strong self-concordance and symmetry, as part of a general framework
for analyzing the $\dw$. Their work originally claimed a $d^{2}$-iteration
bound, but this claim leaves open an important Metropolis-filter issue.
Specifically, after drawing a proposal $y$ from the forward Dikin
ellipsoid centered at the current point $x$, the analysis also needs
to verify that $x$ lies in the Dikin ellipsoid centered at $y$ with
high probability. This step was initially missed, and a direct way
of fixing this issue yields a $d^{3}$-bound \cite{V26}.

In 2024, \cite{GKM+24Dikin} claimed $d^{2}$-mixing of the $\dw$
with Lewis weights. In the proof sketch of Lemma D.5, they assert
that \cite[Proposition 7]{SV16mixing} can be used similarly for the
Lewis weights, but without the complete justification\footnote{\cite{GKM+24Dikin} uses closeness of Lewis weights at the base point
$x$ and a random proposal $z$, obtaining $\abs{\norm{z-x}_{z}^{2}-\norm{z-x}_{x}^{2}}=O(1)\,\norm{z-x}_{x}^{2}\lesssim r^{2}\norm h^{2}/d\approx r^{2}$.
Since ASC needs the $O(r^{2}/d)$-scale, this closeness argument is
not sufficient for their claim. The log-barrier argument of \cite{SV16mixing}
achieves this smaller scale through the Gaussian-polynomial concentration.
For Lewis weights, however, the corresponding expansion contains $w_{z}$
and Taylor remainders along the random segment $[x,x+\eta h]$, so
the similar polynomial argument does not apply since \textbf{the estimated
point $w_{z}$ has dependence on $h$}. In fact, this is exactly the
main challenge in showing ASC.}. This is exactly the ASC challenge underlying the Metropolis filter
that prevents prior works from improving $d^{2.5}$-mixing of \cite{CDW18MCMC}.
Despite several plausible approaches, a complete proof of $d^{2}$-mixing
for a Lewis-weight-based $\dw$ has remained elusive.

\subsection{Results\label{subsec:results}}

We make the first progress in nearly a decade toward the conjectured
$d^{2}$-mixing bound for the weighted $\dw$ of Chen, Dwivedi, Wainwright,
and Yu. We show that the $\dw$ with a smaller $d^{1/4}$ scaling
of the Lee--Sidford metric has the warm-start iteration complexity
$\Otilde(d^{2.25})$.
\begin{thm}
[Sampling from warm start] \label{thm:target-mixing} Let $\K=\{x\in\Rd:Ax\geq b\}$
be a full-dimensional bounded polytope, where $A\in\R^{m\times d}$
has full column rank and no all-zero rows, and $b\in\R^{m}$. Let
$g_{0}$ be the unscaled Lee--Sidford metric \eqref{eq:LS-metric},
$L=Cd^{1/4}\polylog(md)$ for a universal constant $C>0$, and $g:=Lg_{0}$.
Consider the $\dw$ with local metric $g$, a sufficiently small $\Theta(1)$
radius $r$, an initial distribution $\pi_{0}$, and a target exponential
distribution $\pi$ over $\K$. Let $\pi_{n}$ be the law of the $n$-th
iterate of the $\half$-lazy\footnote{It stays at the current point (i.e., $x_{i+1}\gets x_{i}$) with probability
$1/2$.} $\dw$ for $n\in\mbb N$. Then, for any $\veps>0$, we obtain $\chi^{2}(\pi_{n}\mmid\pi)\leq\varepsilon$
after
\[
N=\widetilde{O}\bpar{d^{2.25}\polylog m\log\frac{\chi^{2}(\pi_{0}\mmid\pi)}{\veps}}\quad\text{iterations}.
\]
\end{thm}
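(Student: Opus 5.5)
The plan is to follow the standard conductance framework for Dikin walks, as in \cite{KN12random,CDW18MCMC,KV24ipm}. The bound splits into two ingredients: an isoperimetric inequality in the local metric $g$, and a one-step overlap bound. Combining them through the usual $s$-conductance / spectral-gap argument for lazy reversible chains gives $\chi^{2}$-contraction at rate roughly $1/(\text{isoperimetry constant}^{2}\cdot d/r^{2})$ per step.

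\textbf{Isoperimetry.} For exponential (log-concave) targets on $\K$, the relevant constant is governed by the symmetry parameter of the metric. One needs $\msf D_{g}(x,1)\cap\K$ to contain a constant fraction of a symmetrized body, and $\msf D_{g}(x,1)\subseteq\K$. The unscaled Lee--Sidford metric $g_{0}$ has symmetry parameter $\bar\nu=\Otilde(d)$, with polylog factors in $m$. Scaling by $L$ shrinks Dikin ellipsoids by $L^{-1/2}$, and the cross-ratio isoperimetry then yields a Cheeger-type constant of order $\sqrt{\bar\nu}=\Otilde(\sqrt d)$. The scaling costs no more than the polylogarithmic factors already absorbed. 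The mixing time is then of order
\[
\frac{d}{r^{2}}\cdot\frac{\bar\nu\,L}{L}\cdot\Otilde(1)\cdot L ,
\]
which I expect to come out as $\Otilde(d\cdot d\cdot L)=\Otilde(d^{2.25})$ once the effect of $L$ on isoperimetry is tracked. Concretely, the isoperimetric distance in the $g$-metric is $\sqrt L$ times that in $g_{0}$, so conductance picks up a factor $L$.

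\textbf{One-step overlap.} The goal is $\norm{P_{x}-P_{y}}_{\tv}\le 1-c$ whenever $\norm{x-y}_{g(x)}\le r/\sqrt d$. This splits into two pieces:
\begin{itemize}
\item Closeness of the proposals $\msf N_{g}^{r}(x)$ and $\msf N_{g}^{r}(y)$. By Pinsker, this needs $\norm{g(x)^{-1/2}g(y)g(x)^{-1/2}-I}_{F}=O(1)$ for such $y$, which follows from strong self-concordance of the Lee--Sidford metric; here the scaling by $L$ only helps.
\item A constant acceptance probability in \eqref{eq:acceptance-probability}. For this step I take as given the main technical result stated in the paper, namely the improved average self-concordance of $g=Lg_{0}$. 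It says that for $z=x+\frac{r}{\sqrt d}g(x)^{-1/2}h$ with $h\sim\msf N(0,I)$, with probability $1-\delta$ one has $z\in\K$, $\abs{\log\det g(z)-\log\det g(x)}=O(1)$, and
\[
\frac{d}{r^{2}}\Babs{\norm{z-x}_{g(z)}^{2}-\norm{z-x}_{g(x)}^{2}}=O(1).
\]
\end{itemize}
The linear exponential factor $\pi(z)/\pi(x)$ contributes $O(1)$ after normalizing the exponential so that its gradient has $g$-norm $O(\sqrt d)$; this is the role of the $L$-scaling together with the standard reduction. Together these give acceptance probability at least $c$.

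\textbf{Main obstacle.} The hard step is the ASC bound: controlling the quadratic-form difference at scale $r^{2}/d$ rather than $r^{2}$. The difficulty is that $g(z)$ depends on $h$. My plan is a Taylor expansion of $g_{0}$ along $[x,z]$ up to order three or four, written in terms of derivatives of the Lewis weights. Each Gaussian-polynomial term $\mathrm{D}^{k}g_{0}(x)[h^{\otimes k},h,h]$ would be bounded via Wiener chaos hypercontractivity, giving fluctuations of order $d^{(k+2)/2}$-norms times $(r/\sqrt d)^{k+2}$. The remainder would be handled through self-concordance bounds uniform on the Dikin ball. Balancing these terms against the scaling $L$ is what I expect forces $L=d^{1/4}$: the third-order term is controlled only up to $d^{-1/4}$ slack, which costs a factor $d^{1/4}$ over the conjectured $d^{2}$. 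Finally, the $\chi^{2}$ contraction from the spectral gap gives the $\log(\chi^{2}(\pi_{0}\mmid\pi)/\veps)$ dependence.
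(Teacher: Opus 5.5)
Your outer reduction matches the paper's proof of Theorem~\ref{thm:target-mixing}. The paper combines Proposition~\ref{prop:baseline-asc}, the scaling rule (Proposition~\ref{lem:scaling-rule}), the known SSC/LTSC and $\widetilde O(d)$-symmetry of $g_0$, and the $O(d\bar\nu)$ framework of \cite{KV24ipm}. Two corrections to that part. First, scaling is not free up to polylogs in the isoperimetry: it multiplies $\bar\nu$ by $L$, and that is the entire $d^{1/4}$ loss. Second, $L$ plays no role for the factor $\pi(z)/\pi(x)$; the framework handles arbitrary exponential targets without renormalizing them. The genuine gap is that the whole content of the theorem is average self-concordance of $g_0$ at radius $\widetilde\Theta(d^{-1/8})$. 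You take this as given, and the route you sketch for it would not produce $L=d^{1/4}$.

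Normalize $g_0(x)=I$, set $\eta=r/\sqrt d$ and $F(t)=h^\T g_0(x+th)h$. ASC then needs $|F(\eta)-F(0)|\le 2\veps$.

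(i) \emph{Remainder.} Self-concordance controls only $\mathrm{D}g$, not the $K$-th derivative needed for an order-$K$ remainder. Suppose you grant worst-case bounds $|F^{(K)}(t)|\lesssim\|h\|^{K+2}\approx d^{(K+2)/2}$ uniformly on a Dikin ball. The remainder is then $\eta^K d^{(K+2)/2}=r^Kd$, which forces $r\lesssim d^{-1/K}$ and mixing $d^{2+2/K}$. For $K=4$ this is the old $d^{2.5}$. The missing $d^{1/2}$ comes from exploiting the specific Gaussian direction. Since $\rho_{0,i}\sim\msf N(0,1)$, on a good event $\|s_t\|_\infty,\|\rho_t\|_\infty\lesssim 1$ for all $t\le\eta$. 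A new third-order Lewis-weight calculus (moving frame with $U_t^\T U_t'=0$, row map $u\mapsto u\otimes u/\|u\|$) gives $\|N_t^{(j)}\|\lesssim d^{j/2}$ for $j\le 3$. The terminal term $|H_4(t)|\le\|q_t\|\,\|N_t'''\|\,\|v_t\|\lesssim d^{5/2}$ then contributes $r^4d^{1/2}$. This fourth-order pathwise term, not a third-order one, is what sets $r=d^{-1/8}$.

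(ii) \emph{Base point.} The size you state, $d^{(k+2)/2}(r/\sqrt d)^{k+2}=r^{k+2}$, is the trivial one. It exceeds the required $\veps r^2/d$ by a factor $r^kd$. The scheme needs $L^2$ bounds of order $d^{k/2}$: $\|E_1(0)+\beta H_1(0)\|_{L^2}\lesssim d^{1/2}$, $\|H_2(0)\|_{L^2}\lesssim d$, and $\|H_3(0)\|_{L^2}\lesssim d^{3/2}$. Hypercontractivity only converts such $L^2$ bounds into tail bounds. The $L^2$ bounds themselves come from chaos decompositions combined with Lewis-weight cancellations: $Nw^{1/2}=0$, $N_aw^{1/2}=\tfrac12 N^2Be_a$, and $0\preceq W^{-1/2}P^{\circ 2}W^{-1/2}\preceq I$.

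(iii) \emph{Expansion.} A full Taylor expansion of $g_0$ is not what makes this tractable. Only the chain $H_k=q_t^\T N_t^{(k-1)}v_t$ is expanded, while $E_1'$ and the $G_k$ are bounded pathwise on the good event.
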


This result assumes a warm start, and a natural next question is how
efficiently such a warm start can be generated. One simple approach
is to initialize from an easy distribution, such as the uniform distribution
over a small ball contained in $\K$. However, this typically introduces
an extra factor of $d$ in the iteration complexity. Thus, obtaining
a sharp cold-start guarantee requires a more sophisticated approach. 

Such a framework was developed in \cite{KV24ipm}. Their approach
constructs a sequence of annealing distributions and repeatedly uses
a Dikin-walk sampler to move from one distribution to the next. In
the polytope setting, the annealing path has the form proportional
to $e^{-\phi/t}\cdot\ind_{\K}$, where $\phi$ is an appropriate self-concordant
barrier and the temperature parameter $t$ is gradually increased.
The schedule is chosen so that consecutive distributions are sufficiently
close, allowing the previous distribution to serve as an $O(1)$-warm
start for the next phase. Using this framework with the Lewis-weights
metric, they recovered the known $\Otilde(d^{2.5})$ warm-start bound
and obtained an $\Otilde(d^{23/8})$ cold-start iteration bound, giving
the first sub-cubic iteration complexity for this problem.

By plugging our improved warm-start guarantee into the annealing framework
of~\cite{KV24ipm}, we obtain an improved cold-start bound of $d^{41/16}\approx d^{2.56}$.
\begin{cor}
[Sampling from cold start] \label{cor:ws-gen}Given $\varepsilon>0$,
there exists an algorithm that generates a sample $X$ using $\Otilde(d^{41/16}\polylog m\log\frac{1}{\veps})$
iterations of the $\dw$ such that $\dtv(\law X,\pi)\leq\veps$.
\end{cor}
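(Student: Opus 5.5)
We plug Theorem~\ref{thm:target-mixing} into the annealing framework of \cite{KV24ipm}. Their scheme runs through a sequence of distributions $\mu_{t}\propto e^{-\phi/t}\ind_{\K}$ (and, for exponential targets, intermediate tilted versions $\propto e^{-\phi/t-\inner{c,x}}$). Here $\phi$ is a self-concordant barrier with parameter $\nu=\Otilde(d)$, e.g.\ the Lee--Sidford barrier. The temperature $t$ increases along a schedule, and in each phase the Dikin walk is run to move from one distribution to the next. The framework is modular: it only needs (i) a warm-start mixing bound for the Dikin walk on each intermediate target, as a function of the phase parameters, and (ii) a schedule under which consecutive distributions are $O(1)$-warm relative to each other, e.g.\ $\chi^{2}$-close. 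We take (ii) verbatim from \cite{KV24ipm}, together with their initialization at a point near the analytic center with a tiny initial temperature. We also reuse their final phase, which converts the last annealed distribution into $\pi$ with an $O(1)$-warm start.

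The first step is to check that Theorem~\ref{thm:target-mixing} applies to every intermediate target, not just to exponential distributions. The intermediate densities are log-concave and combine a linear tilt with a barrier term $e^{-\phi/t}$. I would follow \cite{KV24ipm} here. For large $t$ the term $\phi/t$ has self-concordance parameter rescaled by $1/t$, so its Hessian is dominated by $g=Lg_{0}$ up to the factor $t^{-1}$. Its contribution to the Metropolis ratio is then controlled by the same local-norm bounds already used for the metric, provided the step size accounts for it. Concretely, the warm-start complexity in a phase at temperature $t$ is the Theorem~\ref{thm:target-mixing} bound $\Otilde(d^{2.25})$ if $t\gtrsim$ a threshold. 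When the barrier term is strong, we instead add $t^{-1}\hess\phi$ to the metric, which changes the mixing bound to roughly $\Otilde(d^{2.25}\cdot\min\{1,\ldots\})$ in the same way the $d^{2.5}$ bound was changed in \cite{KV24ipm}. The main obstacle is here. We must confirm that the improved average self-concordance of the Lee--Sidford metric, the core of Theorem~\ref{thm:target-mixing}, survives when $t^{-1}\hess\phi$ is added to the metric, or when $\phi$ appears in the potential. I would try first to keep the metric $g$ unchanged and treat $\phi/t$ only as part of the potential. Its first- and second-order Taylor terms along a Gaussian proposal are bounded by $\sqrt{\nu/t}\cdot r/\sqrt{d}$-type quantities in the $g_{0}$-norm, and the scaling $L\approx d^{1/4}$ makes these small once $t\gtrsim d^{1/2}$.

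The second step is the bookkeeping. \cite{KV24ipm} sum per-phase costs over the schedule, and their phase counts and lengths depend on $t$ through the rate at which the temperature may increase (roughly by factors $1+1/\sqrt{\nu}$ or $1+t/\nu$ in different regimes). In their accounting the warm-start exponent $2.5$ yields $23/8$. Redoing the same optimization over the switching temperature with exponent $2.25$ should give $41/16$. The balancing equates the early regime cost, governed by the $d$-dependent number of phases times a cheaper per-phase cost, with the late regime cost of $\Otilde(d^{2.25})$ times $\Otilde(\sqrt{d})$-ish phases, each reduced by the schedule's choice. Finally, we convert $\chi^{2}$ guarantees to TV via $\dtv\le\frac12\sqrt{\chi^{2}}$. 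The per-phase errors accumulate additively, so running each phase to accuracy $\veps/\#\text{phases}$ costs only the stated $\log\frac1\veps$ and polylog factors.
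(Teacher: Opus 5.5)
Your proposal has a genuine gap. It tries to feed the warm-start \emph{mixing time} of Theorem~\ref{thm:target-mixing} into the annealing scheme. That theorem, like Lemma~\ref{prop:gcdw-framework}, covers only exponential targets, and the intermediate targets $\propto e^{-\phi/t}\ind_{\K}$ are not of that form. You notice this and call it the main obstacle, but you leave it open. Your heuristic for the potential term ($\sqrt{\nu/t}\cdot r/\sqrt{d}$ in the $g_0$-norm, ``small once $t\gtrsim d^{1/2}$''), and the alternative of adding $t^{-1}\hess\phi$ to the metric, would both need new ASC-type estimates that neither you nor the paper proves.

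The missing idea is that nothing about the intermediate phases has to be redone. As the paper uses it, Theorem~2 of \cite{KV24ipm} takes as hypotheses only properties of the \emph{metric}: SSC, LTSC, ASC, $\onu$-symmetry, and a barrier parameter $\nu$. It handles every phase internally, with total cost $\Otilde\bpar{d\max(d,\onu,\nu^{1/2}\onu^{3/4})\log\frac{1}{\veps}}$. ASC in Definition~\ref{def:asc} involves only $g$, not the target. So the paper just verifies these hypotheses for $g=\Otilde(d^{1/4})\,g_0$:
\begin{itemize}
\item ASC comes from Proposition~\ref{prop:baseline-asc} combined with Proposition~\ref{lem:scaling-rule};
\item SSC and LTSC are inherited from $g_0$ under scaling;
\item $\nu$ and $\onu$ are both multiplied by $L$.
\end{itemize}

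Your bookkeeping step also asserts the exponent $41/16$ rather than deriving it. The cold-start bound depends on $\nu$ and $\onu$ separately, not only on the warm-start exponent. The relevant values are those of the \emph{scaled} metric, $\nu=\onu=\Otilde(d^{5/4})$. Then $d\,\nu^{1/2}\onu^{3/4}=d\cdot d^{5/8}\cdot d^{15/16}=d^{41/16}$, which dominates $d\max(d,\onu)=\Otilde(d^{9/4})$.

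You instead fix $\nu=\Otilde(d)$ for the unscaled barrier at the outset, and the same formula with that input returns a different exponent. Your verbal accounting (a late regime of $\Otilde(d^{2.25})$ per phase over $\Otilde(\sqrt d)$ phases), taken literally, gives $d^{2.75}>d^{41/16}$. So neither your stated inputs nor your described optimization lead to the claimed number.

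The fix is to drop the reconstruction of the annealing analysis. Apply \cite[Theorem~2]{KV24ipm} directly, with the metric-level hypotheses and the parameters $\nu=\onu=\Otilde(d^{5/4})$ above.
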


Finally, we note that the above bounds are stated in terms of Dikin-walk
iterations. The corresponding arithmetic complexities can be obtained
by multiplying the iteration bounds by the per-step cost of computing
Lewis weights, which contributes a factor of approximately $md^{\omega-1}$
per iteration, up to polylogarithmic factors, where $\omega<2.37$
is the matrix multiplication exponent.  

\subsection{Technical overview\label{subsec:technical}}

We now summarize the technical ingredients of the main results. In
this overview, we ignore any polylogarithmic factors in $m$. In the
optimization IPM, the barrier parameter $\nu$ of a self-concordant
barrier is sufficient to bound the iteration complexity of the IPM
as $O(\nu^{1/2})$. However, in the sampling counterpart, self-concordance
and barrier parameters alone are not enough to provide a tight mixing
guarantee of the $\dw$.

\paragraph{Mixing analysis through self-concordance.}

The framework of \cite[Theorem 1]{KV24ipm} shows that if the local
metric $g:\inter\K\to\pd$ is strong self-concordant (SSC), lower
trace self-concordant (LTSC), average self-concordant (ASC), and $\bar{\nu}$-symmetric
(see \S\ref{subsec:Dikin-walk-prelim}), then the $\dw$ with $\Theta(1)$-radius,
an initial distribution $\pi_{0}$, and a target exponential distribution
$\pi$ mixes in $O(d\bar{\nu}\log\frac{\chi^{2}(\pi_{0}\mmid\pi)}{\veps})$
iterations to be $\veps$-close to $\pi$ in $\chi^{2}$ (Lemma~\ref{prop:gcdw-framework}).

Recall the Lee--Sidford (LS) metric. For $x\in\inter\K$, let $S_{x}:=\Diag(Ax-b)\in\R^{m\times m}$
and $A_{x}:=S_{x}^{-1}A$. For $p\asymp\polylog m$, let $W_{x}=\Diag w_{x}$
be the $\ell_{p}$-Lewis weights of $A_{x}$, defined as $w_{x}=\sigma(W_{x}^{1/2-1/p}A_{x})$,
where $\sigma(B)$ denotes the leverage scores of $B$. The unscaled
LS metric is defined as $g_{0}(x):=A_{x}^{\T}W_{x}^{1-2/p}A_{x}$.
It was already shown in \cite{LLV20strong} that $g_{0}$ satisfies
$\onu=\Otilde(d)$, SSC, and LTSC. Hence, the only missing condition
for applying the mixing framework is ASC.

\paragraph{ASC of the Lee--Sidford metric.}

Recall that given a local metric $g$, the radius-$r$ Dikin proposal
at $x$ is $z\sim\msf N_{g}^{r}(x):=\msf N(x,\frac{r^{2}}{d}\,g(x)^{-1})$.
The Metropolis filter then compares this forward proposal density
with the reverse proposal density from $z$ back to $x$, and a crucial
part is to bound the acceptance probability \eqref{eq:acceptance-probability}
from below, by showing the closeness of the Gaussian proposal at $x$
and $z$. This desired property is referred to as \emph{average self-concordance}
in \cite{KV24ipm}. Informally, ASC says that a random Dikin proposal
does not significantly change its squared local length when the metric
is evaluated at the current point $x$ and the proposal $z$. Precisely,
a metric $g$ is ASC if for every $\veps>0$ there is a dimension-free
radius $r_{\veps}>0$ such that for every $x\in\inter\K$ and every
$r\le r_{\veps}$, 
\[
\P_{z\sim\msf N_{g}^{r}(x)}\bpar{\abs{\norm{z-x}_{g(z)}^{2}-\norm{z-x}_{g(x)}^{2}}\le\frac{2\veps r^{2}}{d}}\ge1-\veps\,.
\]

If we can show ASC of the LS metric, then we would immediately obtain
the $d^{2}$-mixing as desired. However, existing analyses \cite{CDW18MCMC,KV24ipm,JC24regularized}
prove ASC of the LS metric \emph{only after scaling it by roughly
$d^{1/2}$}. Since scaling the metric by a factor $L$ also scales
the symmetry parameter by $L$, this changes $\bar{\nu}=\Otilde(d)$
into $\bar{\nu}=\Otilde(d^{3/2})$, and then the general framework
gives $d\onu=d^{2.5}$ mixing bound. Namely, the extra $d^{1/2}$
in the previous mixing bound comes from the amount of scaling needed
to ensure constant-radius ASC.

In this paper, we improve this scaling from $d^{1/2}$ to $d^{1/4}$,
showing that the unscaled LS metric $g_{0}$ satisfies the ASC condition
at radius $d^{-1/8}$. If $g=Lg_{0}$, then a radius-$r$ proposal
for $g$ is the same (in law) as a radius-$r/L^{1/2}$ proposal for
$g_{0}$. Hence, taking $L=\Otilde(d^{1/4})$ converts the radius-$d^{-1/8}$
ASC for $g_{0}$ into constant-radius ASC for $g$. This scaling increases
the symmetry parameter only to $\Otilde(d^{5/4})$, and the mixing
framework yields $\Otilde(d\cdot d^{5/4})=\Otilde(d^{9/4})$ iterations.
Thus, the proof of Theorem~\ref{thm:target-mixing} is reduced to
proving this sharper ASC estimate for the unscaled LS metric.

\paragraph{Prior approaches to ASC.}

We now describe the ASC proof. By affine invariance, fix a base point
$x=0$ and normalize the unscaled LS metric so that $g_{0}(x)=I_{d}$.
A proposal from the unscaled metric can be written as $z=x+\eta h$,
where $h\sim\msf N(0,I_{d})$ and $\eta=r/\sqrt{d}$. Define the path
\[
F(t):=h^{\T}g_{0}(x+th)h\,,\qquad0\le t\le\eta\,.
\]
Since $F(0)=\norm{z-x}_{g_{0}(x)}^{2}/\eta^{2}$ and $F(\eta)=\norm{z-x}_{g_{0}(z)}^{2}/\eta^{2}$,
ASC follows from a high-probability bound on $|F(\eta)-F(0)|\leq2\veps$.

A straightforward approach would be a direct Taylor expansion of $F(\eta)$
around $\eta=0$. For the $\dw$ with Lewis weights, one may use the
first-order expansion, $F(\eta)-F(0)=\eta F'(t_{*})$ for some $t_{*}\in(0,\eta)$.
However, technical computations lead to $\sup_{t\leq\eta}|F'(t)|\lesssim d$,
and this requires a $d$-scaling of the metric (which yields $\Otilde(d^{3})$-mixing
of the $\dw$).

Chen et al.\ \cite{CDW18MCMC} expanded $F$ up to the second-order
terms, $F(\eta)-F(0)=\eta F'(0)+\frac{\eta^{2}}{2}\,F''(t_{*})$.
We note that at the base point $x=0$, the first-order term $F'(0)$
is now a \emph{polynomial in Gaussian variables} $h$. As already
used in \cite{SV16mixing,CDW18MCMC}, the Gaussian polynomial $P(h)$
concentrates around its $L^{2}$-norm $(\E_{\gamma}[P(h)^{2}])^{1/2}$
(see Lemma~\ref{lem:conc-gaussian-poly}), and this allows one to
save an extra $d^{1/2}$-factor in $F'(0)$. Indeed, after substantial
technical computations, \cite{CDW18MCMC} managed to show that with
high probability, $|F'(0)|\lesssim d^{1/2}$ and $\sup_{t\le\eta}|F''(t)|\lesssim d^{3/2}$.
In this case, the second-order contribution is $\eta^{2}d^{3/2}=r^{2}d^{1/2}$.
To make this $O(\veps)$, the radius $r$ should be reduced to $d^{-1/4}$,
which introduces the $d^{1/2}$-scaling of their metric for a constant-radius
ASC.

One could guess from this pattern that $F^{(k)}(0)\lesssim d^{k/2}$
and $\sup_{t\le\eta}\abs{F^{(k)}(t)}\lesssim d^{(k+1)/2}$ with high
probability. To implement this idea, one could consider expanding
further and attempt to obtain better bounds. However, a brute-force
higher-order Taylor expansion quickly becomes unmanageable: $F''(0)$
already contains several complicated Gaussian polynomials involving
derivatives of Lewis weights, while higher derivatives introduce many
additional terms and require calculus beyond the known first-order
Lewis-weight calculus in \cite{LS19solving}. This suggests that we
need a more organized approach and principled tools to handle this
technical problem.

\paragraph{Technical contributions in the improved ASC proof.}

The improvement is not obtained by simply differentiating the LS metric
a few more times. The main difficulty is to make the higher-order
expansion give the small number of bottleneck terms, while keeping
both the pathwise Lewis-weight calculus and the base-point Gaussian-polynomial
estimates tractable. Our proof introduces three organizing ideas.
First, instead of Taylor expanding all derivatives of $F$, we isolate
a recursive bottleneck chain $H_{k}(t)$ and expand only this chain;
all other differentiated terms are treated as controlled pathwise
terms. Second, we formulate the higher-order LS calculus through a
moving orthonormal frame for the column space of a half of the LS
metric. This removes irrelevant rotations of the column space and
converts derivatives of the Lewis-weight matrix into controlled frame
and row-map estimates. Third, for the base-point terms, we replace
direct high-moment Gaussian calculations by a Hermite-polynomial decomposition,
implemented through multiple stochastic integrals. This gives a canonical
orthogonal decomposition of the high-order Gaussian polynomials and
reduces their $L^{2}$ estimates to structured tensor-norm bounds.

\subparagraph{(1) Isolation of bottleneck terms.}

The first idea is a selective expansion that avoids the combinatorial
explosion of higher-order Taylor terms of $F$. Along the path $x_{t}=x+th$,
write $A_{t}:=A_{x_{t}}$, $W_{t}:=W_{x_{t}}$, and let $N_{t}$ denote
the Lewis-weight derivative matrix from Lemma~\ref{lem:DWh}. Set
$\alpha:=1-\frac{2}{p}$ and $\beta:=\alpha/2$, and define $(v^{\circ2})_{i}:=v_{i}^{2}$
for a vector $v$ (and similarly for a matrix) and
\[
s_{t}:=A_{t}h\,,\qquad u_{t}:=W_{t}^{\beta}s_{t}\,,\qquad v_{t}:=W_{t}^{1/2}s_{t}\,,\qquad q_{t}:=W_{t}^{-1/2}u_{t}^{\circ2}\,.
\]
With this notation, $F(t)=\norm{u_{t}}^{2}$. The Lewis-weight derivative
formula gives a decomposition of the first derivative of the form
$F'(t)=-2\beta H_{1}(t)-2E_{1}(t)$, where $H_{1}(t):=q_{t}^{\T}N_{t}v_{t}$
and $E_{1}(t):=\inner{s_{t},W_{t}^{\alpha}s_{t}^{\circ2}}$.

A simple but crucial observation is that not every term produced by
differentiating $F'$ is a bottleneck. To isolate the bottleneck terms,
define $H_{k}(t):=q_{t}^{\T}N_{t}^{(k-1)}v_{t}$ for $k\ge1$, where
$N_{t}^{(j)}$ denotes the $j$-th derivative of $N_{t}$ along the
path. Differentiating $H_{k}$ gives $H_{k}'(t)=G_{k+1}(t)+H_{k+1}(t)$,
where $G_{k+1}(t)$ collects the terms in which the derivative lands
on $q_{t}$ or $v_{t}$, while $H_{k+1}(t)$ is the single term in
which the derivative lands on $N_{t}^{(k-1)}$. The point of this
decomposition is that the $G$-terms are controllable from the same
pathwise estimates used at the previous level, whereas the new $H$-term
contains the next derivative of $N_{t}$ and is the only bottleneck
term. Thus, we only expand $H_{1},H_{2},H_{3},H_{4}$, rather than
all terms appearing in the derivatives of $F$. The resulting bound
is as follows:
\begin{equation}
\begin{aligned}|F(\eta)-F(0)| & \lesssim\eta\,|E_{1}(0)+\beta H_{1}(0)|+\eta^{2}\,\bpar{\sup_{t\le\eta}|E_{1}'(t)|+\sup_{t\le\eta}|G_{2}(t)|+|H_{2}(0)|}\\
 & \quad+\eta^{3}\,\bpar{\sup_{t\le\eta}|G_{3}(t)|+|H_{3}(0)|}+\eta^{4}\sup_{t\le\eta}\bpar{|G_{4}(t)|+|H_{4}(t)|}\,.
\end{aligned}
\label{eq:taylor-intro}
\end{equation}
The proof then splits into two tasks: (1) pathwise bounds for the
good terms $E_{1}'$, $G_{2}$, $G_{3}$, $G_{4}$, and $H_{4}$,
and (2) Gaussian-polynomial estimates for the base-point terms $E_{1}(0)$
and $H_{i}(0)$ for $i\in[3]$.

\subparagraph{(2) Pathwise part.}

Recall preliminary calculus for Lewis weights in \cite{LS19solving}:
for $c_{p}:=1-2/p$, $\Lambda_{x}:=W_{x}-P_{x}^{\circ2}$, $\bar{\Lambda}_{x}:=W_{x}^{-1/2}\Lambda_{x}W_{x}^{-1/2}$,
they obtained $W_{x,h}':=\Dd W_{x}[h]=-\Diag(W_{x}^{1/2}N_{x}W_{x}^{1/2}s_{x,h})$
(see Lemma~\ref{lem:DWh} for details). For the pathwise part, we
extend this computation to develop a higher-order calculus for Lewis
weights in Lemma~\ref{lem:good-event}. The main difficulty in computing
higher-order derivatives of $W_{x}$ arises from $N_{x}$. Indeed,
its derivative was computed together with several additional matrices
in \cite[Lemma 36 and 37]{LS19solving}, so a direct computation quickly
becomes unwieldy.

The second idea is to carry out this higher-order Lewis-weight calculus
in a moving orthonormal frame. We work with $\widehat{B}_{t}:=W_{t}^{\beta}A_{t}$
and choose a smooth orthonormal frame $U_{t}$ for $\cols\widehat{B}_{t}$
satisfying the condition $U_{t}^{\T}U_{t}'=0$ (see \S\ref{app:moving-frame-calculus}).
This frame removes the arbitrary rotation of the basis inside the
column space, streamlining computations for downstream calculus.

Letting $u_{t,i}$ and $\phi_{t,i}:=(u_{t,i}\otimes u_{t,i})/w_{t,i}^{1/2}$
be the $i$-th rows of $U_{t}$ and $\Phi_{t}$, respectively, we
rewrite $\bar{\Lambda}_{t}=I-\Phi_{t}\Phi_{t}^{\T}$. Then, bounds
on $N_{t}'$, $N_{t}''$, and $N_{t}'''$ reduce to bounds on $\Phi_{t}'$,
$\Phi_{t}''$, and $\Phi_{t}'''$, which in turn reduce to bound on
$U_{t}',U_{t}'',U_{t}'''$. Using the identities for $U_{t}',U_{t}'',U_{t}'''$
(Lemma~\ref{lem:path-normal-frame}) together with row-wise estimates
for the map $u\mapsto u\otimes u/\norm u$ in \S\ref{app:good-event-proof},
we prove the good-event bounds in Lemma~\ref{lem:good-event}. This
is the step that turns the higher-order Lewis-weight calculus into
a systematic set of matrix estimates. On this event, uniformly for
all $t\in[0,\eta]$, we control the relevant slack directions, Lewis-weight
scores, and derivatives of $N_{t}$.

Consequently, Lemma~\ref{lem:good-term-bounds} gives $\abs{E_{1}'(t)},\abs{G_{2}(t)}\lesssim d$,
$\abs{G_{3}(t)}\lesssim d^{3/2}$, $\abs{G_{4}(t)}\lesssim d^{2}$,
and $\abs{H_{4}(t)}\lesssim d^{5/2}$. The recursive decomposition
is important here: once $G_{k+1}$ is separated from $H_{k+1}$ in
$H_{k}'$, bounds on $G_{k+1}$ terms immediately follow from the
same pathwise bounds on $q_{t}$, $q_{t}'$, $v_{t}$, $v_{t}'$,
and $N_{t}^{(k-1)}$. Thus, these terms are controlled inductively
rather than by expanding all derivatives of $F$ directly. The remaining
term $H_{4}$ is bounded by Cauchy--Schwarz as $\abs{H_{4}(t)}=\abs{q_{t}^{\T}N_{t}'''v_{t}}\le\norm{q_{t}}\,\norm{N_{t}'''}\,\norm{v_{t}}\lesssim d^{5/2}$.
This is the residual bottleneck of our argument. In \eqref{eq:taylor-intro},
it contributes $\eta^{4}d^{5/2}=r^{4}d^{1/2}$, which is exactly what
limits our ASC radius to $r=\Otilde(d^{-1/8})$.

\subparagraph{(3) Base-point part.}

For the base-point terms, we exploit the fact that they are Gaussian
polynomials in direction $h$, in order to save an extra $d^{1/2}$
in their high-probability bound. Once their $L^{2}$-norms are bounded,
the standard concentration inequality for Gaussian polynomials (Lemma~\ref{lem:conc-gaussian-poly})
converts those $L^{2}$-estimates into high-probability bounds. 

The first base-point term is the cubic polynomial $E_{1}(0)+\beta H_{1}(0)$.
In previous analyses \cite{CDW18MCMC,KV24ipm}, controlling this cubic
term was one of the most involved calculations. We streamline this
computation by using the Gaussian Poincar\'e inequality, recovering
the previous bound of $\norm{E_{1}(0)+\beta H_{1}(0)}_{L^{2}}\lesssim d^{1/2}$.
Hence, the first term in \eqref{eq:taylor-intro} contributes $\eta d^{1/2}=O(r)$.

The most technical part of this paper is to estimate the $L^{2}$-norms
of the quartic and quintic terms $H_{2}(0)$ and $H_{3}(0)$. These
are Gaussian polynomials of degrees four and five, so a direct $L^{2}$-norm
computation would require expanding eighth and tenth Gaussian moments.
Such an expansion produces too many tensor terms. It is not clear
how to organize unwieldy computations by hand, and this is where we
need a more principled tool to keep these computations under control.

Our approach is to decompose them into orthogonal Hermite polynomials
(equivalently, \emph{Wiener chaoses} \cite{Nualart06Malliavin,Nualart19malliavin})
and then use orthogonality in estimating the $L^{2}$-norm. This replaces
a non-canonical high-moment expansion by an orthogonal decomposition,
so that downstream tasks are substantially streamlined. Then, the
MSI (\emph{multiple stochastic integral}) formalism provides the algebra
needed for this decomposition.

For a symmetric $L^{2}$-function $f:\R_{+}^{n}\to\R$ and Brownian
motion $(B_{t})_{t\geq0}$, the MSI of $f$ (see Definition~\ref{def:MSI})
is defined as
\[
I_{n}(f):=n!\,\int_{0}^{\infty}\int_{0}^{t_{n}}\cdots\int_{0}^{t_{2}}f(t_{1},\ldots,t_{n})\,\D B_{t_{1}}\cdots\D B_{t_{n}}\,.
\]
Note that this is the usual It\^o integral when $n=1$. One can represent
base Gaussian polynomials as MSIs, and then leverage known results
on MSI (e.g., Theorem~\ref{thm:wiener-decom}, Lemma~\ref{lem:msi-product},
and \eqref{eq:orthogonality}) to estimate the $L^{2}$-norm in a
streamlined and organized way. Also, a known isometry between symmetric
tensors and functions naturally induces the MSI of symmetric tensors
(see \eqref{eq:isometry}). Through the MSI isometry (Lemma~\ref{lem:tensor-isometry}),
$L^{2}$-norms of those MSIs of tensors can be related to the Frobenius
norm of those tensors. In summary, \emph{the $L^{2}$ estimates of
Gaussian polynomials are reduced to bounding the norm of induced tensors}.

We illustrate how this toolkit is used in bounding $\norm{H_{2}(0)}_{2}$.
By simple algebra, one can rewrite it as $H_{2}(0)=\sum_{i}w_{i}^{1/2}(h^{\T}K_{i}h)(h^{\T}S_{i}h)$
for some matrices $K_{i}$ and $S_{i}$. Using the identities $h^{\T}K_{i}h=1+I_{2}(K_{i})$
and $h^{\T}S_{i}h=\tr S_{i}+I_{2}(S_{i})$, we can represent $H_{2}(0)$
in terms of tensor MSIs as $H_{2}(0)=\sum_{i}w_{i}^{1/2}\{1+I_{2}(K_{i})\}\{\tr S_{i}+I_{2}(S_{i})\}$.
With the product formula for MSIs (Lemma~\ref{lem:msi-product})
and orthogonality \eqref{eq:orthogonality}, this decomposes into
orthogonal terms as $H_{2}(0)=I_{4}(T_{4})+I_{2}(T_{2})+T_{0}$, where
the $T_{i}$ are induced tensors of degree $i$. By the MSI isometry
(Lemma~\ref{lem:tensor-isometry}), the $L^{2}$-norm is reduced
to bounding $\norm{T_{4}}_{\frob}$, $\norm{T_{2}}_{\frob}$, and
$\abs{T_{0}}$. The remaining estimates exploit the fact $P^{\circ2}\preceq W$
and Lewis-weight derivative bounds, yielding $\|H_{2}(0)\|_{L^{2}}\lesssim d$.
The quintic term $H_{3}(0)$ is handled by the same strategy, with
second-derivative estimates for $N_{t}$ replacing the first-derivative
estimates, and gives $\|H_{3}(0)\|_{L^{2}}\lesssim d^{3/2}$. These
two estimates contribute $\eta^{2}d=O(r^{2})$ and $\eta^{3}d^{3/2}=O(r^{3})$
in \eqref{eq:taylor-intro}, respectively.

Combining the pathwise estimates with the base-point Gaussian-polynomial
bounds gives, with high probability, 
\[
\abs{F(\eta)-F(0)}=\Otilde(r+r^{2}+r^{3}+r^{4}d^{1/2}).
\]
Choosing $r\asymp d^{-1/8}$ makes the RHS sufficiently small, proving
ASC for the unscaled LS metric at radius $d^{-1/8}$. Scaling the
metric by $d^{1/4}$ then gives constant-radius ASC with symmetry
parameter $d^{5/4}$, and applying the mixing theorem of~\cite{KV24ipm}
proves Theorem~\ref{thm:target-mixing}.

\paragraph{Warm-start generation.}

We now sketch the proof of the cold-start guarantee in Corollary~\ref{cor:ws-gen}.
No new mixing argument is needed for this step, and we simply use
the better scaling of the LS metric and the annealing framework of
\cite{KV24ipm} as a black box.

Theorem 2 of \cite{KV24ipm} states that if the local metric $g$
is SSC, LTSC, ASC, and $\onu$-symmetric with a barrier parameter
$\nu$, then given $\veps>0$ and an exponential distribution $\pi$,
their annealing scheme can generate a sample $X$ such that $\dtv(\law X,\pi)\leq\veps$
using $\Otilde(d\max(d,\onu,\nu^{1/2}\bar{\nu}^{3/4})\log\frac{1}{\veps})$
iterations of the $\dw$. Since we now know that the scaled LS metric
$d^{1/4}g_{0}$ (previously $d^{1/2}g_{0}$) satisfies all these properties
with $\nu,\onu=\Otilde(d^{5/4})$, plugging this into the same annealing
framework gives $d\,(d^{5/4})^{5/4}=d^{41/16}$.

\subsection{Discussion}

We close this section by suggesting a potential route toward the conjectured
$d^{2}$-mixing. Recall that the bottleneck term in the proof is
$H_{j+1}(t)=q_{t}^{\T}N_{t}^{(j)}v_{t}$. Our argument establishes
pathwise control of this chain up to $j=3$, giving the terminal bound
$\abs{H_{4}(t)}\lesssim d^{5/2}$. Its contribution is $\eta^{4}d^{5/2}=r^{4}d^{1/2}$,
which forces the ASC radius $r=\Otilde(d^{-1/8})$ and hence the $d^{1/4}$
scaling of the metric.

If one could extend the pathwise estimates to (1) $\abs{q_{t}^{\T}N_{t}^{(j)}v_{t}}\lesssim d^{1+j/2}$
for $0\le j\le k$ (namely, $\norm{N_{t}^{(j)}}\lesssim d^{j/2}$
since $\norm{q_{t}},\norm{v_{t}}\lesssim d^{1/2}$ in Lemma~\ref{lem:good-event}),
and (2) the base-point Gaussian estimates to $\norm{q_{x}^{\T}N_{x}^{(j)}v_{x}}_{L^{2}}\lesssim d^{(j+1)/2}$
for $0\le j\le k-1$, then the mixing framework would yield $\Otilde(d^{2+1/(k+1)})$
warm-start mixing. The present paper carries out this program for
$k=3$, giving $d^{2+1/4}=d^{9/4}$. Therefore, further progress toward
the $d^{2}$-conjecture amounts to developing higher-order versions
of the two estimates proved here. We view the present paper as isolating
these two problems as the main remaining obstacles to the conjectured
$d^{2}$-mixing for polytope sampling within our approach. 

\section{Preliminaries\label{sec:sampling-framework}}

We assume throughout that $A\in\R^{m\times d}$ has full column rank
and no all-zero rows. For $x\in\inter\K$, let $a_{i}$ be the $i$-th
row of $A$ and write $S_{x}:=\Diag(a_{i}^{\T}x-b_{i})\in\R^{m\times m}$
and $A_{x}:=S_{x}^{-1}A\in\R^{m\times d}$.

\subsection{A mixing framework\label{subsec:Dikin-walk-prelim}}

\paragraph{Self-concordance for sampling.}

Recall the self-concordance terminology of \cite[Definition~1.1]{KV24ipm}.
The basic regularity condition is self-concordance (SC): for a local
metric $g$, this means 
\[
-2\norm h_{g(x)}\,g(x)\preceq\Dd g(x)[h]\preceq2\norm h_{g(x)}\,g(x)\qquad\text{for }x\in\intk\text{ and }h\in\Rd\,.
\]
The conditions SSC and LTSC are the stronger Frobenius and trace versions
used in the mixing framework; for completeness their precise forms
are recalled in \S\ref{app:self-concordance-definitions}. The Lee--Sidford
metric is SSC and LTSC. The condition proved in this paper is ASC,
so we spell it out explicitly.
\begin{defn}[{Average self-concordance, {\cite[Definition~1.1]{KV24ipm}}}]
\label{def:asc} A matrix function $g:\inter\K\to\pd$ is called
average self-concordant (ASC) if for any $\veps>0$ there exists dimension-free
$r_{\veps}>0$ such that, for every $x\in\inter\K$, if $r\leq r_{\veps}$,
then 
\[
\P_{z\sim\msf N_{g}^{r}(x)}\bpar{\abs{\norm{z-x}_{g(z)}^{2}-\norm{z-x}_{g(x)}^{2}}\leq\frac{2\veps r^{2}}{d}}\geq1-\veps\,.
\]
In this condition, the event is understood to include $z\in\inter\K$,
so that $g(z)$ is defined.
\end{defn}

\paragraph{Mixing analysis.}

We use the $\dw$ framework of \cite{KV24ipm}. In that framework,
the local metric must satisfy SSC, LTSC, ASC, and the following symmetry
condition. The symmetry parameter originates in \cite[Definition~2]{LLV20strong};
we call $g$ $\bar{\nu}$-symmetric if 
\begin{equation}
\msf D_{g}(x,1)\subseteq\K\cap(2x-\K)\subseteq\msf D_{g}(x,\sqrt{\bar{\nu}})\quad\text{for all }x\in\inter\K\label{eq:symmetry}
\end{equation}
It holds that $\onu=O(\nu^{2})$ in general.
\begin{lem}[{Mixing of $\dw$ {\cite[Theorem~1]{KV24ipm}}}]
\label{prop:gcdw-framework} Suppose that a metric $g$ is $\bar{\nu}$-symmetric,
SSC, LTSC, and ASC. Let $\pi$ be an exponential distribution on $\K$,
and initialize the $\dw$ at $\pi_{0}$. Then, for $r=\Theta(1)$,
the walk outputs a sample $X\sim\mu$ satisfying $\chi^{2}(\mu\mmid\pi)\leq\veps$
in 
\[
O\bpar{d\bar{\nu}\log\frac{\chi^{2}(\pi_{0}\mmid\pi)}{\veps}}\quad\text{iterations.}
\]
\end{lem}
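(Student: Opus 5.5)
The plan is the standard conductance route for lazy, reversible Metropolis chains, run in the local geometry induced by $g$. Since the statement is \cite[Theorem~1]{KV24ipm}, I would reproduce its architecture rather than build new machinery. The goal is a lower bound $\Phi\gtrsim (d\bar\nu)^{-1/2}$ on the conductance of the $\half$-lazy $\dw$ with respect to $\pi$. The $\chi^2$ statement then follows from the Lov\'asz--Simonovits / Cheeger-type spectral bound for lazy reversible kernels, which gives $\chi^2(\pi_n\mmid\pi)\le (1-\Phi^2/2)^{n}\chi^2(\pi_0\mmid\pi)$. Setting $n\asymp \Phi^{-2}\log\frac{\chi^2(\pi_0\mmid\pi)}{\veps}$ yields the claimed $O(d\bar\nu\log\frac{\chi^2(\pi_0\mmid\pi)}{\veps})$. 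Reversibility with respect to $\pi$ is immediate from the Metropolis ratio \eqref{eq:acceptance-probability}. Laziness gives a nonnegative spectrum.

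The conductance bound has two ingredients. The first is one-step overlap: if $\norm{x-y}_{g(x)}\le c\,r/\sqrt d$, then $\dtv(P_x,P_y)\le 1-c'$ for the transition kernels $P_x,P_y$. I would split this into three pieces.
\begin{enumerate}
\item \emph{Proposal closeness.} $\dtv(\msf N_g^r(x),\msf N_g^r(y))$ is small, by comparing Gaussians with covariances $g(x)^{-1}$ and $g(y)^{-1}$. The mean shift costs $O(c)$. The covariance change is controlled in Frobenius norm by SSC, since $\norm{g(x)^{-1/2}(g(y)-g(x))g(x)^{-1/2}}_F\lesssim \norm{x-y}_x\sqrt d\lesssim c$. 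The log-determinant part is controlled by SC together with LTSC.
\item \emph{Rejection by leaving $\K$.} Symmetry gives $\msf D_g(x,1)\subseteq\K$, so a proposal with $\norm{z-x}_x\approx r$ lies in $\K$ except with small Gaussian tail probability.
\item \emph{Metropolis rejection.} The ratio in \eqref{eq:acceptance-probability} factors into three parts. The density ratio $\pi(z)/\pi(x)=e^{-\langle c,z-x\rangle}$ is $1-O(r)$ with high probability, because the Gaussian linear functional has standard deviation $\lesssim r\norm{c}_{g(x)^{-1}}/\sqrt d$; this requires a normalisation of the exponential parameter as in \cite{KV24ipm}. The determinant ratio is handled by LTSC, which gives a one-sided trace bound on $\log\det g(z)-\log\det g(x)$ of order $r$ along a random proposal. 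The quadratic-form difference is exactly what ASC (Definition~\ref{def:asc}) controls: it is at most $2\veps r^2/d$ times $d/(2r^2)$, i.e.\ $O(\veps)$, with probability $1-\veps$.
\end{enumerate}
Together these show the acceptance probability is $\Omega(1)$ on an event of constant probability, and hence the overlap.

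The second ingredient is isoperimetry in the local metric. I would use the cross-ratio (Hilbert-type) distance $d_\K$ on $\K$ and the known isoperimetric inequality for log-concave measures in that distance: for any partition $S_1,S_2,S_3$, $\pi(S_3)\gtrsim d_\K(S_1,S_2)\,\pi(S_1)\pi(S_2)$. Symmetry \eqref{eq:symmetry} gives $\K\cap(2x-\K)\subseteq \msf D_g(x,\sqrt{\bar\nu})$, hence $d_\K(x,y)\gtrsim \norm{x-y}_x/\sqrt{\bar\nu}$. Points at $g$-distance below $c r/\sqrt d$ therefore have overlapping kernels. Any two sets whose points have nearly disjoint kernels are separated in $d_\K$ by $\gtrsim r/\sqrt{d\bar\nu}$. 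The standard conductance argument then gives $\Phi\gtrsim r/\sqrt{d\bar\nu}$, and with $r=\Theta(1)$ this yields the rate above.

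I expect the main obstacle to be the Metropolis-filter part of the overlap step, specifically the reverse-density term $\norm{z-x}_{g(z)}^2$. Here the randomness of $z$ enters the metric itself, which is precisely why ASC is assumed as a hypothesis rather than derived from SC. Within this lemma it is used as a black box. The remaining delicate point is uniformity: the constant-probability good events for proposal closeness, membership in $\K$, and acceptance must be intersected with constants that do not depend on $d$. The intended choice is to fix $\veps$ small, take $r\le r_\veps$, and choose $c$ small accordingly.
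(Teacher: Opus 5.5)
Your architecture is the one the paper imports from \cite{KV24ipm}: one-step overlap at $\norm{x-y}_{g(x)}\lesssim r/\sqrt d$, cross-ratio isoperimetry with $d_\K(x,y)\ge\norm{x-y}_{g(x)}/\sqrt{\bar\nu}$ from symmetry, conductance $\Phi\gtrsim(d\bar\nu)^{-1/2}$, then Cheeger and laziness for the $\chi^2$ contraction. The step that fails as written is the exponential factor in the Metropolis ratio. You claim $e^{-\langle c,z-x\rangle}=1-O(r)$ with high probability and concede that this needs a normalisation of $c$. No such normalisation is available. $\bar\nu$-symmetry, SSC, LTSC and ASC constrain only $g$ and $\K$. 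The lemma is asserted for every exponential $\pi$, and Theorem~\ref{thm:target-mixing} uses it that way. Rescaling $c$ changes $\pi$. Take $\pi\propto e^{-Mx_1}$ on a fixed polytope and let $M\to\infty$. At any fixed interior point, the standard deviation $\frac{r}{\sqrt d}\norm{c}_{g(x)^{-1}}$ of $\langle c,z-x\rangle$ grows linearly in $M$. So with probability close to $\frac12$ the factor $e^{-\langle c,z-x\rangle}$ is tiny, and the rejection probability approaches $\frac12$. Your bound via $\dtv(P_x,Q_x)+\dtv(Q_x,Q_y)+\dtv(Q_y,P_y)$, with $Q_x:=\msf N_g^r(x)$, then no longer stays below $1$.

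The missing idea is that the symmetry of the proposal makes normalisation unnecessary.
\begin{enumerate}
\item Since $\langle c,z-x\rangle$ is a centred Gaussian, $\pi(z)\ge\pi(x)$ with probability exactly $\frac12$.
\item If $\norm{x-y}_{g(x)}\le\delta r/\sqrt d$, then $\abs{\langle c,x-y\rangle}\le\delta\cdot\frac{r}{\sqrt d}\norm{c}_{g(x)^{-1}}$, i.e.\ at most $\delta$ standard deviations. Hence under $Q_x$ the event $\{\langle c,z\rangle\le\min(\langle c,x\rangle,\langle c,y\rangle)\}$ has probability at least $\frac12-O(\delta)$, uniformly in $c$.
\item Intersect this event with your high-probability events: $z\in\K$, the determinant bound, and ASC at both $x$ and $y$. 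The events for $y$ are transferred from $Q_y$ to $Q_x$ at cost $\dtv(Q_x,Q_y)$.
\item Lower-bound the overlap $\int\min(a_xq_x,a_yq_y)$ directly, where $q_x$ is the density of $Q_x$ and $a_x$ the acceptance probability.
\end{enumerate}
This gives $1-\dtv(P_x,P_y)\ge(1-O(\veps+r))\bigl(\frac12-O(\delta+\veps)-2\dtv(Q_x,Q_y)\bigr)$, a positive constant for small $\delta,\veps,r$. It replaces the triangle inequality, which cannot absorb rejection probabilities near $\frac12$.

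Two smaller attributions are off.
\begin{itemize}
\item In proposal closeness, the log-determinant term of the Gaussian KL is controlled by SSC alone.
\item The determinant ratio in the filter needs SSC as well as LTSC: for the linear term $\tr(g^{-1}\Dd g[z-x])$, and for the $-\norm{g^{-1/2}\Dd g[h]\,g^{-1/2}}_{\frob}^{2}$ part of $\Dd^{2}\log\det g[h,h]$.
\end{itemize}
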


Here we use the exponential-target specialization of \cite[Theorem~1]{KV24ipm}.
The one-step coupling and isoperimetry argument in \cite{KV24ipm}
gives conductance $\Phi\gtrsim1/\sqrt{d\bar{\nu}}$. Since the $\dw$
is lazy and reversible with respect to $\pi$, Cheeger's inequality
gives an $L^{2}(\pi)$ spectral gap $\gamma\gtrsim\Phi^{2}$. Thus,
for $f_{0}:=\D\pi_{0}/\D\pi$ and the transition kernel $P$, there
exists a universal constant $c>0$ such that
\[
\chi^{2}(\pi_{0}P^{t}\mmid\pi)=\norm{P^{t}(f_{0}-1)}_{L^{2}(\pi)}^{2}\leq(1-\gamma)^{2t}\,\norm{f_{0}-1}_{L^{2}(\pi)}^{2}\leq\exp\bpar{-\frac{ct}{d\onu}}\,\chi^{2}(\pi_{0}\mmid\pi)\,.
\]
Choosing $t\gtrsim d\bar{\nu}\log\frac{\chi^{2}(\pi_{0}\mmid\pi)}{\veps}$
yields the displayed $\chi^{2}$ guarantee. We refer readers to \cite{JC24regularized}
for a simpler mixing analysis.

The final step of the reduction is the following simple scaling observation.

\begin{prop}
\label{lem:scaling-rule} For $\veps>0$, let $g_{0}$ be a metric
satisfying the ASC condition with $r_{\veps}=r(\veps)$ (possibly
dimension-dependent). If $g=Lg_{0}$ with $L\geq1$, then the metric
$g$ satisfies the ASC condition with $r_{\veps}':=L^{1/2}r_{\veps}$.
Moreover, if $g_{0}$ is $\bar{\nu}$-symmetric, then $g$ is $L\bar{\nu}$-symmetric. 
\end{prop}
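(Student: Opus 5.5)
The plan is a direct change of variables: a Dikin proposal for $g=Lg_0$ at radius $r$ has the same law as a proposal for $g_0$ at radius $r/L^{1/2}$. Both the ASC event and the symmetry condition are expressed through Dikin norms, which scale homogeneously in $L$, so both claims should follow from this rescaling.

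\textbf{ASC.} Fix $\veps>0$, $x\in\inter\K$, and $r\le r'_\veps=L^{1/2}r_\veps$. First, compare covariances:
\[
\frac{r^{2}}{d}\,g(x)^{-1}=\frac{(r/L^{1/2})^{2}}{d}\,g_{0}(x)^{-1},
\]
so $\msf N_{g}^{r}(x)=\msf N_{g_{0}}^{r_0}(x)$ with $r_0:=r/L^{1/2}\le r_\veps$. Next, for any $z\in\inter\K$ we have $\norm{z-x}_{g(z)}^{2}-\norm{z-x}_{g(x)}^{2}=L\bpar{\norm{z-x}_{g_0(z)}^{2}-\norm{z-x}_{g_0(x)}^{2}}$. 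Applying ASC of $g_0$ at radius $r_0$ gives, with probability at least $1-\veps$ (the event including $z\in\inter\K$),
\[
\abs{\norm{z-x}_{g(z)}^{2}-\norm{z-x}_{g(x)}^{2}}\le L\cdot\frac{2\veps r_0^{2}}{d}=\frac{2\veps r^{2}}{d},
\]
since $Lr_0^2=r^2$. This is exactly the ASC condition for $g$ with threshold $r'_\veps$. The hypothesis $L\ge1$ is not needed for this step; it only ensures $r'_\veps\ge r_\veps$, which matters for the intended use.

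\textbf{Symmetry.} Use $\norm{v}_{g(x)}=L^{1/2}\norm v_{g_0(x)}$, which gives $\msf D_g(x,\rho)=\msf D_{g_0}(x,\rho/L^{1/2})$.
\begin{itemize}
\item Inner inclusion: since $L\ge1$, $\msf D_g(x,1)=\msf D_{g_0}(x,L^{-1/2})\subseteq\msf D_{g_0}(x,1)\subseteq\K\cap(2x-\K)$.
\item Outer inclusion: $\K\cap(2x-\K)\subseteq\msf D_{g_0}(x,\sqrt{\bar\nu})=\msf D_g(x,\sqrt{L\bar\nu})$.
\end{itemize}
Together these give $L\bar\nu$-symmetry of $g$. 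Here $L\ge1$ is genuinely used in the inner inclusion.

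There is no real obstacle. The only points needing care are that the ASC event's implicit requirement $z\in\inter\K$ transfers unchanged, because the proposal laws coincide, and that $L\ge1$ is used in the inner inclusion.
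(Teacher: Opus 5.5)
Your proof is correct and follows the same route as the paper: identify $\msf N_{g}^{r}(x)$ with $\msf N_{g_0}^{r/L^{1/2}}(x)$, apply ASC of $g_0$ at radius $r/L^{1/2}\le r_\veps$, and use the homogeneity $\norm{v}_{Lg_0(x)}^2=L\norm{v}_{g_0(x)}^2$ for both the ASC claim and the symmetry claim. Your write-up is slightly more careful than the paper's. The paper writes the radius condition as $r\le r_\veps$ rather than $r/L^{1/2}\le r_\veps$, and it leaves the two symmetry inclusions implicit, including the use of $L\ge1$ in the inner one.
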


\begin{proof}
Using the ASC of $g_{0}$, with respect to $z\sim\msf N(x,\frac{r^{2}}{d}\,g(x)^{-1})$,
if $r\leq r_{\veps}$,
\[
\abs{\norm{z-x}_{Lg_{0}(z)}^{2}-\norm{z-x}_{Lg_{0}(x)}^{2}}=\abs{L\,(\norm{z-x}_{g_{0}(z)}^{2}-\norm{z-x}_{g_{0}(x)}^{2})}\leq L\,\frac{2\veps\,(r/L^{1/2})^{2}}{d}=\frac{2\veps r^{2}}{d}
\]
with probability at least $1-\veps$. Hence, if $r\leq r_{\veps}'=L^{1/2}r_{\veps}$,
then the ASC condition for $g=Lg_{0}$ holds. The symmetry statement
follows directly from $\norm{y-x}_{Lg_{0}(x)}^{2}=L\,\norm{y-x}_{g_{0}(x)}^{2}$.
\end{proof}

\subsection{The Lee--Sidford metric for linear inequalities\label{subsec:ls-metric}}

For $p\geq2$, the $\ell_{p}$-Lewis weight of $A_{x}$ \cite{LS19solving}
is the vector $w_{x}\in\R^{m}$ satisfying 
\[
w_{x}=\sigma\bpar{\Diag(w_{x})^{1/2-1/p}A_{x}}\,,
\]
where $\sigma(B):=\Diag(B\,(B^{\T}B)^{-1}B^{\T})$ denotes the leverage
scores of the matrix $B$. Equivalently, for $W_{x}:=\Diag w_{x}$,
\[
W_{x}=\Diag\bpar{W_{x}^{\half-\frac{1}{p}}A_{x}\,(A_{x}^{\T}W_{x}^{1-\frac{2}{p}}A_{x})^{-1}A_{x}^{\T}W_{x}^{\half-\frac{1}{p}}}\,.
\]
Throughout, we take $p=\Theta(\polylog m)$ and suppress polylogarithmic
factors in $m$, $p$, and $\veps^{-1}$.

We define the baseline Lee--Sidford (LS) metric \cite{LLV20strong}
as 
\begin{equation}
g_{0}(x):=A_{x}^{\T}W_{x}^{1-\frac{2}{p}}A_{x}\,.\label{eq:LS-metric}
\end{equation}
The standard LS metric used in the mixing theorem is a fixed polylogarithmic
multiple of $g_{0}$. This scalar is suppressed in $\widetilde{O}(\cdot)$
and does not affect the ASC calculation. It is known from \cite[Lemmas~4.2 and~4.3]{LLV20strong}
that, with this standard polylogarithmic normalization, the LS metric
is SSC and has convex log-determinant. The latter gives 
\[
\tr(g_{0}(x)^{-1}\Dd^{2}g_{0}(x)[h,h])\geq\tr(g_{0}(x)^{-1}\Dd g_{0}(x)[h]g_{0}(x)^{-1}\Dd g_{0}(x)[h])\geq0\,,
\]
which is the LTSC condition. Thus, the main task in this paper is
to prove the ASC estimate for $g_{0}$ at the largest possible radius
before the final scaling.

\subsubsection{Lewis-weight calculus}

We use the following Lewis-weight identities and estimates throughout
the ASC proof. The first lemma provides the basic size controls for
slack directions and Lewis-weight derivatives. Since $0\preceq W_{x}\preceq I$
and $1-2/p\in[0,1]$, we have 
\[
A_{x}^{\T}W_{x}A_{x}\preceq A_{x}^{\T}W_{x}^{1-\frac{2}{p}}A_{x}=g_{0}(x)\,.
\]

\begin{lem}[\cite{LS19solving}]
 \label{lem:usefulFactLewis} Let $W_{x}=\Diag(w_{x}(A_{x}))\in\mbb S_{++}^{m}$
be the $\ell_{p}$-Lewis weights and $g(x)=A_{x}^{\T}W_{x}A_{x}$
the Lewis-weights metric, and $h\in\Rd$. 
\begin{itemize}
\item \textup{(Lemma 26)} $\max_{i\in[m]}\frac{[\sigma(W_{x}^{1/2}A_{x})]_{i}}{(w_{x})_{i}}\leq2m^{\frac{2}{p+2}}$. 
\item \textup{(Lemma 33)} $\norm{A_{x}h}_{W_{x}}=\norm h_{g(x)}$ and $\norm{A_{x}h}_{\infty}\leq\sqrt{2}m^{\frac{1}{p+2}}\,\norm h_{g(x)}$. 
\item \textup{(Lemma 34)} $\norm{W_{x}^{-1}w_{x,h}'}_{W_{x}}\leq p\,\norm h_{g(x)}$. 
\end{itemize}
\end{lem}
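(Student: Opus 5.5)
The three items are standard facts from \cite{LS19solving}. I would reprove them directly from the fixed-point characterization $w_{x}=\sigma(W_{x}^{1/2-1/p}A_{x})$, treating $x$ as fixed and writing $A=A_{x}$, $W=W_{x}$, $a_{i}$ for the rows of $A$, $M:=A^{\T}W^{1-2/p}A$ and $g=A^{\T}WA$.

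The fixed-point identity reads $w_{i}=w_{i}^{1-2/p}a_{i}^{\T}M^{-1}a_{i}$, that is, $a_{i}^{\T}M^{-1}a_{i}=w_{i}^{2/p}\le1$. This is the only property of Lewis weights needed for the first two items.

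\textbf{Item (Lemma 26).} Note that $[\sigma(W^{1/2}A)]_{i}/w_{i}=a_{i}^{\T}g^{-1}a_{i}=:x_{i}$. Set $X:=\max_{j}x_{j}$, attained at some index $i$, and let $h:=g^{-1}a_{i}$. Then $a_{i}^{\T}h=x_{i}$, $h^{\T}gh=x_{i}$, and $\norm{Ah}_{\infty}^{2}\le X\,h^{\T}gh=Xx_{i}$ by Cauchy--Schwarz in the $g$-geometry.

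\begin{enumerate}
\item By Cauchy--Schwarz in the $M$-geometry, $x_{i}^{2}=(a_{i}^{\T}h)^{2}\le(a_{i}^{\T}M^{-1}a_{i})(h^{\T}Mh)=w_{i}^{2/p}\,h^{\T}Mh$.
\item Split $w_{j}^{1-2/p}(a_{j}^{\T}h)^{2}=\bpar{w_{j}(a_{j}^{\T}h)^{2}}^{1-2/p}\bpar{(a_{j}^{\T}h)^{2}}^{2/p}$ and apply H\"older. This gives $h^{\T}Mh\le(h^{\T}gh)^{1-2/p}\bpar{\sum_{j}(a_{j}^{\T}h)^{2}}^{2/p}$.
\item Bounding the last sum by $m\norm{Ah}_{\infty}^{2}\le mXx_{i}$ and combining with the first two steps at the maximizing index, everything collapses to $X^{1-2/p}\lesssim m^{2/p}$, i.e.\ $X\lesssim m^{O(1/p)}$.
\end{enumerate}

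Getting the precise constant $2m^{2/(p+2)}$ requires keeping the factor $w_{i}^{2/p}$ rather than discarding it via $w_{i}\le1$. Concretely, I would use $w_{i}\ge\sigma_{i}(W^{1/2}A)/X$-type lower bounds to couple $w_{i}$ with $x_{i}$, and re-optimize the H\"older split. This bookkeeping is the only delicate point in the first item. Since $p\asymp\polylog m$, any $m^{O(1/p)}=O(1)$ bound suffices for the paper's purposes anyway.

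\textbf{Item (Lemma 33).} This is immediate. First, $\norm{Ah}_{W}^{2}=h^{\T}A^{\T}WAh=\norm h_{g}^{2}$. Second, $(a_{i}^{\T}h)^{2}\le(a_{i}^{\T}g^{-1}a_{i})\norm h_{g}^{2}=x_{i}\norm h_{g}^{2}\le2m^{2/(p+2)}\norm h_{g}^{2}$ by the previous item; taking square roots gives the $\ell_{\infty}$ bound.

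\textbf{Item (Lemma 34).} This is the main obstacle. I would differentiate the fixed-point equation along $h$ and apply the implicit function theorem. Write $P:=W^{1/2-1/p}AM^{-1}A^{\T}W^{1/2-1/p}$ and $s:=Ah$; the derivative of $A_{x}$ in direction $h$ is $-\Diag(s)A$. Differentiating $w=\Diag(P)$ gives, after standard leverage-score calculus,
\[
\bpar{W-c_{p}\Lambda}W^{-1}w'=-2\Lambda s,\qquad \Lambda=W-P^{\circ2},\quad c_{p}=1-\tfrac{2}{p},
\]
up to sign and normalization. Passing to $\bar\Lambda=W^{-1/2}\Lambda W^{-1/2}$ and using $0\preceq P^{\circ2}\preceq W$ (Schur product with a projection), we get $0\preceq\bar\Lambda\preceq I$. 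Hence $(I-c_{p}\bar\Lambda)^{-1}\preceq\frac{1}{1-c_{p}}I=\frac{p}{2}I$. Therefore
\[
\norm{W^{-1}w'}_{W}=\norm{W^{1/2}W^{-1}w'}_{2}\le\tfrac{p}{2}\cdot2\,\norm{\bar\Lambda W^{1/2}s}_{2}\le p\,\norm{s}_{W}=p\,\norm h_{g},
\]
where the last equality is the previous item. The care needed is in deriving the exact linear system for $w'$ and checking that $\bar\Lambda$ is the right symmetric operator.
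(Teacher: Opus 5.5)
Your second and third items are sound. For the third, your system $(W-c_p\Lambda)W^{-1}w'=-2\Lambda s$ is exactly $\sigma'=2\Lambda\delta$ with $\delta=\beta W^{-1}w'-s$, and it is equivalent to $w'=-W^{1/2}NW^{1/2}s$ from Lemma~\ref{lem:DWh}. The bound is then $\norm{NW^{1/2}s}\le p\,\norm{h}_{g}$ via $0\preceq N\preceq pI$. (The paper itself gives no proof and only cites \cite{LS19solving}.)

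The genuine gap is in the first item, and hence in the constant $\sqrt2\,m^{1/(p+2)}$ of the second. Your three steps prove only $X^{1-2/p}\le(w_im)^{2/p}$, i.e.\ $X\le m^{2/(p-2)}$. The proposed repair points the wrong way. At the maximizer $w_i=\sigma_i(W^{1/2}A)/X$ exactly, and a lower bound on $w_i$ is useless because $w_i^{2/p}$ sits on the upper-bound side of Step~1. What helps is the upper bound $w_iX=\sigma_i(W^{1/2}A)\le1$.

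Even with $w_i\le1/X$, your Step~2--3 route caps out at $X\le m^{2/p}$, and re-tuning the H\"older split cannot fix this. About the rows $j\ne i$ you use only $|s_j|\le X$ and the count $m$. The configuration $w_i=1/(2X)$, $s_j=X$, $w_j=1/(2(m-1)X)$ for $j\ne i$, with $X=\tfrac12(m-1)^{2/p}$, satisfies every inequality you invoke. Meanwhile the exponent $2/(p+2)$ is sharp. Already for $d=1$, where $w_i\propto|a_i|^p$, one has $X=\norm{a}_\infty^2\norm{a}_p^p/\norm{a}_{p+2}^{p+2}$. Taking one entry $1$ and $m-1$ entries $(m-1)^{-1/(p+2)}$ makes this at least $\tfrac12(m-1)^{2/(p+2)}$.

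The missing idea is to apply the Lewis constraint $s_j^2=(a_j^\T h)^2\le w_j^{2/p}\norm{h}_{M}^2$ to \emph{every} row. Use it to lower-bound $\norm{h}_{g}^2$, rather than upper-bounding $\norm{h}_{M}^2$ by H\"older. Put $c_j:=w_j^{1-2/p}s_j^2/\norm{h}_{M}^2$. Then $\sum_jc_j=1$ and $c_j\le w_j$, so by the power-mean inequality
\[
\norm{h}_{g}^2=\norm{h}_{M}^2\sum_j w_j^{2/p}c_j\ge\norm{h}_{M}^2\sum_j c_j^{1+2/p}\ge m^{-2/p}\,\norm{h}_{M}^2\,.
\]
Combine this with your Step~1 at $h=g^{-1}a_i$ and with $w_i\le1/X$. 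This gives $X=s_i^2/\norm{h}_{g}^2\le(w_im)^{2/p}\le(m/X)^{2/p}$, i.e.\ $X\le m^{2/(p+2)}$. That is stronger than the stated bound and yields the second item with its stated constant.

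You are right that for $p\asymp\polylog m$ your $m^{O(1/p)}$ bound would suffice downstream in the paper. But the lemma as stated is not established by your argument.
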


The next identity is the main way derivatives of the Lewis weights
enter the proof. Below, $A^{(2)}(=A^{\circ2})$ is the matrix defined
as $(A^{(2)})_{ij}=A_{ij}^{2}$ for a matrix $A$.
\begin{lem}[\cite{LS19solving}, Lemma 24]
\label{lem:DWh} The directional derivative of the $\ell_{p}$-Lewis
weight $W_{x}$ in direction $h\in\Rd$ is 
\[
W_{x,h}':=\Dd W_{x}[h]=-\Diag(W_{x}^{\half}N_{x}W_{x}^{\half}s_{x,h})\,,
\]
where $s_{x,h}:=A_{x}h$, 
\[
P_{x}:=W_{x}^{\half-\frac{1}{p}}A_{x}g_{0}(x)^{-1}A_{x}^{\T}W_{x}^{\half-\frac{1}{p}}\,,\qquad P_{x}^{(2)}:=P_{x}^{\circ2}\,,
\]
$c_{p}:=1-2/p$, $\Lambda_{x}\defeq W_{x}-P_{x}^{(2)}$, $\bar{\Lambda}_{x}\defeq W_{x}^{-\half}\Lambda_{x}W_{x}^{-\half}$,
and $N_{x}\defeq2\bar{\Lambda}_{x}(I-c_{p}\bar{\Lambda}_{x})^{-1}$. 
\end{lem}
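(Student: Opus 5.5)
We derive the formula by implicitly differentiating the fixed-point equation that defines the Lewis weights. Write $w=w_x$, $W=W_x$, $s=s_{x,h}=A_xh$ and $c_p=1-2/p$. The equation is $w_i=[W^{c_p/2}A_x g_0^{-1}A_x^{\T}W^{c_p/2}]_{ii}$, where $g_0=A_x^{\T}W^{c_p}A_x$. Two sources of $x$-dependence enter. The first is $A_x=S_x^{-1}A$, for which $\Dd A_x[h]=-\Diag(s)A_x$, since $\Dd S_x[h]=\Diag(Ah)$ and $S_x^{-1}\Diag(Ah)=\Diag(s)$. The second is $W$ itself, with derivative $W':=\Dd W[h]$ to be solved for. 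Set $B:=W^{c_p/2}A_x$ and $P:=B(B^{\T}B)^{-1}B^{\T}$, so that $w=\Diag(P)$.

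\textbf{Step 1: differentiate the projection.} We have $\Dd B[h]=EB$ with the diagonal matrix $E:=\frac{c_p}{2}W^{-1}W'-\Diag(s)$. For any $B$ with a diagonal perturbation $B\mapsto EB$, the standard identity gives
\[
\Dd P=EP+PE-2PEP .
\]
Taking diagonals, and using that $E$ is diagonal, yields $\Diag(\Dd P)_{i}=2E_{ii}P_{ii}-2\sum_jP_{ij}^2E_{jj}$. In vector form this reads $w'=2(W-P^{(2)})e=2\Lambda_x e$, where $e$ is the diagonal of $E$, $P^{(2)}=P^{\circ 2}$, $\Lambda_x=W-P^{(2)}$, and $P=P_x$ is exactly the projection in the statement.

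\textbf{Step 2: solve the linear equation.} Substituting $e=\frac{c_p}{2}W^{-1}w'-s$ gives
\[
w'=c_p\Lambda_xW^{-1}w'-2\Lambda_x s ,
\]
that is, $(I-c_p\Lambda_xW^{-1})w'=-2\Lambda_xs$. Conjugating by $W^{-1/2}$, and setting $\bar w:=W^{-1/2}w'$ and $\bar\Lambda_x=W^{-1/2}\Lambda_xW^{-1/2}$, this becomes
\[
(I-c_p\bar\Lambda_x)\bar w=-2\bar\Lambda_xW^{1/2}s .
\]
Hence $\bar w=-2(I-c_p\bar\Lambda_x)^{-1}\bar\Lambda_xW^{1/2}s$. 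Because $(I-c_p\bar\Lambda_x)^{-1}$ is a power series in $\bar\Lambda_x$, the two factors commute, so $\bar w=-N_xW^{1/2}s$. It follows that $w'=-W^{1/2}N_xW^{1/2}s$, which is the claimed formula.

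\textbf{Main obstacle: invertibility.} The step that needs care is showing that $I-c_p\bar\Lambda_x$ is invertible, which also justifies the implicit function theorem and hence differentiability of $w_x$ in the first place. First, $\bar\Lambda_x\succeq0$. Indeed $P^{(2)}=P\circ P$ is PSD by Schur's product theorem. Moreover $\Lambda_x=\Diag(P\mathbf 1\circ\ldots)$: since $P^2=P$, the row sums satisfy $\sum_jP_{ij}^2=P_{ii}=w_i$. So $\Lambda_x$ is a diagonally dominant Laplacian-type matrix, hence $\Lambda_x\succeq0$. Second, $\bar\Lambda_x\preceq I$, because $W^{-1/2}P^{(2)}W^{-1/2}\succeq0$. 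Therefore the spectrum of $\bar\Lambda_x$ lies in $[0,1]$, and that of $I-c_p\bar\Lambda_x$ lies in $[1-c_p,1]=[2/p,1]$. This matrix is invertible for finite $p$, and the implicit function theorem applies to the smooth map $(x,w)\mapsto w-\Diag(P(x,w))$ once we note that its $w$-Jacobian, after conjugation, is exactly $I-c_p\bar\Lambda_x$. Beyond this, only bookkeeping of the factors $W^{\pm1/2}$ remains.
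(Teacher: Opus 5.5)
The paper does not prove this lemma; it imports it from \cite[Lemma~24]{LS19solving}. Your argument is correct and is the standard implicit-differentiation route behind that lemma: you differentiate the projection to get $w'=2\Lambda_x e$ with $e=\frac{c_p}{2}W^{-1}w'-s$, solve $(I-c_p\bar\Lambda_x)W^{-1/2}w'=-2\bar\Lambda_xW^{1/2}s$, and justify invertibility via $0\preceq\bar\Lambda_x\preceq I$. Two small cleanups are needed. First, the garbled line ``$\Lambda_x=\Diag(P\mathbf 1\circ\ldots)$'' should simply say that, since $\sum_jP_{ij}^2=P_{ii}=w_i$, $\Lambda_x$ is the graph Laplacian with edge weights $P_{ij}^2$ and hence PSD. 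Second, the implicit-function step tacitly uses $w_x>0$ and the uniqueness of the Lewis weights to identify the local smooth branch with $w_x$; you should state both explicitly.
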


The matrices in Lemma~\ref{lem:DWh} satisfy
\begin{equation}
P_{x}^{(2)},\Lambda_{x}\preceq W_{x}\preceq I\,.\label{eq:lewisBasic-PWI}
\end{equation}
We also use the following bounds on $N_{x}$ and its first directional
derivative. 
\begin{lem}[\cite{LS19solving}]
\label{lem:LS-comp-tool} Let $x\in\inter\K$ and $h\in\Rd$. For
$c_{p}=1-2/p$ with $p>2$, let $\bar{\Lambda}_{x}:=W_{x}^{-\half}\Lambda_{x}W_{x}^{-\half}=I-W_{x}^{-\half}P_{x}^{(2)}W_{x}^{-\half}$,
$N_{x}\defeq2\bar{\Lambda}_{x}(I-c_{p}\bar{\Lambda}_{x})^{-1}$ and
$\theta_{x}=A_{x}^{\T}W_{x}A_{x}$. 
\begin{itemize}
\item \textup{(Lemma 31)} $N_{x}$ is symmetric and $0\preceq N_{x}\preceq pI$. 
\item \textup{(Lemma 34)} $\norm{W_{x}^{-1}w_{x,h}'}_{\infty}\leq p(\sqrt{2}m^{\frac{1}{p+2}}+p/2)\,\norm h_{\theta_{x}}$. 
\item \textup{(Lemma 37)} $\norm{(I+N_{x})^{-\half}\Dd N_{x}[h]\,(I+N_{x})^{-\half}}\leq4p^{5/2}\,\norm h_{\theta_{x}}$. 
\end{itemize}
\end{lem}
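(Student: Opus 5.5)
We prove the three items in order, using only the structure of $P_x$ together with \eqref{eq:lewisBasic-PWI} and Lemmas~\ref{lem:usefulFactLewis} and~\ref{lem:DWh}. Throughout, drop the subscript $x$ and write $c=c_p$, so that $1-c=2/p$.

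\textbf{Symmetry and spectrum of $N$.} The matrix $P$ is the orthogonal projection onto $\cols(W^{1/2-1/p}A)$, and $\Diag P=W$ by the defining equation of Lewis weights. The Schur product theorem gives $P\circ P\succeq0$ and $(I-P)\circ P\succeq0$. Since $I\circ P=\Diag P=W$, the second inequality is $P^{(2)}\preceq W$, which is \eqref{eq:lewisBasic-PWI}. Hence $\bar\Lambda=I-W^{-1/2}P^{(2)}W^{-1/2}$ is symmetric with $0\preceq\bar\Lambda\preceq I$. Now $N=f(\bar\Lambda)$ for the scalar function $f(\lambda)=2\lambda/(1-c\lambda)$, so $N$ is symmetric and commutes with $\bar\Lambda$. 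On $[0,1]$ the function $f$ is increasing, with $f(0)=0$ and $f(1)=2/(1-c)=p$. This gives $0\preceq N\preceq pI$.

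\textbf{The $\ell_\infty$ bound.} By Lemma~\ref{lem:DWh}, $W^{-1}w'_h=-W^{-1/2}NW^{1/2}s$ with $s=Ah$. Use the identity $f(\lambda)=\frac{2}{c}\bigl((1-c\lambda)^{-1}-1\bigr)$ and conjugate by $W^{1/2}$. Since
\[
W^{-1/2}(I-c\bar\Lambda)W^{1/2}=(1-c)I+cM,\qquad M:=W^{-1}P^{(2)},
\]
we get $W^{-1}w'_h=-\frac{2}{c}(z-s)$, where $z$ solves $(1-c)z+cMz=s$.

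The matrix $M$ is entrywise nonnegative, and it is row-stochastic because $P^{(2)}\mathbf 1=\Diag(P^2)=\Diag P=w$. For the $i$-th entry of $Mz$, Cauchy--Schwarz with weights $P_{ij}^2$ (whose row sum is $w_i$), followed by $P_{ij}^2\le w_iw_j$, gives
\[
|(Mz)_i|\le\Bigl(\sum_j P_{ij}^2z_j^2/w_i\Bigr)^{1/2}\le\norm z_W .
\]
Solving the linear relation for $z$ then yields
\[
\norm z_\infty\le\frac{p}{2}\bigl(\norm s_\infty+c\norm z_W\bigr).
\]
To close this, bound $\norm z_W$ through the first item: the $W$-norm computation behind Lemma~\ref{lem:usefulFactLewis} (Lemma 34) gives $\norm{z-s}_W\lesssim p\norm s_W$. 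Finally use $\norm s_W=\norm h_{\theta}$ and $\norm s_\infty\le\sqrt2 m^{1/(p+2)}\norm h_\theta$ (Lemma 33). Tracking constants should give the stated $p(\sqrt2m^{1/(p+2)}+p/2)\norm h_\theta$.

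\textbf{Derivative of $N$ (the main obstacle).} Differentiating $N=\frac{2}{c}\bigl((I-c\bar\Lambda)^{-1}-I\bigr)$ gives
\[
\Dd N[h]=2(I-c\bar\Lambda)^{-1}\,\Dd\bar\Lambda[h]\,(I-c\bar\Lambda)^{-1}.
\]
Because everything commutes with $\bar\Lambda$, we have $I+N=(I+(2-c)\bar\Lambda)(I-c\bar\Lambda)^{-1}\succeq(I-c\bar\Lambda)^{-1}$. Therefore $(I+N)^{-1/2}(I-c\bar\Lambda)^{-1}$ has operator norm at most $\norm{(I-c\bar\Lambda)^{-1/2}}\le\sqrt{p/2}$. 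It then suffices to show
\[
\norm{\Dd\bar\Lambda[h]}\lesssim p^{3/2}\norm h_\theta
\]
in operator norm, or, more sharply, to bound it in the sandwiched form $(I-c\bar\Lambda)^{-1/2}(\cdot)(I-c\bar\Lambda)^{-1/2}$.

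For this, write $\Dd\bar\Lambda=-\Dd\bigl(W^{-1/2}P^{(2)}W^{-1/2}\bigr)$ and expand it into two kinds of terms:
\begin{itemize}
\item terms involving $W^{-1}w'_h$, which are diagonal scalings controlled by the first two items and Lemma~\ref{lem:usefulFactLewis};
\item the term $2W^{-1/2}(P\circ\Dd P[h])W^{-1/2}$.
\end{itemize}
For the second, $\Dd P[h]$ is the derivative of a projection. It equals $(I-P)\Delta P+P\Delta(I-P)$, where $\Delta$ is the diagonal matrix $\Diag\bigl((\tfrac12-\tfrac1p)W^{-1}w'_h-s\bigr)$. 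Its Schur product with $P$ is then bounded by the standard inequality
\[
\norm{W^{-1/2}(P\circ(D_1QD_2))W^{-1/2}}\le\norm{D_1}_\infty\norm{D_2}_\infty
\]
for contractions $Q$ and diagonal $D_i$. The delicate point is that the $\ell_\infty$ bound from the second item carries a factor $m^{1/(p+2)}$. To avoid paying it, I would first try to pair the diagonal factors against $W$-norms, as in the $\norm z_W$ argument above. If that fails, I would keep the $(I-c\bar\Lambda)^{-1/2}$ sandwich inside the Schur-product estimate rather than pulling it out as a crude $\sqrt{p}$ factor.
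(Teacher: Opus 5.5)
Your first item is correct, and so is the second up to one bookkeeping step. Do not pass from $\norm z_\infty$ to $\frac{2}{c}\norm{z-s}_\infty$ by the triangle inequality, since that costs a factor $1/c$. Instead use $z-s=c(I-M)z$, i.e.\ $W^{-1}w'_h=-2(I-M)z$, together with $\norm z_W=\norm{(I-c\bar\Lambda)^{-1}W^{1/2}s}\le\frac p2\norm s_W$. Then $\norm{W^{-1}w'_h}_\infty\le2\norm z_\infty+2\norm z_W\le p\norm s_\infty+(pc+2)\norm z_W$, and $pc+2=p$ yields exactly $p(\sqrt2m^{1/(p+2)}+p/2)\norm h_\theta$.

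The genuine gap is the third item, which is the substance of the lemma. Your reduction $\norm{(I+N)^{-1/2}\Dd N[h](I+N)^{-1/2}}\le p\,\norm{\Dd\bar\Lambda[h]}$ is correct, but you never bound $\norm{\Dd\bar\Lambda[h]}$, and the tool you name cannot do it. Let $\zeta:=(\tfrac12-\tfrac1p)W^{-1}w'_h-s$. Controlling the diagonal factors $\Diag(W^{-1}w'_h)$ and $\Diag(\zeta)$ in $\ell_\infty$ brings in $\norm s_\infty$ and the second item, and both carry $m^{1/(p+2)}$. The target $4p^{5/2}\norm h_\theta$ is free of $m$, so as written the main item is left open.

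Your ``first try'' of pairing diagonal factors against $W$-norms is the right fix, and it closes in a few lines. With $Q:=W^{-1/2}P^{(2)}W^{-1/2}$ and $D:=\Diag(W^{-1}w'_h)$,
\[
\Dd\bar\Lambda[h]=\tfrac12(DQ+QD)-2W^{-1/2}\bpar{P\circ\Dd P[h]}W^{-1/2}\,.
\]
Bound the pieces as follows.
\begin{itemize}
\item Since $Q_{ij}^2=P_{ij}^4/(w_iw_j)\le P_{ij}^2$ and $\sum_jP_{ij}^2=w_i$, you get $\norm{DQ}\le\norm{DQ}_\frob\le\norm{W^{-1}w'_h}_W$.
\item Writing $P=UU^\T$ with $\norm{e_i^\T U}^2=w_i$ shows $\norm{W^{-1/2}(P\circ X)W^{-1/2}}\le\norm X$ for every $X$.
\item Also $\norm{\Dd P[h]}=\norm{(I-P)\Diag(\zeta)P}\le\norm{(I-P)\Diag(\zeta)P}_\frob=(\zeta^\T(W-P^{(2)})\zeta)^{1/2}\le\norm\zeta_W$.
\item By the first item, $\norm{W^{-1}w'_h}_W=\norm{NW^{1/2}s}\le p\norm h_\theta$ and $\norm\zeta_W=\norm{(I+(\tfrac12-\tfrac1p)N)W^{1/2}s}\le\frac p2\norm h_\theta$.
\end{itemize}
Together these give $\norm{\Dd\bar\Lambda[h]}\le2p\norm h_\theta$, so the third item holds with constant $2p^2\le4p^{5/2}$.

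For comparison, the paper gives no proof of this lemma; it is imported from \cite{LS19solving}. The same $W$-norm mechanism drives the paper's moving-frame bound $\norm{U_t'}_\frob^2=z_t^\T(P_t\circ P_t^\perp)z_t\le\norm{z_t}_{W_t}^2$ in Lemma~\ref{lem:good-event}.
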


Finally, the closeness estimate below propagates coordinate bounds
from the base point to nearby points on the proposal segment. 
\begin{lem}[\cite{LS19solving}, Lemma 35]
\label{lem:weight-close} In the same setting as above, let $x_{t}=x+t\,h$,
$\sigma_{t,i}:=a_{i}^{\T}x_{t}-b_{i}$, and $w_{t}=w_{x_{t}}$. Define
$z_{t,\alpha}\in\R^{m}$ by $[z_{t,\alpha}]_{i}:=\frac{\D}{\D t}\log\frac{[w_{t,i}]^{\alpha}}{\sigma_{t,i}}$.
Then, 
\[
\norm{z_{t,\alpha}}_{\infty}\leq\bpar{\sqrt{2}(1+|\alpha|p)m^{\frac{1}{p+2}}+p\,|\alpha|\,\max(1,p/2)}\,\norm h_{A_{x_{t}}^{\T}W_{x_{t}}A_{x_{t}}}\,.
\]
\end{lem}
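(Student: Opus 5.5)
The plan is a direct computation of the logarithmic derivative, followed by one application of each of the two $\ell_\infty$ estimates already recorded in Lemma~\ref{lem:usefulFactLewis} and Lemma~\ref{lem:LS-comp-tool}. Both estimates are applied at the moving point $x_t$, with the metric $\theta_{x_t}=A_{x_t}^{\T}W_{x_t}A_{x_t}$; this is exactly the norm in the claimed bound.

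First, differentiate coordinatewise. Since $\sigma_{t,i}=a_i^{\T}x_t-b_i$ is affine in $t$, we have $\frac{\D}{\D t}\sigma_{t,i}=a_i^{\T}h$. Hence
\[
[z_{t,\alpha}]_i=\alpha\,\frac{\frac{\D}{\D t}w_{t,i}}{w_{t,i}}-\frac{a_i^{\T}h}{\sigma_{t,i}}
=\alpha\,\bigl[W_{x_t}^{-1}w_{x_t,h}'\bigr]_i-\bigl[A_{x_t}h\bigr]_i ,
\]
where $\frac{\D}{\D t}w_{t}=\Dd w_{x_t}[h]=w_{x_t,h}'$ by the chain rule along the line $x_t=x+th$. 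In vector form, $z_{t,\alpha}=\alpha\,W_{x_t}^{-1}w'_{x_t,h}-A_{x_t}h$.

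By the triangle inequality,
\[
\norm{z_{t,\alpha}}_\infty\le|\alpha|\,\norm{W_{x_t}^{-1}w'_{x_t,h}}_\infty+\norm{A_{x_t}h}_\infty .
\]
We bound the two terms separately.
\begin{itemize}
\item For the second term, Lemma~\ref{lem:usefulFactLewis} (Lemma 33) at the point $x_t$ gives $\norm{A_{x_t}h}_\infty\le\sqrt2\,m^{\frac1{p+2}}\norm h_{\theta_{x_t}}$. Here the metric in that lemma is exactly $A_{x_t}^{\T}W_{x_t}A_{x_t}$.
\item For the first term, Lemma~\ref{lem:LS-comp-tool} (Lemma 34) at $x_t$ gives $\norm{W_{x_t}^{-1}w'_{x_t,h}}_\infty\le p\bigl(\sqrt2\,m^{\frac1{p+2}}+p/2\bigr)\norm h_{\theta_{x_t}}$.
\end{itemize}
Summing yields
\[
\norm{z_{t,\alpha}}_\infty\le\Bigl(\sqrt2\,(1+|\alpha|p)\,m^{\frac1{p+2}}+|\alpha|\,\frac{p^2}{2}\Bigr)\norm h_{\theta_{x_t}} .
\]
Since $p^2/2\le p\max(1,p/2)$, this implies the stated bound.

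There is no real obstacle here. The only points needing care are the following:
\begin{itemize}
\item The time derivative of $w_t$ coincides with the directional derivative $\Dd w_{x_t}[h]$ at the moving point. This requires differentiability of the Lewis weights in $x$, which is implicit in Lemma~\ref{lem:DWh}.
\item Both cited estimates are pointwise in the base point, so they may be applied at $x_t$ rather than at $x$. This explains why the right-hand side involves $\theta_{x_t}$ and not $\theta_x$.
\end{itemize}
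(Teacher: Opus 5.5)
Your proof is correct. The identity $z_{t,\alpha}=\alpha W_{x_t}^{-1}w'_{x_t,h}-A_{x_t}h$, followed by the triangle inequality and the two $\ell_\infty$ estimates of Lemma~\ref{lem:usefulFactLewis} (Lemma 33) and Lemma~\ref{lem:LS-comp-tool} (Lemma 34), both applied at the moving point $x_t$, is the standard argument; the paper does not reprove this lemma and only cites it from \cite{LS19solving}, so there is no in-paper proof to compare against. Your resulting constant $|\alpha|p^{2}/2$ is dominated by $p|\alpha|\max(1,p/2)$ for every $p>0$, and the restriction $p>2$ under which the quoted form of Lemma 34 is stated is part of the setting here.
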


\section{Average self-concordance for the LS metric\label{sec:ls-asc}}

The main estimate is the following ASC bound for the LS metric $g_{0}$.
\begin{prop}
\label{prop:baseline-asc} Given $\veps>0$, choose $R_{\veps}=\widetilde{\Theta}(d^{-1/8})>0$
so that 
\[
R_{\veps}+R_{\veps}^{2}+R_{\veps}^{4}d^{1/2}\leq\frac{\veps}{\polylog(m/\veps)}\,.
\]
Then, for any $x\in\inter\K$, the following ASC estimate holds whenever
$0<r\leq R_{\veps}$: 
\[
\P_{z\sim\msf N_{g_{0}}^{r}(x)}\bpar{\abs{\norm{z-x}_{g_{0}(z)}^{2}-\norm{z-x}_{g_{0}(x)}^{2}}\leq\frac{2\veps r^{2}}{d}}\geq1-\veps\,.
\]
\end{prop}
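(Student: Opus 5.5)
By affine invariance, we fix $x=0$ and change coordinates so that $g_{0}(x)=I_{d}$. We write the proposal as $z=\eta h$ with $h\sim\msf N(0,I_{d})$ and $\eta=r/\sqrt{d}$. Along the path $x_{t}=th$, set $F(t):=h^{\T}g_{0}(x_{t})h=\norm{u_{t}}^{2}$. Then $\norm{z-x}_{g_{0}(z)}^{2}-\norm{z-x}_{g_{0}(x)}^{2}=\eta^{2}(F(\eta)-F(0))$, so it suffices to show $|F(\eta)-F(0)|\le2\veps$ with probability at least $1-\veps$.

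First, we define a good event $\mc E$ on $h$ of probability at least $1-\veps/2$. On $\mc E$ we require that $z\in\inter\K$ and that the following pathwise estimates hold uniformly in $t\in[0,\eta]$:
\begin{itemize}
\item slacks and Lewis weights stay within constant factors of their base values;
\item $\norm{q_{t}},\norm{v_{t}},\norm{q_{t}'},\norm{v_{t}'}\lesssim\sqrt d$ up to polylogarithmic factors;
\item $\norm{N_{t}^{(j)}}\lesssim d^{j/2}$ for $j\le3$.
\end{itemize}
The base-time bound $\norm{A_{0}h}_{\infty}=\Otilde(1)$ with high probability follows from Gaussian tails, using $\norm{A_{0}h}_{W}=\norm h$ and the leverage bound of Lemma~\ref{lem:usefulFactLewis}. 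We propagate this to all $t\le\eta$ with Lemma~\ref{lem:weight-close}: since $\eta\norm h\approx r\le1$, integrating the log-derivative bounds keeps everything within constant factors. The derivatives of $N_{t}$ are obtained from the moving orthonormal frame $U_{t}$ for $\cols(W_{t}^{\beta}A_{t})$ with $U_{t}^{\T}U_{t}'=0$. We write $\bar\Lambda_{t}=I-\Phi_{t}\Phi_{t}^{\T}$ and differentiate $N_{t}=2\bar\Lambda_{t}(I-c_{p}\bar\Lambda_{t})^{-1}$ up to three times. Each derivative of $U_{t}$ costs a factor $\norm{s_{t}}_{\infty}$ or $\norm{s_{t}}_{2}\sim\sqrt d$.

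Next, we use the selective expansion. From $F'=-2\beta H_{1}-2E_{1}$ and $H_{k}'=G_{k+1}+H_{k+1}$, Taylor's theorem applied to $E_{1}$ once and to $H_{1}$ through $H_{3}$ gives the bound \eqref{eq:taylor-intro}. On $\mc E$ we then establish, up to polylogarithmic factors,
\[
|E_{1}'|,|G_{2}|\lesssim d,\quad|G_{3}|\lesssim d^{3/2},\quad|G_{4}|\lesssim d^{2},\quad|H_{4}|\le\norm{q_{t}}\norm{N_{t}'''}\norm{v_{t}}\lesssim d^{5/2}.
\]
These come from product-rule expansions combined with the pathwise bounds of $\mc E$. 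Multiplied by $\eta^{2},\eta^{3},\eta^{4}$, they contribute $O(r^{2}+r^{3}+r^{4}d^{1/2})$.

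For the base-point terms we use $L^{2}$ bounds followed by Gaussian polynomial concentration (Lemma~\ref{lem:conc-gaussian-poly}). This costs a factor $\polylog(1/\veps)$ at failure probability $\veps/2$.
\begin{itemize}
\item For the cubic term $E_{1}(0)+\beta H_{1}(0)$, the Gaussian Poincar\'e inequality gives $L^{2}$ norm $\lesssim\sqrt d$. The gradient is a quadratic form whose expected squared norm is controlled by $P^{\circ2}\preceq W$ and $N\preceq pI$.
\item For $H_{2}(0)$ and $H_{3}(0)$, we expand in Wiener chaoses via MSIs. For example, $H_{2}(0)=\sum_{i}w_{i}^{1/2}(1+I_{2}(K_{i}))(\tr S_{i}+I_{2}(S_{i}))$. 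Applying the product formula yields $I_{4}(T_{4})+I_{2}(T_{2})+T_{0}$, and by orthogonality and the tensor isometry the $L^{2}$ norm reduces to Frobenius norms. We bound these by $d$ for $H_{2}$ and $d^{3/2}$ for $H_{3}$, using $P^{\circ2}\preceq W$ and the derivative bounds on $N$ (Lemma~\ref{lem:LS-comp-tool}).
\end{itemize}
Hence the base-point contributions are $\eta\sqrt d+\eta^{2}d+\eta^{3}d^{3/2}=O(r+r^{2}+r^{3})$.

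Combining these, on an event of probability at least $1-\veps$ we get $|F(\eta)-F(0)|\le\polylog(m/\veps)(r+r^{2}+r^{4}d^{1/2})$; the $r^{3}$ term is dominated by $r+r^{2}$ for $r\le1$. The choice of $R_{\veps}$ makes this at most $2\veps$.

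I expect the main obstacle to be the quintic base-point bound $\norm{H_{3}(0)}_{L^{2}}\lesssim d^{3/2}$. It requires an explicit expression for $N'$ and $N''$ at $x$ as Gaussian polynomials, and control of the Frobenius norms of every contraction produced by the product formula. A second difficulty is the third-derivative pathwise bound $\norm{N_{t}'''}\lesssim d^{3/2}$. For both, I would first try to express everything through the frame calculus and reuse $P^{\circ2}\preceq W$ systematically.
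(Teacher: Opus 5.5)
Your outline follows the paper's architecture: affine normalization, the recursion $H_k'=G_{k+1}+H_{k+1}$, a frame-based good event, Poincar\'e for the cubic term, and Wiener chaos for $H_2(0)$ and $H_3(0)$. The pathwise part is fine up to bookkeeping: the factor paid per derivative of $U_t$ is $\norm{z_t}_\infty$ or $\norm{z_t}_{W_t}$, both $\widetilde O(\sqrt d)$ because of $\ell_t'$, not $\norm{s_t}_\infty=\widetilde O(1)$.

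The gap is in the base-point quartic and quintic estimates, which you say follow from $P^{\circ2}\preceq W$ and the derivative bounds of Lemma~\ref{lem:LS-comp-tool}. Those are operator-norm bounds, $\norm{N_k}\lesssim1$ for $N_k=\Dd N[e_k]$, and they only give $\sum_k\norm{N_kB}_{\frob}^2\lesssim d^2$. Then $\sum_i(\tr S_i)^2\lesssim d^3$ and $\sum_iw_i^{1/2}\norm{S_i}_{\frob}\lesssim d^{3/2}$, so $\norm{C_2}_{\frob}$ is only $O(d^{3/2})$. You need the Frobenius bound $\norm{\Dd N[r]}_{\frob}\lesssim\norm{z[r]}_W\lesssim\norm r$, obtained by running the frame calculus in every coordinate direction at the base point (Lemma~\ref{lem:mcE1}). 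For $H_3(0)$ you also need the two-parameter normal frame of Lemma~\ref{lem:normal-frame-calculus} to get $\sum_{a,b}\norm{N_{ab}B}_{\frob}^2\lesssim d^2$. Pathwise bounds along the single direction $h$ do not produce these sums.

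More fundamentally, even with these Frobenius sums, the constant chaos $F_0=\E H_2(0)$ contains $\sum_iw_i^{1/2}\tr S_i=\sum_k(w^{1/2})^{\T}N_kBe_k$. Cauchy--Schwarz with $\norm{w^{1/2}}=\sqrt d$ bounds this only by $O(d^{3/2})$. That contributes $\eta^2d^{3/2}=r^2d^{1/2}$ to $\abs{F(\eta)-F(0)}$, forcing $r\lesssim d^{-1/4}$ and erasing the whole improvement.

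The missing idea is the kernel identity $P^{\circ2}\ind=w$, i.e.\ $\bar\Lambda w^{1/2}=Nw^{1/2}=0$. Differentiating it gives $N_aw^{1/2}=\half N^2Be_a$, hence $\sum_iw_i^{1/2}S_i=\half B^{\T}N^2B$ and $\sum_iw_i^{1/2}\tr S_i\lesssim d$ (Lemma~\ref{lem:Nr-cancellation}). Differentiating once more expresses $N_{ab}w^{1/2}$ through $N_a$, $N_b$, $N$ and $\Dd B$. This is what yields $\norm{\sum_iw_i^{1/2}\tr C_i}^2\lesssim d^3$ in the first chaos of $H_3(0)$. Without it that term is only $O(d^4)$, so $\norm{H_3(0)}_{L^2}$ is only $O(d^2)$.

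The order relation $P^{\circ2}\preceq W$ that you cite handles the kernel-weighted sums such as $\norm{T_4}_{\frob}$. Positive semidefiniteness of $W-P^{\circ2}$ says nothing about its kernel, so it cannot supply these mean-direction cancellations.
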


\subsection{Proof sketch}

The LS metric and ASC quantity are affine-invariant, so we may fix
$x=0$ and $g_{0}(x)=I_{d}$. Let $F(t):=h^{\T}g_{0}(x+th)h$ for
$0\leq t\leq\eta:=r/d^{1/2}$ and $h\sim\msf N(0,I_{d})$ . Then,
the ASC estimate follows from $\abs{F(\eta)-F(0)}\leq2\veps$ with
high probability. As in previous works, we expand $F(t)$ at $t=0$
and present high-probability bounds. To explain the proof idea, we
describe approaches in prior works and then introduce notation used
throughout in the computation.

\paragraph{Prior approaches.}

Earlier works expand $F(t)$ to second-order derivatives, and bound
$\abs{F'(0)}\lesssim d^{1/2}$ and $\abs{F''(t)}\lesssim d^{3/2}$
(w.h.p.), respectively. Since $\eta=r/d^{1/2}$, the second-order
term $\eta^{2}F''$ is of order $d^{1/2}$. This $d^{1/2}$-factor
forces previous analyses to scale up their base Lewis metric by $d^{1/2}$,
thereby incurring the previously best $d^{2.5}$-mixing of the $\dw$.

A natural idea is to expand the second-order term further and to attempt
to bound $\abs{F''(0)}\lesssim d$. The trouble is that $F''(0)$
already has $6$ distinct terms, and bounding some of them requires
intensive $L^{2}$-norm estimations of Gaussian polynomials. Even
worse, additional expansion requires bounds on $F'''(t)$, but the
number of distinct terms in $F'''$ dramatically increases, and Lewis-weight
calculus was established up to second order in \cite{LS19solving}.
Taken together, these obstacles make it daunting to improve the mixing
guarantee of the $\dw$ with Lewis weights, which has remained at
$\Otilde(d^{2.5})$ since \cite{CDW18MCMC,KV24ipm,JC24regularized}.

\paragraph{Refined approach.}

Define $\alpha:=1-\frac{2}{p}$ and $\beta:=\frac{\alpha}{2}=\half-\frac{1}{p}$.
Along $x_{t}:=x+th$, define $A_{t}:=A_{x_{t}}$, $W_{t}:=W_{x_{t}}$,
$N_{t}:=N_{x_{t}}$, and 
\[
s_{t}:=A_{t}h\,,\qquad u_{t}:=W_{t}^{\beta}s_{t}\,,\qquad v_{t}:=W_{t}^{1/2}s_{t}\,,\qquad\rho_{t}:=W_{t}^{-1/p}s_{t}\,,\qquad q_{t}:=W_{t}^{-1/2}u_{t}^{\circ2}\,.
\]
By Lemma~\ref{lem:DWh}, $W_{t}'=-\Diag(W_{t}^{1/2}N_{t}W_{t}^{1/2}s_{t})$.
For $\ell_{t}:=\log w_{t}$ (elementwise), we have $\ell_{t}'=-W_{t}^{-1/2}N_{t}v_{t}$.
We also define $z_{t}:=\beta\ell_{t}'-s_{t}$ and $Z_{t}:=\Diag z_{t}$.

With the notation above, we have $F(t)=\norm{u_{t}}^{2}$. Since $s_{t}'=-s_{t}^{\circ2}$,
we have $u_{t}'=Z_{t}u_{t}$. Therefore, 
\[
F'(t)=2\inner{u_{t},Z_{t}u_{t}}=2\beta\inner{u_{t}^{\circ2},\ell_{t}'}-2\inner{u_{t}^{\circ2},s_{t}}=-2\beta q_{t}^{\T}N_{t}v_{t}-2\inner{s_{t},W_{t}^{\alpha}s_{t}^{\circ2}}=:-2\beta H_{1}(t)-2E_{1}(t)\,.
\]
Unlike earlier works, we group terms in a better way, and isolate
and expand only ``bottleneck'' terms in $F'(t)$ that are not controlled
directly on a high-probability event. Since $F(\eta)-F(0)=\int_{0}^{\eta}F'(t)\,\D t$,
we may want to bound $H_{1}(t)$ and $E_{1}(t)$ by $O(\eta^{-1})=O(d^{1/2})$(w.h.p.),
but, as shown below, the argument can only afford $O(d)$ bounds,
so in the first-order term, both $E_{1}(t)$ and $H_{1}(t)$ are problematic.
This leads us to differentiate both terms further. 

To present our approach in an organized way, let us define
\[
H_{k}(t):=q_{t}^{\T}N_{t}^{(k-1)}v_{t}\quad\text{for }k\geq1\,,
\]
where $N_{t}^{(j)}$ denotes the $j$-th derivative along the path
$x_{t}=x+th$. Then, we can write 
\begin{equation}
H_{k}'(t)=G_{k+1}(t)+H_{k+1}(t)\,,\qquad G_{k+1}(t):=(q_{t}')^{\T}N_{t}^{(k-1)}v_{t}+q_{t}^{\T}N_{t}^{(k-1)}v_{t}'\,.\label{eq:Gk-defn}
\end{equation}

Additional expansion produces base-point Gaussian polynomials $\eta\,(E_{1}(0)+H_{1}(0))$
along with pathwise terms of order $O(\eta^{2}\,\{E_{1}'(t)+H_{1}'(t)\})$.
In contrast to $E_{1}(t)$ and $H_{1}(t)$, the base-point terms afford
an additional $d^{1/2}$-reduction in its order through the following
standard concentration estimate for Gaussian polynomials. 
\begin{lem}[Concentration of Gaussian polynomials]
\label{lem:conc-gaussian-poly} For $d\geq1$, let $P:\Rd\to\R$
be a polynomial of degree $n$. For any $t\geq(2e)^{n/2}$, 
\[
\P_{h\sim\msf N(0,I_{d})}\bpar{|P(h)|\geq t\sqrt{\E[P(h)^{2}]}}\leq\exp\bpar{-\frac{nt^{2/n}}{2e}}\,.
\]
\end{lem}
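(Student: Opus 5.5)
The plan is to combine Gaussian hypercontractivity with Markov's inequality applied to a high moment, and then optimize over the moment order. The only input we need is the standard moment comparison for polynomials of Gaussians. It follows from the hypercontractivity of the Ornstein--Uhlenbeck semigroup, via Nelson's theorem applied to the Hermite (Wiener chaos) decomposition of $P$ into components of degree at most $n$:
\[
\bpar{\E[|P(h)|^{q}]}^{1/q}\le(q-1)^{n/2}\bpar{\E[P(h)^{2}]}^{1/2}\qquad\text{for all }q\ge2\,.
\]
To derive it, write $P=\sum_{k\le n}J_{k}P$ with $J_{k}P$ the $k$-th chaos component. Since $T_{s}J_{k}P=e^{-ks}J_{k}P$, applying Nelson's bound $\norm{T_{s}f}_{q}\le\norm f_{2}$ with $e^{2s}=q-1$ to $f=\sum_{k}e^{ks}J_{k}P$ yields the constant $e^{ns}=(q-1)^{n/2}$, by orthogonality of the chaoses. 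We may take this inequality as standard.

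Next apply Markov's inequality to $|P|^{q}$. Writing $\sigma:=\sqrt{\E[P(h)^{2}]}$, we get
\[
\P(|P(h)|\ge t\sigma)\le\frac{\E|P(h)|^{q}}{t^{q}\sigma^{q}}\le\Bpar{\frac{(q-1)^{n/2}}{t}}^{q}\le\Bpar{\frac{q^{n/2}}{t}}^{q}\,.
\]

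Now choose $q:=t^{2/n}/e$. Then $q^{n/2}=t\,e^{-n/2}$, so the right-hand side becomes $(e^{-n/2})^{q}=\exp(-nq/2)=\exp\bpar{-\frac{nt^{2/n}}{2e}}$, which is exactly the claimed tail. The only constraint is $q\ge2$, which is needed for hypercontractivity to apply. This constraint is equivalent to $t^{2/n}\ge2e$, i.e.\ $t\ge(2e)^{n/2}$, precisely the hypothesis in the statement. There is no real obstacle here: the only nontrivial ingredient is the hypercontractive moment bound, which is classical. The remaining care is to keep the constant $(q-1)^{n/2}$ rather than a cruder one, so that the threshold and the exponent come out exactly as stated. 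The case $\sigma=0$ is trivial, since then $P\equiv0$ almost surely.
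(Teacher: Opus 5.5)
The paper gives no proof of this lemma (it is quoted as a standard estimate), and your argument is exactly the standard proof and is correct: the hypercontractive bound $\norm{P}_{L^{q}}\le(q-1)^{n/2}\norm{P}_{L^{2}}$, then Markov's inequality at $q=t^{2/n}/e$, where $q\ge2$ is precisely the condition $t\ge(2e)^{n/2}$. One small correction: the case $\sigma=0$ is not ``trivial'' in the sense you suggest, since then $\P(|P(h)|\ge0)=1$ and the bound fails; rather, it cannot occur, because a polynomial of degree $n$ is not identically zero, so it vanishes only on a Lebesgue-null set and $\E[P(h)^{2}]>0$.
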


Namely, $E_{1}(0)$ and $H_{1}(0)$ are bounded by their $L^{2}$-norm
with high probability. As seen shortly, their $L^{2}$-norms are $O(d^{1/2})$,
so the first-order term $\eta\,(E_{1}(0)+H_{1}(0))$ is bounded as
$O(1)$. We next consider the second-order term $\eta^{2}\,(E_{1}'+H_{1}')=\eta^{2}\,(E_{1}'+G_{2}+H_{2})$.
Fortunately, $E_{1}'(t)$ and $G_{2}(t)$ are $O(d)=O(\eta^{-2})$
in a good high-probability region, so these terms can be handled directly.
However, the bottleneck in prior analyses is that $H_{2}(t)$ was
only bounded by $O(d^{3/2})$. This extra $d^{1/2}$-factor forces
prior analyses to scale up their base Lewis metric by $d^{1/2}$,
thereby incurring the previously best $d^{2.5}$-mixing of the $\dw$.

We find that $H_{k}(t)$ is indeed a bottleneck term in our argument,
so expand only this term up to fourth-order. Similar to the first
and second-order terms, the base-point Gaussian polynomial $H_{k+1}(0)$
is controlled in $L^{2}$ and bounded by Lemma~\ref{lem:conc-gaussian-poly},
and the pathwise term $G_{k+1}$ and the terminal term $H_{4}$ are
bounded deterministically on the good event in Lemma~\ref{lem:good-term-bounds}:

\begin{align}
\abs{F(\eta)-F(0)} & \lesssim\eta\,\abs{E_{1}(0)+\beta H_{1}(0)}+\eta^{2}\bpar{\sup_{t\leq\eta}\abs{E_{1}'(t)}+\sup_{t\leq\eta}\abs{G_{2}(t)}+\abs{H_{2}(0)}}\nonumber \\
 & \qquad+\eta^{3}\bpar{\sup_{t\leq\eta}\abs{G_{3}(t)}+\abs{H_{3}(0)}}+\eta^{4}\sup_{t\leq\eta}\bpar{\abs{G_{4}(t)}+\abs{H_{4}(t)}}\,.\label{eq:hard-recursion}
\end{align}

The rest of the proof has two independent tasks: (1) pathwise bounds
on the good event, and (2) $L^{2}$-norm estimates of the base-point
terms $E_{1}(0)+\beta H_{1}(0)$, $H_{2}(0)$, and $H_{3}(0)$. For
the first task, we establish higher-order Lewis-weight calculus. As
seen later in Lemma~\ref{lem:good-term-bounds}, isolation of $G_{k}$
and $H_{k}$ in $H_{k-1}'$ proves fruitful, since a high-probability
bound for $G_{k}$ follows \emph{inductively}, rather than via ad
hoc computation.

Regarding the second task, $L^{2}$-norm estimation of Gaussian polynomials
$F'(0)$ was arguably the most technical part in prior works. We streamline
this using the Gaussian Poincar\'e inequality. As for $H_{2}(0)$
and $H_{3}(0)$, these are Gaussian polynomials of degree $4$ and
$5$, so we would need to compute the expectation of Gaussian polynomials
of degree $8$ and $10$ for their $L^{2}$-norms. As far as we can
tell, these are technically the most difficult parts and require principled
approaches to go beyond the previous work.

To this end, we work with a decomposition of $H_{2}(0),H_{3}(0)$
into \emph{Hermite polynomials}. Since Hermite polynomials are orthogonal
in $L^{2}(\gamma)$, their $L^{2}$-norm estimation can be simplified.
Furthermore, this toolkit can be handled in a principled way via a
\emph{multiple stochastic integral} (MSI) (see \S\ref{subsec:MSI}),
combined with Wiener chaos decomposition. Known results on MSI substantially
streamline $L^{2}$-norm estimation, which were previously almost
infeasible without it.

\subsection{Pathwise bound}

We establish higher-order calculus of Lewis weights as follows. 
\begin{lem}[Good event]
\label{lem:good-event} In the affine normalization $x=0$ and $g_{0}(x)=I_{d}$,
let $h\sim\msf N(0,I_{d})$, $x_{t}=x+th$, $0<r\leq R_{\veps}=\widetilde{\Theta}(d^{-1/8})$,
and $\eta=r/\sqrt{d}$. There exists a good event $\mc E$ of $\P(\mc E)\geq1-\veps/20$
on which for all $t\in[0,\eta]$,
\begin{align*}
\norm{\rho_{t}}_{\infty},\norm{s_{t}}_{\infty} & \lesssim1\\
\norm{u_{t}},\norm{v_{t}},\norm{N_{t}v_{t}},\norm{q_{t}},\norm{q_{t}'},\norm{v_{t}'},\norm{z_{t}}_{\infty},\norm{z_{t}}_{W_{t}},\norm{N_{t}'} & \lesssim d^{1/2}\\
\norm{z_{t}'}_{W_{t}},\norm{v_{t}''},\norm{N_{t}''} & \lesssim d\\
\norm{z_{t}''}_{W_{t}},\norm{N_{t}'''} & \lesssim d^{3/2}\,.
\end{align*}
\end{lem}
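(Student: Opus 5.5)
The plan has three stages. First, build the good event from base-point Gaussian bounds. Second, propagate those bounds along the segment $[0,\eta]$ with the closeness lemma. Third, obtain the derivative bounds from a moving-frame calculus.

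\textbf{Base-point event.} At $x=0$ with $g_{0}(0)=I_{d}$, the vector $s_{0}=A_{0}h$ is Gaussian with covariance $A_{0}A_{0}^{\T}$. By Lemma~\ref{lem:usefulFactLewis} (Lemma~33), each coordinate $(s_{0})_{i}$ has variance $\norm{a_{i}}_{g_{0}^{-1}}^{2}\lesssim m^{2/(p+2)}=O(1)$, since $p\asymp\polylog m$.
\begin{itemize}
\item A union bound over $m$ rows gives $\norm{s_{0}}_{\infty}\lesssim\sqrt{\log(m/\veps)}$, which is $O(1)$ after suppressing polylogarithmic factors.
\item Since $\norm{h}^{2}\le d+O(\sqrt{d\log(1/\veps)})$ with high probability, the identity $\norm{u_{0}}=\norm{h}_{g_{0}(0)}=\norm{h}$ gives $\norm{u_{0}}\lesssim d^{1/2}$.
\item Because $W\preceq I$, we have $\norm{v_{0}}\le\norm{u_{0}}$. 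Lemma~\ref{lem:LS-comp-tool} gives $\norm{N_{0}v_{0}}\le p\norm{v_{0}}$.
\item Also $\norm{q_{0}}\le\norm{W^{-1/2+\beta}u_{0}}\,\norm{u_{0}}_{\infty}\lesssim\norm{s}_{\infty}\,\norm{u_{0}}$, because $\beta\cdot 2-1/2\ge$ the right power. This bound uses $W^{2\beta-1/2}=W^{1/2-2/p}$ and $\norm{W^{\beta}s}_{\infty}\lesssim1$.
\end{itemize}
The event $\mc E$ is the intersection of these finitely many polylog-high-probability events, with failure probabilities chosen to total at most $\veps/20$.

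\textbf{Propagation to the segment.} For $t\in[0,\eta]$, Lemma~\ref{lem:weight-close} controls $\frac{\D}{\D t}\log(w_{t,i}^{\alpha}/\sigma_{t,i})$ uniformly by $\widetilde O(\norm{h}_{\theta_{x_{t}}})$. Since $\theta\preceq g_{0}$, this is $\lesssim d^{1/2}$. Integrating over a time $\eta=r/\sqrt d$ gives a total log-change of $\widetilde O(r)=o(1)$. Hence $w_{t}$, $s_{t}$, the slacks and $g_{0}(x_{t})$ are all within constant multiplicative factors of their base-point values, by a continuity (bootstrap) argument that first applies the bounds on the maximal interval where they hold with doubled constants. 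This transfers every zeroth-order bound to all $t\le\eta$. It also gives $\norm{z_{t}}_{\infty}$, because $\ell_{t}'=-W^{-1/2}Nv$ is bounded coordinatewise by Lemma~\ref{lem:LS-comp-tool} (Lemma~34). The bounds $\norm{\rho_{t}}_{\infty}$ and $\norm{z_{t}}_{W_{t}}$ follow the same way.

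\textbf{Higher derivatives via the moving frame.} Write $\widehat B_{t}=W_{t}^{\beta}A_{t}$ and choose an orthonormal frame $U_{t}$ of its column space with $U_{t}^{\T}U_{t}'=0$. Then the projection derivative formula gives $U'=(I-UU^{\T})\widehat B'\widehat B^{+}$, where $\widehat B'=(\beta\Diag\ell'-\Diag s)\widehat B=Z\widehat B$. So $U'=(I-UU^{\T})ZU$ and $\norm{U'}\lesssim\norm{z}_{\infty}$. Row-wise, $\norm{u_{i}'}\lesssim|z_{i}|\,\norm{u_{i}}$ plus a term controlled by $\norm{z}_{W}$, with $\norm{u_{i}}^{2}=w_{i}$.

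Differentiating again expresses $U''$ and $U'''$ through $z,z',z''$ and lower frame derivatives. The rows of $\Phi=(u_{i}\otimes u_{i})/w_{i}^{1/2}$ are smooth in $u_{i}$, and $u\mapsto u\otimes u/\norm u$ is homogeneous of degree one. Hence $\Phi^{(k)}$ has operator norm bounded by sums of products of $\norm{z^{(j)}}_{W}$ and $\norm{z}_{\infty}$ with total weight $d^{k/2}$. Since $\bar\Lambda=I-\Phi\Phi^{\T}$ and $N=2\bar\Lambda(I-c_{p}\bar\Lambda)^{-1}$ with $(I-c_{p}\bar\Lambda)^{-1}\preceq p/2$, each derivative of $N$ is a sum of products of $\bar\Lambda^{(j)}$ and resolvents. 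This yields $\norm{N^{(k)}}\lesssim d^{k/2}$ once $\norm{z^{(j)}}_{W}\lesssim d^{(j+1)/2}$ is known.

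The latter bound is proved inductively. We have $z'=\beta\ell''-s'=\beta\ell''+s^{\circ2}$, and $\ell''=-(W^{-1/2}Nv)'$ involves $N'$, $v'$ and $\ell'$. The induction therefore alternates: bounds on $N^{(k-1)}$ give bounds on $z^{(k-1)}$, which give bounds on $N^{(k)}$. The remaining quantities are handled by the product rule with the $\infty$-bounds:
\begin{itemize}
\item $v'=(\tfrac12\ell'-s)\circ v$, so $\norm{v'}\le\norm{\ell'}_{\infty}\norm v$-type bounds give $d^{1/2}$;
\item $v''$ follows similarly and is of order $d$;
\item $q'$ follows similarly and is of order $d^{1/2}$.
\end{itemize}

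\textbf{Main obstacle.} I expect the hardest step to be the third derivative $N'''$ and $z''$. Here one must avoid losing $m$-factors: the row-map estimates need weighted norms ($\norm{\cdot}_{W}$ with $\sum w_{i}=d$) rather than crude $\infty$-norm products. I would first try to show that every term in $\Phi'''$ is of the form (diagonal with $\infty$-bounded entries)$\cdot\Phi\cdot$(operator of norm $\le d^{3/2}$), charging at most one factor to a $W$-norm.
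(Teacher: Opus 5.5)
Your architecture matches the paper's: a Gaussian event at the base point, transport along $[0,\eta]$ via Lemma~\ref{lem:weight-close}, the frame with $U_t'=P_t^{\perp}Z_tU_t$, the row map, and the alternation $N^{(k-1)}\to z^{(k-1)}\to N^{(k)}$. Your bootstrap is an acceptable substitute for the paper's appeal to self-concordance and symmetry.

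The genuine gap is in the first-order bounds for $v_t'$ and $q_t'$. With your own bounds $\norm{\ell_t'}_{\infty}\lesssim d^{1/2}$ and $\norm{v_t}\lesssim d^{1/2}$, the estimate $\norm{v_t'}\lesssim\norm{\ell_t'}_{\infty}\norm{v_t}$ gives $d$, not $d^{1/2}$. Treating $q_t'=(\frac12-\frac2p)\,\ell_t'\circ q_t-2\,s_t\circ q_t$ the same way gives $\norm{q_t'}\lesssim d$, and the loss propagates into $v_t''$. The missing idea is that the weight in $\ell_t'=-W_t^{-1/2}N_tv_t$ cancels against the weight in the vector it multiplies. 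Since $v_t=W_t^{1/2}s_t$ and $q_t=W_t^{1/2}\rho_t^{\circ2}$, one has $\ell_t'\circ v_t=-s_t\circ(N_tv_t)$ and $\ell_t'\circ q_t=-(N_tv_t)\circ\rho_t^{\circ2}$. This trades the factor $\norm{\ell_t'}_{\infty}\lesssim d^{1/2}$ for $\norm{s_t}_{\infty}$ or $\norm{\rho_t}_{\infty}^{2}$, both $O(1)$, times $\norm{N_tv_t}\lesssim d^{1/2}$. That is how the paper gets $\norm{v_t'},\norm{q_t'}\lesssim d^{1/2}$, and then $\norm{v_t''}\lesssim d$ from an expression for $v_t''$ involving only $s_t$, $v_t$, $v_t'$, $N_t$, $N_t'$. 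This is not cosmetic. Since $G_2=(q_t')^{\T}N_tv_t+q_t^{\T}N_tv_t'$, losing $d^{1/2}$ here makes the second-order term of \eqref{eq:hard-recursion} of size $\eta^2d^{3/2}=r^2d^{1/2}$, which is exactly the old bottleneck.

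A second, related problem is that your base-point event controls $s_0$ but not $\rho_0$. Since $\rho=W^{-1/p}s$ and Lewis weights can be arbitrarily small, $\norm{s_0}_{\infty}\lesssim1$ does not imply $\norm{\rho_0}_{\infty}\lesssim1$. That bound is part of the statement and is exactly what the $q_t'$ cancellation needs. For the same reason, your justification of $\norm{q_0}\lesssim\norm{s_0}_{\infty}\norm{u_0}$ runs the wrong way: $W^{-1/2+\beta}=W^{-1/p}\succeq I$ gives $\norm{W^{-1/p}u_0}\geq\norm{u_0}$. The fix is the paper's observation that, for $U=W_0^{\beta}A_0$ (so $U^{\T}U=I_d$ and $\norm{e_i^{\T}U}^2=w_i$), one has $\rho_{0,i}=w_i^{-1/2}(Uh)_i\sim\mathsf N(0,1)$. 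A union bound then gives $\norm{\rho_0}_{\infty}\lesssim1$, hence $\norm{s_0}_{\infty}\le\norm{\rho_0}_{\infty}$ and $\norm{q_t}^2=\sum_iw_{t,i}\rho_{t,i}^4\le\norm{\rho_t}_{\infty}^2\norm{u_t}^2$.

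Finally, your third-order step is only a plan, and the factorization through $\Phi$ you propose is not needed. The paper bounds $\norm{U_t'''}_{\frob}\lesssim d^{3/2}$ term by term from the explicit formula for $U_t'''$. The row-map estimate together with $\norm{u_{t,i}'}^2\lesssim d\,w_{t,i}$ then bounds $\sum_i\norm{u_{t,i}'}^6/w_{t,i}^2$ and $\sum_i\norm{u_{t,i}'}^2\norm{u_{t,i}''}^2/w_{t,i}$ by $d^3$. This gives $\norm{\Phi_t'''}_{\frob}\lesssim d^{3/2}$, and hence $\norm{N_t'''}\lesssim d^{3/2}$.
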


\begin{proof}
All estimates below are uniform for $t\in[0,\eta]$. Note that $\rho_{0}=W_{0}^{-1/2}u_{0}$,
and define $U:=W_{0}^{\beta}A_{0}$. Then, $U^{\T}U=I_{d}$ and $UU^{\T}=P$
with $P_{ii}=w_{i}$. Hence, 
\[
\rho_{0,i}=w_{i}^{-1/2}(Uh)_{i}\sim\msf N(0,1)\,.
\]
Using the standard bound $\norm h\lesssim d^{1/2}$ and a union bound,
we have that with probability at least $1-\veps/20$, 
\[
\norm h\lesssim d^{1/2}\,,\qquad\norm{\rho_{0}}_{\infty}\lesssim1
\]
after the hidden logarithmic factor in $\lesssim$ is chosen as a
function of $m$ and $\veps$. Hereafter, we always work on this event.
For $L\asymp d^{1/4}\polylog(md)$, we note that $\norm{x_{t}-x}_{g}=\norm{x_{t}-x}_{Lg_{0}}\leq rL^{1/2}\cdot\norm h/d^{1/2}$,
so by reducing a constant scale of $r$, we can enforce $\norm{x_{t}-x}_{g}\leq1$
on this event. By the $\onu$-symmetry \eqref{eq:symmetry}, the Dikin
ellipsoid of radius $1$ with respect to $g$ is contained $\K$,
so the path is contained in $\K$. 

By direct computation, for $\sigma_{t,i}:=a_{i}^{\T}x_{t}-b_{i}$,
\begin{equation}
\frac{\D}{\D t}\log\sigma_{t,i}^{-1}=-s_{t,i}\,,\qquad\frac{\D}{\D t}\log\abs{\rho_{t,i}}=-\frac{1}{p}\,\ell_{t,i}'-s_{t,i}\,.\label{eq:sigma-rho-derv}
\end{equation}
This derivative is bounded by $\norm h_{A_{t}^{\T}W_{t}A_{t}}$ (up
to logarithmic factors) by Lemma~\ref{lem:weight-close}, applied
with the exponents $0$ and $-1/p$. Since $\norm{x_{t}-x}_{g_{0}(x)}=t\norm h\leq\eta\norm h\lesssim r\,\polylog(m/\veps)$,
the choice $r\leq R_{\veps}$ makes the last quantity a sufficiently
small absolute constant. By the self-concordance of the standard LS
metric (after absorbing some polylogarithmic scalar relating it to
$g_{0}$ into the suppressed logarithmic factors), uniformly for $t\in[0,\eta]$,
\[
\norm h_{g_{0}(x_{t})}\lesssim\norm h_{g_{0}(x)}=\norm h\lesssim d^{1/2}\,.
\]
Since $A_{t}^{\T}W_{t}A_{t}\preceq g_{0}(x_{t})$, the RHS in Lemma~\ref{lem:weight-close}
is a polylogarithmic multiple of $\norm h_{g_{0}(x_{t})}$, hence
$\widetilde{O}(d^{1/2})$ on the whole interval $[0,\eta]$. Thus,
uniformly over this interval, $\norm{\rho_{t}}_{\infty}\lesssim1$
and $\norm{s_{t}}_{\infty}\lesssim1$. Moreover, as $z_{t}=\beta\ell_{t}'-s_{t}$,
the derivative of $\log\rho_{t,i}$ above implies that $\norm{\ell_{t}'}_{\infty},\norm{z_{t}}_{\infty}\lesssim d^{1/2}$.

Using these stability results,
\[
\norm{u_{t}}^{2}=\sum_{i}w_{t,i}\rho_{t,i}^{2}\lesssim\sum_{i}w_{t,i}=d\,,\qquad\norm{v_{t}}=\norm{W_{t}^{1/2}s_{t}}=\norm{W_{t}^{1/p}u_{t}}\leq\norm{u_{t}}\lesssim d^{1/2}\,.
\]
By Lemma~\ref{lem:LS-comp-tool}, $\norm{N_{t}}\lesssim1$ up to
polylogarithmic factors, so $\norm{N_{t}v_{t}}\lesssim d^{1/2}$.
Also, 
\begin{equation}
\norm{q_{t}}^{2}=\sum_{i}w_{t,i}\rho_{t,i}^{4}\leq\norm{\rho_{t}}_{\infty}^{4}d\lesssim d\,,\qquad\norm{z_{t}}_{W_{t}}=\norm{\beta W_{t}^{1/2}\ell_{t}'-W_{t}^{1/2}s_{t}}=\norm{-\beta N_{t}v_{t}-v_{t}}\lesssim d^{1/2}\,.\label{eq:qt-ztWt}
\end{equation}

Next, since 
\begin{equation}
v_{t}'=(\half\,\ell_{t}'-s_{t})\circ v_{t}=-s_{t}\circ v_{t}-\half\,(N_{t}v_{t})\circ s_{t}\,,\label{eq:vt'}
\end{equation}
we have 
\[
\norm{v_{t}'}\lesssim\norm{s_{t}}_{\infty}\,(\norm{v_{t}}+\norm{N_{t}v_{t}})\lesssim d^{1/2}\,.
\]
In $W_{t}^{1/2}z_{t}'=-\frac{\beta}{2}\,W_{t}^{1/2}(\ell_{t}')^{\circ2}-\beta N_{t}'v_{t}-\beta N_{t}v_{t}'+W_{t}^{1/2}s_{t}^{\circ2}$,
the first term is bounded as 
\begin{equation}
\norm{W_{t}^{1/2}(\ell_{t}')^{\circ2}}\leq\norm{\ell_{t}'}_{\infty}\,\norm{W_{t}^{1/2}\ell_{t}'}=\norm{\ell_{t}'}_{\infty}\,\norm{N_{t}v_{t}}\lesssim d\,.\label{eq:Wtellt'2}
\end{equation}
The $N_{t}'v_{t}$ term is $O(d)$, since $\norm{N_{t}'}\lesssim d^{1/2}$
due to Lemma~\ref{lem:LS-comp-tool}-3 and self-concordance of $g_{0}$.
The last two terms are $O(d^{1/2})$ in norm. Thus, $\norm{z_{t}'}_{W_{t}}\lesssim d$.

Differentiating $q_{t}=W_{t}^{-1/2}u_{t}^{\circ2}$ and using $u_{t}'=Z_{t}u_{t}=(\beta\ell_{t}'-s_{t})\circ u_{t}$
\[
q_{t}'=-\half\,\ell_{t}'\circ q_{t}+2W_{t}^{-1/2}(u_{t}\circ u_{t}')=-\half\,\ell_{t}'\circ q_{t}+2\beta\,(\ell_{t}'\circ q_{t})-2\,(s_{t}\circ q_{t})=\bpar{\half-\frac{2}{p}}\,\ell_{t}'\circ q_{t}-2\,(s_{t}\circ q_{t})\,.
\]
Since $\ell_{t}'\circ q_{t}=-(N_{t}v_{t})\circ\rho_{t}^{\circ2}$,
\[
\norm{q_{t}'}\lesssim\norm{s_{t}}_{\infty}\,\norm{q_{t}}+\norm{\rho_{t}}_{\infty}^{2}\,\norm{N_{t}v_{t}}\lesssim d^{1/2}\,.
\]

For $N_{t}$, let $\widehat{B}_{t}=W_{t}^{\beta}A_{t}$, choose a
smooth orthonormal basis $U_{t}$ of $\cols\widehat{B}_{t}$, and
define $P_{t}:=U_{t}U_{t}^{\T}$ and $P_{t}^{\perp}:=I-P_{t}$. For
$P_{t}^{(2)}:=P_{t}^{\circ2}$, recall 
\[
N_{t}=2\bar{\Lambda}_{t}(I-c_{p}\bar{\Lambda}_{t})^{-1}\,,\qquad\bar{\Lambda}_{t}=I-W_{t}^{-1/2}P_{t}^{(2)}W_{t}^{-1/2}\,.
\]
Denote the $i$-th row of $U_{t}$ by $u_{t,i}$, and define the $i$-th
row of $\Phi_{t}$ by $\phi_{t,i}:=w_{t,i}^{-1/2}(u_{t,i}\otimes u_{t,i})$.
Then $\Phi_{t}\Phi_{t}^{\T}=W_{t}^{-1/2}P_{t}^{(2)}W_{t}^{-1/2}$,
so $\bar{\Lambda}_{t}=I-\Phi_{t}\Phi_{t}^{\T}$. By \eqref{eq:lewisBasic-PWI},
$\norm{\Phi_{t}}\leq1$.

Writing $R_{t}:=(I-c_{p}\bar{\Lambda}_{t})^{-1}$, we have 
\begin{equation}
N_{t}'=2R_{t}\bar{\Lambda}_{t}'R_{t}\,,\qquad N_{t}''=2R_{t}\bar{\Lambda}_{t}''R_{t}+4c_{p}R_{t}\bar{\Lambda}_{t}'R_{t}\bar{\Lambda}_{t}'R_{t}\,.\label{eq:N-derivatives}
\end{equation}
Thus, $\norm{N_{t}'}\lesssim\norm{\bar{\Lambda}_{t}'}$ and $\norm{N_{t}''}\lesssim\norm{\bar{\Lambda}_{t}'}^{2}+\norm{\bar{\Lambda}_{t}''}$.
Also, since $\bar{\Lambda}_{t}=I-\Phi_{t}\Phi_{t}^{\T}$, it holds
that $\norm{\bar{\Lambda}_{t}'}\lesssim\norm{\Phi_{t}'}$ and $\norm{\bar{\Lambda}_{t}''}\lesssim\norm{\Phi_{t}''}+\norm{\Phi_{t}'}^{2}$.
Moreover, the chain rule yields
\begin{equation}
\phi_{i}'(t)=\Dd R(u_{t,i})[u_{t,i}']\,,\qquad\phi_{i}''(t)=\Dd^{2}R(u_{t,i})[u_{t,i}',u_{t,i}']+\Dd R(u_{t,i})[u_{t,i}'']\,.\label{eq:phi-derivatives}
\end{equation}
By Lemma~\ref{lem:row-map}, applied row by row to $\phi_{t,i}=R(u_{t,i})=R(u_{i}(t))$,
\begin{equation}
\norm{\Phi_{t}'}_{\frob}\lesssim\norm{U_{t}'}_{\frob}\,,\qquad\norm{\Phi_{t}''}_{\frob}^{2}\lesssim\norm{U_{t}''}_{\frob}^{2}+\sum_{i}\frac{\norm{u_{t,i}'}^{4}}{w_{t,i}}\,.\label{eq:Phi-derivative-bounds}
\end{equation}

We can write $\widehat{B}_{t}=U_{t}M_{t}$ for invertible $M_{t}$,
so equating $\widehat{B}_{t}'=Z_{t}\widehat{B}_{t}$ (due to $u_{t}'=Z_{t}u_{t}$)
and $\widehat{B}_{t}'=U_{t}'M_{t}+U_{t}M_{t}'$ leads to $U_{t}'=Z_{t}U_{t}-U_{t}M_{t}'M_{t}^{-1}$.
Hence,
\[
U_{t}'=P_{t}^{\perp}U_{t}'+P_{t}U_{t}'=P_{t}^{\perp}Z_{t}U_{t}+U_{t}U_{t}^{\T}U_{t}'\,.
\]
By Lemma~\ref{lem:path-normal-frame}, we can replace $U_{t}$ by
a ``good'' orthonormal basis for $\cols\widehat{B}_{t}$ such that
$U_{t}^{\T}U_{t}'=0$, and this leads to $U_{t}'=P_{t}^{\perp}Z_{t}U_{t}$.
Differentiating this identity gives
\begin{equation}
U_{t}'=P_{t}^{\perp}Z_{t}U_{t}\,,\qquad U_{t}''=-P_{t}'Z_{t}U_{t}+P_{t}^{\perp}Z_{t}'U_{t}+P_{t}^{\perp}Z_{t}U_{t}'\,.\label{eq:Ut-derivatives}
\end{equation}
Using $Z_{t}=\Diag z_{t}$ and $x^{\T}(A\circ B)y=\tr(\Diag(x)A\Diag(y)B^{\T})$,
\[
\norm{U_{t}'}_{\frob}^{2}=z_{t}^{\T}(P_{t}\circ P_{t}^{\perp})z_{t}=z_{t}^{\T}(W_{t}-P_{t}^{\circ2})z_{t}\leq\norm{z_{t}}_{W_{t}}^{2}\lesssim d\,.
\]
For the second derivative in \eqref{eq:Ut-derivatives}, note that
by $P_{t}'=U_{t}'U_{t}^{\T}+U_{t}(U_{t}')^{\T}$ and $\norm{U_{t}}\leq1$,
we have $\norm{P_{t}'}\leq2\norm{U_{t}'}$. Also, $\norm{Z_{t}}=\norm{z_{t}}_{\infty}$.
Thus,
\begin{align}
\norm{P_{t}'Z_{t}U_{t}}_{\frob} & \leq\norm{P_{t}'}\,\norm{Z_{t}U_{t}}_{\frob}=\norm{P_{t}'}\,\norm{z_{t}}_{W_{t}}\lesssim d^{1/2}\cdot d^{1/2}=d\,,\nonumber \\
\norm{P_{t}^{\perp}Z_{t}'U_{t}}_{\frob} & \leq\norm{Z_{t}'U_{t}}_{\frob}=\norm{z_{t}'}_{W_{t}}\lesssim d\,,\label{eq:ptperp-zt-ut}\\
\norm{P_{t}^{\perp}Z_{t}U_{t}'}_{\frob} & \leq\norm{P_{t}^{\perp}}\,\norm{Z_{t}}\,\norm{U_{t}'}_{\frob}\lesssim d\,.\nonumber 
\end{align}
Hence, $\norm{U_{t}''}_{\frob}\lesssim d$.

Finally, from $u_{t,i}'=e_{i}^{\T}(I-P_{t})Z_{t}U_{t}=e_{i}^{\T}Z_{t}U_{t}-e_{i}^{\T}P_{t}Z_{t}U_{t}$
and $\norm{u_{t,i}}^{2}=w_{i}$, we obtain that $\norm{e_{i}^{\T}Z_{t}U_{t}}\lesssim\norm{z_{t}}_{\infty}w_{t,i}^{1/2}$.
Also, $\norm{e_{i}^{\T}P_{t}Z_{t}U_{t}}^{2}=e_{i}^{\T}P_{t}Z_{t}P_{t}Z_{t}P_{t}e_{i}\leq\tr(E_{ii}P_{t}Z_{t}^{2}P_{t})=e_{i}^{\T}P_{t}^{\circ2}z_{t}^{2}\lesssim\norm{z_{t}}_{\infty}^{2}w_{t,i}$.
Thus, 
\begin{equation}
\norm{u_{t,i}'}\lesssim\norm{z_{t}}_{\infty}\,w_{t,i}^{1/2}\lesssim d^{1/2}w_{t,i}^{1/2}\,.\label{eq:u'ti-bound}
\end{equation}
Combining this with $\norm{U_{t}'}_{\frob}^{2}\lesssim d$ gives
\[
\sum_{i}\frac{\norm{u_{t,i}'}^{4}}{w_{t,i}}\lesssim d\,\sum_{i}\norm{u_{t,i}'}^{2}=d\,\norm{U_{t}'}_{\frob}^{2}\lesssim d^{2}\,.
\]
By \eqref{eq:Phi-derivative-bounds}, $\norm{\Phi_{t}'}_{\frob}\lesssim d^{1/2}$
and $\norm{\Phi_{t}''}_{\frob}\lesssim d$. The derivative identities
\eqref{eq:N-derivatives} therefore give $\norm{N_{t}'}\lesssim d^{1/2}$
and $\norm{N_{t}''}\lesssim d$.

For the third-order estimates, differentiating $W_{t}^{1/2}z_{t}'=-\frac{\beta}{2}\,W_{t}^{1/2}(\ell_{t}')^{\circ2}-\beta N_{t}'v_{t}-\beta N_{t}v_{t}'+W_{t}^{1/2}s_{t}^{\circ2}$,
\begin{align*}
W_{t}^{1/2}z_{t}'' & =-\beta W_{t}^{1/2}(\ell_{t}'\circ\ell_{t}'')-\frac{\beta}{4}W_{t}^{1/2}(\ell_{t}')^{\circ3}-\beta N_{t}''v_{t}-2\beta N_{t}'v_{t}'-\beta N_{t}v_{t}''\\
 & \qquad+\frac{1}{2}\,\ell_{t}'\circ W_{t}^{1/2}s_{t}^{\circ2}+2W_{t}^{1/2}(s_{t}\circ s_{t}')-\frac{1}{2}\,\ell_{t}'\circ W_{t}^{1/2}z_{t}'\,.
\end{align*}
Since $\ell_{t}'=-W_{t}^{-1/2}N_{t}v_{t}$, we have $W_{t}^{1/2}\ell_{t}''=\half\,\ell_{t}'\circ N_{t}v_{t}-N_{t}'v_{t}-N_{t}v_{t}'$,
so $\norm{W_{t}^{1/2}\ell_{t}''}\lesssim d$. Hence, the first term
is bounded by $\Otilde(d^{3/2})$. Using $\norm{\ell_{t}'}_{\infty},\norm{v_{t}}\lesssim d^{1/2}$
and $\norm{N_{t}''},\norm{z_{t}'}_{W_{t}}\lesssim d$ with \eqref{eq:Wtellt'2},
we can bound the norms of $N_{t}''v_{t}$, $W_{t}^{1/2}(\ell_{t}'\circ\ell_{t}'')$,
$W_{t}^{1/2}(\ell_{t}')^{\circ3}$, and $\ell_{t}'\circ W_{t}^{1/2}z_{t}'$
by $d^{3/2}$. As for $v_{t}''$, differentiating $v_{t}'$ in \eqref{eq:vt'}
gives 
\[
v_{t}''=s_{t}^{\circ2}\circ\bpar{v_{t}+\half\,N_{t}v_{t}}-s_{t}\circ\bpar{v_{t}'+\half\,N_{t}'v_{t}+\half\,N_{t}v_{t}'}\,.
\]
Using $\norm{s_{t}}_{\infty},\norm{N_{t}}\lesssim1$ and $\norm{v_{t}},\norm{v_{t}'},\norm{N_{t}'}\lesssim d^{1/2}$,
we have $\norm{v_{t}''}\lesssim d$, and $\norm{N_{t}v_{t}''}\lesssim d$.
Lastly, by $s_{t}'=-s_{t}^{\circ2}$ and $\norm{s_{t}}_{\infty}\lesssim1$,
the remaining terms can be bounded as $O(d)$. Hence, $\norm{z_{t}''}_{W_{t}}\lesssim d^{3/2}$. 

We now bound $\norm{N_{t}'''}$. Recall that $P_{t}=U_{t}U_{t}^{\T}$,
$P_{t}'=U_{t}'U_{t}^{\T}+U_{t}(U_{t}')^{\T}$, and $P_{t}''=U_{t}''U_{t}^{\T}+2U_{t}'(U_{t}')^{\T}+U_{t}(U_{t}'')^{\T}$,
so $\norm{P_{t}'}_{\frob}\lesssim d^{1/2}$ and $\norm{P_{t}''}_{\frob}\lesssim d$.
Then, from \eqref{eq:Ut-derivatives}, 
\[
U_{t}'''=-P_{t}''Z_{t}U_{t}-2P_{t}'Z_{t}'U_{t}-2P_{t}'Z_{t}U_{t}'+P_{t}^{\perp}Z_{t}''U_{t}+2P_{t}^{\perp}Z_{t}'U_{t}'+P_{t}^{\perp}Z_{t}U_{t}''\,.
\]
Note the identities 
\[
\norm{Z_{t}U_{t}}_{\frob}=\norm{z_{t}}_{W_{t}}\lesssim d^{1/2}\,,\qquad\norm{Z_{t}'U_{t}}_{\frob}=\norm{z_{t}'}_{W_{t}}\lesssim d\,,\qquad\norm{Z_{t}U_{t}'}_{\frob}\leq\norm{z_{t}}_{\infty}\,\norm{U_{t}'}_{\frob}\lesssim d\,,
\]
and recall $\norm{u_{t,i}'}^{2}\lesssim d\,w_{t,i}$. Using this,
\[
\norm{Z_{t}'U_{t}'}_{\frob}^{2}=\sum_{i}(z_{t,i}')^{2}\norm{u_{t,i}'}^{2}\lesssim d\sum_{i}w_{t,i}(z_{t,i}')^{2}=d\,\norm{z_{t}'}_{W_{t}}^{2}\lesssim d^{3}\,.
\]
Using these bounds together with $\norm{P_{t}'}_{\frob}\lesssim d^{1/2}$,
$\norm{P_{t}''}_{\frob}\lesssim d$, $\norm{z_{t}}_{\infty}\lesssim d^{1/2}$,
and $\norm{U_{t}''}_{\frob}\lesssim d$, we obtain 
\begin{align*}
\norm{P_{t}''Z_{t}U_{t}}_{\frob} & \leq\norm{P_{t}''}\,\norm{Z_{t}U_{t}}_{\frob}\lesssim d^{3/2}\,,\qquad\norm{P_{t}'Z_{t}'U_{t}}_{\frob}\leq\norm{P_{t}'}\,\norm{Z_{t}'U_{t}}_{\frob}\lesssim d^{3/2}\,,\\
\norm{P_{t}'Z_{t}U_{t}'}_{\frob} & \leq\norm{P_{t}'}\,\norm{Z_{t}U_{t}'}_{\frob}\lesssim d^{3/2}\,,\qquad\norm{P_{t}^{\perp}Z_{t}''U_{t}}_{\frob}\leq\norm{Z_{t}''U_{t}}_{\frob}=\norm{z_{t}''}_{W_{t}}\lesssim d^{3/2}\,,\\
\norm{P_{t}^{\perp}Z_{t}'U_{t}'}_{\frob} & \leq\norm{Z_{t}'U_{t}'}_{\frob}\lesssim d^{3/2}\,,\qquad\norm{P_{t}^{\perp}Z_{t}U_{t}''}_{\frob}\leq\norm{z_{t}}_{\infty}\,\norm{U_{t}''}_{\frob}\lesssim d^{3/2}\,.
\end{align*}
Therefore, $\norm{U_{t}'''}_{\frob}\lesssim d^{3/2}$.

Differentiating \eqref{eq:phi-derivatives},
\[
\phi_{i}'''(t)=\Dd^{3}R(u_{t,i})[u_{t,i}',u_{t,i}',u_{t,i}']+3\Dd^{2}R(u_{t,i})[u_{t,i}'',u_{t,i}']+\Dd R(u_{t,i})[u_{t,i}''']\,.
\]
Using Lemma~\ref{lem:row-map},$\norm{u_{t,i}'}^{2}\lesssim d\,w_{t,i}$,
$\sum_{i}\norm{u_{t,i}'}^{2}=\norm{U_{t}'}_{\frob}^{2}\lesssim d$,
and $\sum_{i}\norm{u_{t,i}''}^{2}=\norm{U_{t}''}_{\frob}^{2}\lesssim d^{2}$,
\[
\norm{\Phi_{t}'''}_{\frob}^{2}\lesssim\norm{U_{t}'''}_{\frob}^{2}+\sum_{i}\frac{\norm{u_{t,i}'}^{6}}{w_{t,i}^{2}}+\sum_{i}\frac{\norm{u_{t,i}'}^{2}\norm{u_{t,i}''}^{2}}{w_{t,i}}\lesssim d^{3}\,.
\]
Finally, note that $\norm{\bar{\Lambda}_{t}'''}\lesssim\norm{\Phi_{t}'''}_{\frob}+\norm{\Phi_{t}''}_{\frob}\norm{\Phi_{t}'}_{\frob}\lesssim d^{3/2}$.
Differentiating \eqref{eq:N-derivatives},
\[
\norm{N_{t}'''}\lesssim\norm{\bar{\Lambda}_{t}'''}+\norm{\bar{\Lambda}_{t}'}\,\norm{\bar{\Lambda}_{t}''}+\norm{\bar{\Lambda}_{t}'}^{3}\lesssim d^{3/2}\,.
\]
This completes the proof.
\end{proof}
Using this lemma, we present pathwise bounds on the good event.
\begin{lem}[Pathwise bounds for good terms]
\label{lem:good-term-bounds} On the good event $\mc E$, 
\[
\sup_{t\leq\eta}\abs{E_{1}'(t)},\,\sup_{t\leq\eta}\abs{G_{2}(t)}\lesssim d\,,\qquad\sup_{t\leq\eta}\abs{G_{3}(t)}\lesssim d^{3/2}\,,\qquad\sup_{t\leq\eta}\abs{G_{4}(t)}\lesssim d^{2}\,,\qquad\sup_{t\leq\eta}\abs{H_{4}(t)}\lesssim d^{5/2}\,.
\]
\end{lem}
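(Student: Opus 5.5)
We work throughout on the good event $\mc E$ of Lemma~\ref{lem:good-event} and reduce every term to products of the quantities controlled there. Each bound below is of the form $\norm{\text{vector}}\cdot\norm{\text{matrix}}\cdot\norm{\text{vector}}$ or $\norm{\cdot}_\infty\cdot\norm{\cdot}\cdot\norm{\cdot}$, and the exponents of $d$ simply add.

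\textbf{Step 1: the term $E_1'(t)$.} Write $E_1(t)=\inner{s_t,W_t^{\alpha}s_t^{\circ2}}=\sum_i w_{t,i}^{\alpha}s_{t,i}^3=\inner{u_t^{\circ2},s_t}$. Differentiating with $u_t'=z_t\circ u_t$ and $s_t'=-s_t^{\circ2}$ gives
\[
E_1'(t)=2\inner{u_t^{\circ2},z_t\circ s_t}-\inner{u_t^{\circ2},s_t^{\circ2}}.
\]
Since $\norm{s_t}_\infty\lesssim1$ and $\norm{z_t}_\infty\lesssim d^{1/2}$, we have $|E_1'|\lesssim(1+d^{1/2})\norm{u_t}^2\lesssim d^{3/2}$. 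This is too weak, so refine the first term. Write $u_{t,i}^2=w_{t,i}^{\alpha}s_{t,i}^2=w_{t,i}^{\alpha-1/2}s_{t,i}\cdot w_{t,i}^{1/2}s_{t,i}$, and pair $z_t$ in the $W_t$-norm instead of $\ell_\infty$. Since $w_{t,i}^{\alpha}s_{t,i}^2=w_{t,i}\rho_{t,i}^2$,
\[
|\inner{u_t^{\circ2},z_t\circ s_t}|=\Bigl|\sum_i w_{t,i}\rho_{t,i}^2s_{t,i}z_{t,i}\Bigr|\le\norm{\rho_t}_\infty^2\norm{s_t}_\infty\sum_i w_{t,i}|z_{t,i}|\le\norm{z_t}_{W_t}\Bigl(\sum_i w_{t,i}\Bigr)^{1/2}\lesssim d^{1/2}\cdot d^{1/2}=d.
\]
This uses $\sum_i w_{t,i}=d$. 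The second term is $\lesssim\norm{u_t}^2\lesssim d$.

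\textbf{Step 2: the terms $G_{k+1}$.} By \eqref{eq:Gk-defn},
\[
|G_{k+1}|\le\norm{q_t'}\,\norm{N_t^{(k-1)}}\,\norm{v_t}+\norm{q_t}\,\norm{N_t^{(k-1)}}\,\norm{v_t'}.
\]
Lemma~\ref{lem:good-event} gives $\norm{q_t},\norm{q_t'},\norm{v_t},\norm{v_t'}\lesssim d^{1/2}$, so $|G_{k+1}|\lesssim d\,\norm{N_t^{(k-1)}}$. We have $\norm{N_t}\lesssim1$ by Lemma~\ref{lem:LS-comp-tool}, $\norm{N_t'}\lesssim d^{1/2}$, and $\norm{N_t''}\lesssim d$. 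This yields $G_2\lesssim d$, $G_3\lesssim d^{3/2}$ and $G_4\lesssim d^2$.

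\textbf{Step 3: the term $H_4$.} Cauchy--Schwarz gives
\[
|H_4|\le\norm{q_t}\,\norm{N_t'''}\,\norm{v_t}\lesssim d^{1/2}\cdot d^{3/2}\cdot d^{1/2}=d^{5/2}.
\]
All the bounds are uniform in $t\in[0,\eta]$ because Lemma~\ref{lem:good-event} is.

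\textbf{Main obstacle.} The only non-mechanical point is $E_1'$. The naive $\ell_\infty$ bound on $z_t$ loses a factor $d^{1/2}$. The fix is to use $\norm{z_t}_{W_t}\lesssim d^{1/2}$ together with the weight mass $\sum_i w_{t,i}=d$ via Cauchy--Schwarz, as in Step 1. One should also check the exact form of $E_1'$, including the factor $W_t^{\alpha}$ versus its derivative: the derivative of $w^{\alpha}$ produces $\alpha\ell_t'$, which is absorbed into the $z_t$-term since $z_t=\beta\ell_t'-s_t$ with $2\beta=\alpha$. That makes the derivative $2\inner{u_t^{\circ2},z_t\circ s_t}$ exactly, up to the $s'$ term.
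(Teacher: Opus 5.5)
Your proposal is correct and is essentially the paper's proof. Steps 2 and 3 coincide with it, and for $E_1'$ you repackage the same Cauchy--Schwarz estimate: you absorb $\alpha\ell_t'$ into $z_t$ via $u_t'=z_t\circ u_t$ and use $\|z_t\|_{W_t}\lesssim d^{1/2}$ together with $\sum_i w_{t,i}=d$, whereas the paper keeps $\ell_t'$ separate and pairs $\|u_t\|$ with $\|W_t^{1/2}\ell_t'\|=\|N_tv_t\|\lesssim d^{1/2}$. One small slip: the second inequality in your Step 1 display should read $\lesssim$, since you drop the factor $\|\rho_t\|_\infty^2\|s_t\|_\infty\lesssim 1$.
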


The estimate of $\abs{H_{4}(t)}$ is the source of the new radius
(but also a bottleneck for $\Otilde(d^{2})$-mixing of the $\dw$).
In \eqref{eq:hard-recursion}, it contributes $\eta^{4}d^{5/2}=r^{4}d^{1/2}$,
which leads us to take $R_{\veps}=\widetilde{\Theta}(d^{-1/8})$.
\begin{proof}
For $G_{k+1}(t)=(q_{t}')^{\T}N_{t}^{(k-1)}v_{t}+q_{t}^{\T}N_{t}^{(k-1)}v_{t}'$,
its pathwise estimates immediately follow from Lemma~\ref{lem:LS-comp-tool},
using the good-event estimates $\norm{q_{t}},\norm{q_{t}'},\norm{v_{t}},\norm{v_{t}'}\lesssim d^{1/2}$,
the deterministic bound $\norm{N_{t}}\lesssim1$ from Lemma~\ref{lem:LS-comp-tool},
and $\norm{N_{t}'}\lesssim d^{1/2}$, $\norm{N_{t}''}\lesssim d$,
and $\norm{N_{t}'''}\lesssim d^{3/2}$. As for $H_{4}$, by Cauchy--Schwarz,
\[
\abs{H_{4}(t)}=\abs{q_{t}^{\T}N_{t}'''v_{t}}\leq\norm{q_{t}}\,\norm{N_{t}'''}\,\norm{v_{t}}\lesssim d^{5/2}\,.
\]

Finally, differentiating $E_{1}(t)=\inner{s_{t},W_{t}^{\alpha}s_{t}^{\circ2}}$
and using $s_{t}'=-s_{t}^{\circ2}$, 
\[
E_{1}'(t)=\alpha\inner{s_{t},(\ell_{t}'\circ W_{t}^{\alpha})\,s_{t}^{\circ2}}-\inner{s_{t}^{\circ2},W_{t}^{\alpha}s_{t}^{\circ2}}-2\inner{s_{t},W_{t}^{\alpha}s_{t}^{\circ3}}\,.
\]
Since $\norm{u_{t}}\lesssim d^{1/2}$ and $W_{t}^{1/2}\ell_{t}'=-N_{t}v_{t}$,
we can bound the first term as follows:
\begin{align*}
\sum_{i}w_{t,i}^{\alpha}\,\abs{\ell_{t,i}'}\,\abs{s_{t,i}}^{3} & \leq\bpar{\sum_{i}w_{t,i}^{\alpha}s_{t,i}^{2}}^{1/2}\bpar{\sum_{i}w_{t,i}^{\alpha}(\ell_{t,i}')^{2}s_{t,i}^{4}}^{1/2}=\norm{u_{t}}\bpar{\sum_{i}w_{t,i}(\ell_{t,i}')^{2}\rho_{t,i}^{2}s_{t,i}^{2}}^{1/2}\\
 & \lesssim d^{1/2}\,\bpar{\sum_{i}w_{t,i}(\ell_{t,i}')^{2}}^{1/2}=d^{1/2}\,\norm{N_{t}v_{t}}\lesssim d\,.
\end{align*}
The last two terms are bounded by $\norm{s_{t}}_{\infty}^{2}\,s_{t}^{\T}W_{t}^{\alpha}s_{t}\lesssim d$.
\end{proof}

\subsection{Concentration of base-point Gaussian polynomials}

At the base point, write $W:=W_{x}$, $w:=w_{x}$, $N:=N_{x}$, $\bar{\Lambda}:=\bar{\Lambda}_{x}$,
and note that 
\[
U=W_{x}^{\beta}A_{x}\,,\qquad U^{\T}U=I_{d}\,,\qquad P=UU^{\T}\,,\qquad B=W^{1/p}U=W_{x}^{1/2}A_{x}\,.
\]
For $u_{h}:=Uh$, we can write $v_{0}=Bh=W^{1/p}u_{h}$, $s_{0}=W^{-\beta}Uh$,
and $q_{0}=W^{-1/2}(Uh)^{\circ2}$. The three base-point terms in
\eqref{eq:hard-recursion} are Gaussian polynomials of degrees three,
four, and five:
\[
E_{1}(0)+\beta H_{1}(0)\,,\qquad H_{2}(0)=q_{0}^{\T}N_{0}'[h]v_{0}\,,\qquad H_{3}(0)=q_{0}^{\T}N_{0}''[h,h]v_{0}\,.
\]
The cubic term is handled by the original LS moment calculation and
the Gaussian Poincar\'e inequality. The quartic and quintic terms
will be handled via Wiener chaoses and some MSI calculations.

\subsubsection{The cubic Gaussian polynomial}
\begin{lem}
\label{lem:base-cubic} $\norm{E_{1}(0)+\beta H_{1}(0)}_{L^{2}}\lesssim d^{1/2}$,
so $E_{1}(0)+\beta H_{1}(0)\lesssim d^{1/2}$ with high probability.
\end{lem}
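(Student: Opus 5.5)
The plan is to write $P(h):=E_{1}(0)+\beta H_{1}(0)$ explicitly as a cubic Gaussian polynomial and bound $\E[P^{2}]=(\E P)^{2}+\var P$, controlling the variance by the Gaussian Poincar\'e inequality $\var P\le\E\norm{\nabla P(h)}^{2}$. With the base-point notation $s_{0}=W^{-\beta}Uh$, $v_{0}=W^{1/p}Uh$, $q_{0}=W^{-1/2}(Uh)^{\circ2}$, we have
\[
E_{1}(0)=\sum_{i}w_{i}^{\alpha}s_{0,i}^{3}=\sum_{i}w_{i}^{-\beta}(Uh)_{i}^{3},\qquad H_{1}(0)=q_{0}^{\T}Nv_{0}.
\]
Both terms are odd in $h$, so $\E P=0$, and it suffices to show $\E\norm{\nabla P(h)}^{2}\lesssim d$. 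The high-probability statement then follows from Lemma~\ref{lem:conc-gaussian-poly} with $n=3$ and $t=\polylog(1/\veps)$.

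\textbf{Gradient of $E_{1}(0)$.} We have $\nabla E_{1}(0)=3U^{\T}(W^{-\beta}(Uh)^{\circ2})=3U^{\T}W^{1/p}(W^{-1/2}(Uh)^{\circ2})$. Since $\norm{U^{\T}}\le1$ and $W^{1/p}\preceq I$, this gives $\norm{\nabla E_{1}}\le3\norm{q_{0}}$. Moreover,
\[
\E\norm{q_{0}}^{2}=\sum_{i}w_{i}^{-1}\E(Uh)_{i}^{4}=3\sum_{i}w_{i}=3d,
\]
because $(Uh)_{i}\sim\msf N(0,w_{i})$ (recall $P_{ii}=w_{i}$). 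This is exactly the type of estimate used in Lemma~\ref{lem:good-event} for $\rho_{0}$, now taken in expectation.

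\textbf{Gradient of $H_{1}(0)$.} Here
\[
\nabla H_{1}(0)=2U^{\T}\Diag(W^{-1/2}Uh)NW^{1/p}Uh+U^{\T}W^{1/p}Nq_{0}.
\]
The second term has squared norm at most $\norm{N}^{2}\norm{q_{0}}^{2}$, with expectation $\lesssim p^{2}d$ by Lemma~\ref{lem:LS-comp-tool}. The first term is the delicate one, since a direct bound $\norm{W^{-1/2}Uh}_{\infty}\norm{NW^{1/p}Uh}$ only works with high probability and incurs log factors. Accepting those log factors is fine for this lemma's $\lesssim$, but since we need an $L^{2}$ bound, I would instead use Cauchy--Schwarz in expectation:
\[
\E\norm{\Diag(\rho_{0})Nv_{0}}^{2}\le\bpar{\E\norm{\rho_{0}}_{\infty}^{4}}^{1/2}\bpar{\E\norm{Nv_{0}}^{4}}^{1/2}.
\]
Since $\rho_{0,i}\sim\msf N(0,1)$, the first factor is $O(\log^{2}m)$. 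For the second, $\norm{Nv_{0}}^{2}$ is a quadratic form in $h$ with trace $\tr(U^{\T}W^{1/p}N^{2}W^{1/p}U)\le p^{2}d$, so its fourth moment is $\lesssim p^{4}d^{2}$. Together these give $\E\norm{\nabla H_{1}}^{2}\lesssim d$ up to polylogarithmic factors.

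\textbf{Conclusion and main obstacle.} Combining the two gradient bounds yields $\var P\lesssim d$, so $\norm{P}_{L^{2}}\lesssim d^{1/2}$. The main obstacle is the $\Diag(\rho_{0})N v_{0}$ term: the sup-norm of $\rho_{0}$ must be decoupled from $Nv_{0}$. If polylogarithmic losses were not allowed, I would instead expand $\E\sum_{i}\rho_{0,i}^{2}(Nv_{0})_{i}^{2}$ via Wick's formula and use $P^{\circ2}\preceq W$ to control the cross terms. Within this paper's conventions, the Cauchy--Schwarz route suffices.
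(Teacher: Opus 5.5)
Your proof is correct. Like the paper, you reduce to the Gaussian Poincar\'e inequality for a mean-zero odd cubic; the differences lie in how you bound the gradient.

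\textbf{How the paper does it.} It first merges the two terms into the single form $-\langle Mh,(Uh)^{\circ2}\rangle$ with $M=-W^{-1/2}(I+\beta N)W^{1/p}U$. The gradient then has just two pieces, $M^{\T}u_h^{\circ2}$ and $2U^{\T}(u_h\circ Mh)$, and both are bounded by exact Gaussian moment identities. For the first piece it uses Isserlis' formula $\E[u_{h,i}^2u_{h,j}^2]=w_iw_j+2P_{ij}^2$ followed by $P^{\circ2}\preceq W$. For the second it uses the coordinatewise bound $\E[(a^{\T}h)^2(b^{\T}h)^2]\le 3\norm{a}^2\norm{b}^2$.

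\textbf{How your route compares.} You keep $E_1$ and $H_1$ separate, which is harmless. Your treatment of the $q_0$-type pieces (take the operator norm of $U^{\T}W^{1/p}$ first, then use $\E\norm{q_0}^2=3d$) is simpler than the paper's exact computation and does not need $P^{\circ2}\preceq W$.

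\textbf{The product piece.} For $\Diag(\rho_0)Nv_0$ you decouple through $\norm{\rho_0}_\infty$ and Cauchy--Schwarz, which costs an extra $\log m$. This is acceptable because the paper's $\lesssim$ suppresses polylogarithmic factors. The loss is also avoidable in one line, with no Wick expansion. Since $\rho_{0,i}$ and $(Nv_0)_i$ are jointly centered Gaussians,
\[
\E[\rho_{0,i}^2(Nv_0)_i^2]\le 3\,\E[\rho_{0,i}^2]\,\E[(Nv_0)_i^2]=3\,\E[(Nv_0)_i^2].
\]
Summing over $i$ gives $3\,\E\norm{Nv_0}^2\le 3p^2d$. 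This is exactly the paper's step for $U^{\T}(u_h\circ Mh)$.
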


\begin{proof}
Define $M:=-W^{-\beta}U-\beta W^{-1/2}NW^{1/p}U=-W^{-1/2}(I+\beta N)W^{1/p}U$.
Then, 
\begin{align*}
E_{1}(0)+\beta H_{1}(0) & =\inner{s_{0},W_{0}^{\alpha}s_{0}^{\circ2}}+\beta q_{0}^{\T}N_{0}v_{0}=\inner{W^{-\beta}Uh,(Uh)^{\circ2}}+\beta\inner{W^{-1/2}(Uh)^{\circ2},NW^{1/p}Uh}\\
 & =\inner{(W^{-\beta}U+\beta W^{-1/2}NW^{1/p}U)\,h,(Uh)^{\circ2}}=-\inner{Mh,(Uh)^{\circ2}}:=-T(h)\,.
\end{align*}
Since $T$ is odd, $\E T=0$. By the Gaussian Poincar\'e inequality,
\[
\E[T^{2}]\leq\E[\norm{\nabla T}^{2}]=\E\bbrack{\norm{M^{\T}u_{h}^{\circ2}+2U^{\T}(u_{h}\circ Mh)}^{2}}\lesssim\E[\norm{M^{\T}u_{h}^{\circ2}}^{2}]+\E[\norm{U^{\T}(u_{h}\circ Mh)}^{2}]\,.
\]

As for the second term, denote the $i$-th row of $M$ by $m_{i}$.
Using $\E[(a^{\T}h)^{2}(b^{\T}h)^{2}]=\norm a^{2}\norm b^{2}+2\,(a^{\T}b)^{2}$
for deterministic vectors $a,b\in\Rd$ and $h\sim\msf N(0,I_{d})$
(see \cite[Proposition C.2]{KV24ipm}) with $\norm{e_{i}^{\T}U}^{2}=w_{i}$,
\[
\E[\norm{U^{\T}(u_{h}\circ Mh)}^{2}]\leq\E[\norm{u_{h}\circ Mh}^{2}]=\sum_{i}\E[u_{h,i}^{2}\,(m_{i}^{\T}h)^{2}]\leq3\sum_{i}w_{i}\,\norm{m_{i}}^{2}=3\tr(M^{\T}WM)\,.
\]
As for the first term, define $C_{ij}:=(MM^{T})_{ij}$ and write
\[
\norm{M^{\T}u_{h}^{\circ2}}^{2}=(u_{h}^{\circ2})^{\T}MM^{\T}u_{h}^{\circ2}=\sum_{ij}C_{ij}u_{h,i}^{2}u_{h,j}^{2}\,.
\]
Since $u_{h}=Uh$ is a centered Gaussian with covariance $\E[u_{h}u_{h}^{\T}]=P$,
$\E[u_{h,i}^{2}]=w_{i}$, and $\E[u_{h,i}u_{h,j}]=P_{ij}$. Thus,
\begin{align*}
\E[\norm{M^{\T}u_{h}^{\circ2}}^{2}] & =\sum_{ij}C_{ij}\E[u_{h,i}^{2}u_{h,j}^{2}]=\sum_{ij}C_{ij}w_{i}w_{j}+2\sum_{ij}C_{ij}P_{ij}^{2}=w^{\T}MM^{\T}w+2\tr(P^{\circ2}C)\\
 & =\norm{M^{\T}w}^{2}+2\tr(M^{\T}P^{\circ2}M)\leq\norm{M^{\T}w}^{2}+2\tr(M^{\T}WM)\,.
\end{align*}

We now bound $\tr(M^{\T}WM)$ and $\norm{M^{\T}w}^{2}$. As for the
trace, note that $W^{1/2}M=-(I+\beta N)W^{1/p}U$. Since $N\preceq pI_{d}$
and $\norm{W^{1/p}U}_{\frob}^{2}\leq\norm U_{\frob}^{2}=d$, we have
(ignoring any polylogarithmic factors) $\tr(M^{\T}WM)\lesssim d$.
As for $\norm{M^{\T}w}^{2}$, denoting $C:=(I+\beta N)\precsim I$,
\[
\norm{M^{\T}w}^{2}=w^{\T}W^{-1/2}CW^{1/p}UU^{\T}W^{1/p}CW^{-1/2}w\lesssim w^{\T}W^{-1}w=\sum_{i}w_{i}=d\,.
\]
Putting all these together, $\E[T^{2}]\lesssim d$, and the concentration
result in Lemma~\ref{lem:conc-gaussian-poly} proves the claim.
\end{proof}

\subsubsection{Multiple stochastic integral calculus\label{subsec:MSI}}

We now estimate the $L^{2}$-norm of $H_{2}(0)$ and $H_{3}(0)$,
for which we would need to bound the expectation of Gaussian polynomials
of degree $8$ and $10$ with numerous terms. This is practically
infeasible to carry out. We need a more principled and organized approach
to carry out this analysis, leveraging already well-developed theory.

To this end, we use the finite-dimensional Wiener chaos decomposition
of Gaussian polynomials. This is equivalently the Hermite expansion
under Gaussian measure. Precisely, one may want to represent a polynomial
with respect to some orthonormal polynomial basis with respect to
Gaussian measure, and then its squared $L^{2}$-norm is the sum of
the squared coefficients of the basis.

\paragraph{Wiener chaos calculus.}

One can interpret Gaussian polynomials as multiple stochastic integrals
(MSI), following Nualart \cite[\S1.1]{Nualart06Malliavin}. We refer
readers to \cite[\S3]{Nualart19malliavin} for a gentler introduction.
Suppose that $(B_{t})_{t\geq0}$ is a Brownian motion defined over
$(\Omega,\mc F,\P)$, with $\mc F$ generated by $(B_{t})_{t\geq0}$.
Let $L_{s}^{2}(\R_{+}^{n})$ be the space of symmetric $L^{2}$ functions
$f:\Rn_{+}\to\R$. For a non-symmetric $f:\R_{+}^{n}\to\R$, its symmetrization
is defined as $\tilde{f}(t_{1},\dots,t_{n})=\frac{1}{n!}\sum_{\sigma}f(t_{\sigma(1)},\dots,t_{\sigma(n)})$
where the sum runs over all permutations of $[n]$. 
\begin{defn}
[Multiple stochastic integral (MSI)]\label{def:MSI}For $f\in L_{s}^{2}(\R_{+}^{n})$,
its order-$n$ multiple stochastic integral is 
\[
I_{n}(f):=n!\,\int_{0}^{\infty}\int_{0}^{t_{n}}\cdots\int_{0}^{t_{2}}f(t_{1},\ldots,t_{n})\,\D B_{t_{1}}\cdots\D B_{t_{n}}\,.
\]
For non-symmetric $f$, we define $I_{n}(f):=I_{n}(\tilde{f})$.
\end{defn}

When $n=1$, this is the usual It\^o integral $I_{1}(f)=\int_{\R_{+}}f(t)\,\D B_{t}$.
As seen later, one can think of $I_{n}(f)$ as base Gaussian polynomials
of degree $n$. For $n,m\in\mathbb{N}$, the standard isometry of
multiple It\^o integrals says that for $f\in L^{2}(\R_{+}^{n})$
and $g\in L^{2}(\R_{+}^{m})$, 
\begin{equation}
\E[I_{n}(f)\,I_{m}(g)]=\ind_{n=m}\,n!\,\inner{\tilde{f},\tilde{g}}_{L^{2}(\R_{+}^{n})}\,.\label{eq:orthogonality}
\end{equation}
For $f\in L_{s}^{2}(\R_{+}^{n})$ and $g\in L_{s}^{2}(\R_{+}^{m})$,
their contraction $f\otimes_{r}g\in L^{2}(\R_{+}^{n+m-2r})$ of order
$r\in\{0,\ldots,n\wedge m\}$ is defined as
\[
(f\otimes_{r}g)(t_{1},\ldots,t_{n-r},s_{1},\ldots,s_{m-r})=\int_{\R_{+}^{r}}f(t_{1},\ldots,t_{n-r},x_{1},\ldots,x_{r})\,g(s_{1},\ldots,s_{m-r},x_{1},\ldots,x_{r})\,\D x_{1}\cdots\D x_{r}\,.
\]
The product formula is the basic rule that turns products of Gaussian
polynomials into orthogonal chaos pieces.
\begin{lem}[{Product formula, {\cite[Proposition~1.1.3]{Nualart06Malliavin}}}]
\label{lem:msi-product} For $f\in L_{s}^{2}(\R_{+}^{n})$ and $g\in L_{s}^{2}(\R_{+}^{m})$,
\[
I_{n}(f)\,I_{m}(g)=\sum_{r=0}^{n\wedge m}r!\,\binom{n}{r}\binom{m}{r}I_{n+m-2r}(f\otimes_{r}g)\,.
\]
\end{lem}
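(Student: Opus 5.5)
The plan is to prove the product formula first for symmetric powers of single functions, using the generating function of Wiener chaoses. The general case then follows by polarization and density.

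For $e\in L^{2}(\R_{+})$, consider the stochastic exponential
\[
\mc E(e):=\exp\bpar{I_{1}(e)-\tfrac12\norm e^{2}}=\sum_{n\ge0}\frac{1}{n!}\,I_{n}(e^{\otimes n}),
\]
where the series converges in $L^{2}(\Omega)$. This identity is standard: for $\norm e=1$ one has $I_{n}(e^{\otimes n})=H_{n}(I_{1}(e))$, with $H_{n}$ the probabilists' Hermite polynomials, and the display is their generating function. Alternatively, one can iterate It\^o's formula on $M_{t}=\exp(\int_{0}^{t}e\,\D B-\frac12\int_{0}^{t}e^{2})$, which satisfies $\D M=M\,e\,\D B$.

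For $e,k\in L^{2}(\R_{+})$ and scalars $s,t$, a direct computation gives
\[
\mc E(se)\,\mc E(tk)=\mc E(se+tk)\,\exp\bpar{st\inner{e,k}}.
\]
Expand both sides as power series in $(s,t)$. On the left, the coefficient of $s^{n}t^{m}$ is $I_{n}(e^{\otimes n})I_{m}(k^{\otimes m})/(n!\,m!)$.

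On the right, expanding $\exp(st\inner{e,k})=\sum_{r}(st\inner{e,k})^{r}/r!$ and $\mc E(se+tk)=\sum_{N}I_{N}((se+tk)^{\otimes N})/N!$ gives the following. The symmetrized coefficient of $s^{a}t^{b}$ in $(se+tk)^{\otimes N}$ (with $a+b=N$) is $\binom{N}{a}\,\widetilde{e^{\otimes a}\otimes k^{\otimes b}}$. Taking $a=n-r$, $b=m-r$, $N=n+m-2r$, the coefficient of $s^{n}t^{m}$ is
\[
\sum_{r}\frac{\inner{e,k}^{r}}{r!\,(n-r)!\,(m-r)!}\,I_{n+m-2r}\bpar{e^{\otimes(n-r)}\otimes k^{\otimes(m-r)}}.
\]
Since $e^{\otimes n}\otimes_{r}k^{\otimes m}=\inner{e,k}^{r}\,e^{\otimes(n-r)}\otimes k^{\otimes(m-r)}$ and $\frac{n!\,m!}{r!\,(n-r)!\,(m-r)!}=r!\binom nr\binom mr$, matching coefficients yields the product formula for $f=e^{\otimes n}$, $g=k^{\otimes m}$.

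Coefficient identification is legitimate for the following reason. Both sides are $L^{2}(\Omega)$-valued entire functions of $(s,t)$: the chaos series converge absolutely in $L^{2}$ by the isometry \eqref{eq:orthogonality}, and products are controlled in $L^{1}$ by Cauchy--Schwarz. One can therefore compare Taylor coefficients after pairing with arbitrary bounded test variables.

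Next, extend to general $f\in L_{s}^{2}(\R_{+}^{n})$ and $g\in L_{s}^{2}(\R_{+}^{m})$.

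\emph{Bilinearity.} Both sides of the formula are bilinear in $(f,g)$. Symmetric tensors are spanned by the powers $e^{\otimes n}$: polarization expresses $\widetilde{e_{1}\otimes\cdots\otimes e_{n}}$ as a linear combination of $(\sum_{j}\epsilon_{j}e_{j})^{\otimes n}$. Elementary tensors are dense in $L^{2}(\R_{+}^{n})$, so linear combinations of powers are dense in $L_{s}^{2}$.

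\emph{Continuity, which I expect to be the main obstacle.} Each side must be continuous from $L_{s}^{2}\times L_{s}^{2}$ into some $L^{q}(\Omega)$.
\begin{itemize}
\item For the right-hand side, $\norm{f\otimes_{r}g}_{L^{2}}\le\norm f\,\norm g$ by Cauchy--Schwarz in the contracted variables, and the isometry \eqref{eq:orthogonality} then bounds each chaos term.
\item For the left-hand side, $\norm{I_{n}(f)I_{m}(g)}_{L^{1}}\le\norm{I_{n}(f)}_{L^{2}}\norm{I_{m}(g)}_{L^{2}}=\sqrt{n!\,m!}\,\norm f\,\norm g$. This already gives continuity into $L^{1}(\Omega)$.
\end{itemize}
It follows that the identity holds in $L^{1}$, and hence almost surely, for all $f,g$. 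If convergence in $L^{2}$ is wanted, hypercontractivity on fixed chaoses gives $\norm{I_{n}(f)}_{L^{4}}\lesssim_{n}\norm{I_{n}(f)}_{L^{2}}$, which upgrades continuity to $L^{2}(\Omega)$. Finally, for non-symmetric inputs the formula is understood through the convention $I_{N}(\cdot)=I_{N}(\widetilde{\cdot})$, which is consistent with the symmetrization used in the coefficient matching.
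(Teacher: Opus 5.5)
Your proof is correct, and it follows a genuinely different route from the source the paper relies on. The paper does not prove this lemma; it imports it from Nualart. There the formula is built from the case $m=1$, namely $I_n(f)\,I_1(g)=I_{n+1}(f\otimes g)+n\,I_{n-1}(f\otimes_1 g)$. That case is proved by hand on symmetrized indicators of products of disjoint sets, with a partition-refinement limit producing the contraction term. The general case then follows by induction on $m$, which tracks how the iterated contractions recombine.

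Your generating-function argument replaces that induction with the single scalar identity $\mathcal E(se)\,\mathcal E(tk)=\mathcal E(se+tk)\,e^{st\langle e,k\rangle}$. Every contraction order, and the constant $r!\binom nr\binom mr=n!\,m!/(r!\,(n-r)!\,(m-r)!)$, then come out at once as Taylor coefficients. The price is the chaos expansion of the stochastic exponential and the polarization--density--continuity extension. You handle both correctly, and the $L^1$ bound on the left-hand side already suffices, so hypercontractivity is unnecessary.

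Two remarks. First, justify $\mathcal E(e)=\sum_n I_n(e^{\otimes n})/n!$ through your It\^o-iteration alternative rather than by quoting $I_n(e^{\otimes n})=H_n(I_1(e))$. In the standard development, including Nualart's, that Hermite identity is itself derived from the $m=1$ product formula, so quoting it would make the argument circular. The It\^o route fits the paper's definition of $I_n$ as $n!$ times an iterated It\^o integral. After $N$ iterations the remainder is an $(N+1)$-fold iterated integral of $M_{t_1}e(t_1)\cdots e(t_{N+1})$, with second moment at most $e^{\|e\|^2}\|e\|^{2(N+1)}/(N+1)!\to 0$.

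Second, your route is even shorter for the way the paper actually uses the lemma, with kernels $f_T$ built from $e_1,\dots,e_d$. Symmetric tensors over the finite-dimensional span of $e_1,\dots,e_d$ are spanned by the powers $v^{\otimes n}$ of vectors in that span. So bilinearity alone finishes the proof there, and no density or continuity argument is needed.
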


The next result, the Wiener chaos expansion, is the underlying principle
behind our attempt.
\begin{thm}[{Wiener chaos expansion, {\cite[Theorem~1.1.2]{Nualart06Malliavin}}}]
\label{thm:wiener-decom} Every $F\in L^{2}(\Omega)$ admits a unique
orthogonal expansion $F=\E F+\sum_{n=1}^{\infty}I_{n}(f_{n})$ for
suitable $f_{n}\in L_{s}^{2}(\R_{+}^{n})$. 
\end{thm}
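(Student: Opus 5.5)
The plan is to realize $L^{2}(\Omega)$ as an orthogonal direct sum of the chaos spaces $\mc H_{n}:=\{I_{n}(f):f\in L_{s}^{2}(\R_{+}^{n})\}$, with $\mc H_{0}=\R$.

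\emph{The chaoses are closed and orthogonal.} By the isometry \eqref{eq:orthogonality}, $f\mapsto I_{n}(f)$ is $\sqrt{n!}$ times an isometry from $L_{s}^{2}(\R_{+}^{n})$ into $L^{2}(\Omega)$. Hence each $\mc H_{n}$ is a closed subspace. The same identity also gives $\mc H_{n}\perp\mc H_{m}$ for $n\neq m$. Every $I_{n}(f)$ with $n\geq1$ has mean zero, being orthogonal to $\mc H_{0}$. So once we show that $\bigoplus_{n}\mc H_{n}$ is dense, the expansion $F=\E F+\sum_{n\ge1}I_{n}(f_{n})$ follows by orthogonal projection onto each $\mc H_{n}$. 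Uniqueness of each $f_{n}$ then follows from orthogonality together with injectivity of the isometry.

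\emph{Exponentials lie in the closed span of the chaoses.} For $h\in L^{2}(\R_{+})$, set $\mc E(h):=\exp\bpar{I_{1}(h)-\half\norm h^{2}}$. First, by induction on $n$ using Lemma~\ref{lem:msi-product} with $g=h$, I will show
\[
I_{n}(h^{\otimes n})=\norm h^{n}\,He_{n}\bpar{I_{1}(h)/\norm h},
\]
where $He_{n}$ are the probabilists' Hermite polynomials. The product formula yields
\[
I_{n}(h^{\otimes n})\,I_{1}(h)=I_{n+1}(h^{\otimes(n+1)})+n\norm h^{2}I_{n-1}(h^{\otimes(n-1)}),
\]
which is exactly the Hermite recursion $xHe_{n}=He_{n+1}+nHe_{n-1}$. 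The Hermite generating function $e^{tx-t^{2}/2}=\sum_{n}t^{n}He_{n}(x)/n!$ then gives
\[
\mc E(h)=\sum_{n\ge0}\frac{I_{n}(h^{\otimes n})}{n!}.
\]
This series converges in $L^{2}$ because the squared norms are $\norm h^{2n}/n!$, which are summable. Hence each $\mc E(h)$ lies in $\overline{\bigoplus_{n}\mc H_{n}}$.

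\emph{Totality of exponentials; the main obstacle.} It remains to show that $\{\mc E(h)\}$ is total in $L^{2}(\Omega)$. Suppose $F\in L^{2}(\Omega)$ is orthogonal to every $\mc E(h)$. Take $h=\sum_{j}\lambda_{j}\ind_{[0,s_{j}]}$ for arbitrary times $0<s_{1}<\dots<s_{k}$ and $\lambda\in\R^{k}$. Orthogonality gives
\[
\E\bbrack{F\exp\bpar{\textstyle\sum_{j}\lambda_{j}B_{s_{j}}}}=0\quad\text{for all }\lambda\in\R^{k}.
\]
The function $\lambda\mapsto\E[F\exp(\sum_{j}\lambda_{j}B_{s_{j}})]$ extends analytically to $\complex^{k}$, since Gaussian exponential moments are finite. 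It therefore vanishes identically, in particular on $i\R^{k}$. Thus the signed measure $\E[F\mid B_{s_{1}},\dots,B_{s_{k}}]\,\D\P$ pushed forward to $\R^{k}$ has zero Fourier transform. This gives $\E[F\mid B_{s_{1}},\dots,B_{s_{k}}]=0$ for every finite set of times. Since $\mc F$ is generated by $(B_{t})$, martingale convergence along an increasing sequence of time grids yields $F=\E[F\mid\mc F]=0$. The delicate points are justifying the analytic continuation, which needs the integrability of $|F|e^{c|B|}$ via Cauchy--Schwarz, and the final approximation of $\mc F$ by finitely generated $\sigma$-algebras. Combining the three steps, $L^{2}(\Omega)=\bigoplus_{n\ge0}\mc H_{n}$, which proves Theorem~\ref{thm:wiener-decom}.
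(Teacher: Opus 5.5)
Your proposal is correct. The paper gives no proof of its own and only cites Nualart's Theorem~1.1.2, and your argument is essentially that standard proof: the chaoses are closed and orthogonal by the isometry, $I_{n}(h^{\otimes n})$ is identified with $\|h\|^{n}$ times a Hermite polynomial of $I_{1}(h)/\|h\|$ via the product rule with $I_{1}$, and the exponentials $\exp(I_{1}(h)-\frac{1}{2}\|h\|^{2})$ are total. Your small variations, namely expanding the exponential directly through the Hermite generating function and closing the totality step by analytic continuation, Fourier uniqueness and martingale convergence, are all valid.
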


Let $\mc H_{n}$ be the $L^{2}(\Omega)$-closure of the collection
of all order-$n$ MSI. The theorem says that $L^{2}(\Omega)=\oplus_{n=0}^{\infty}\mc H_{n}$.
Thus, after decomposing a Gaussian polynomial into its Wiener-chaos
pieces, its $L^{2}$ norm is the sum of the squared norms of those
pieces.

\paragraph{Tensor MSI.}

For the finite-dimensional Gaussian $h\sim\msf N(0,I_{d})$, choose
the orthonormal family $e_{i}(t)=\ind_{[i-1,i]}(t)$ in $L^{2}(\R_{+})$.
Then $h_{i}:=I_{1}(e_{i})=B_{i}-B_{i-1}\sim\msf N(0,1)$. A symmetric
$n$-tensor $T=(T_{i_{1},\ldots,i_{n}})\in(\Rd)^{\otimes n}$ can
be identified with the function
\begin{equation}
f_{T}:=\sum_{i_{1},\ldots,i_{n}=1}^{d}T_{i_{1},\ldots,i_{n}}e_{i_{1}}\otimes\cdots\otimes e_{i_{n}}\,,\label{eq:isometry}
\end{equation}
where $(e_{i_{1}}\otimes\cdots\otimes e_{i_{n}})(t_{1},\dots,t_{n}):=\prod_{j=1}^{n}e_{i_{j}}(t_{j})$,
and this isometry induces a tensor MSI as $I_{n}(T):=I_{n}(f_{T})$.
For a non-symmetric tensor $T$, we use the same notation for $I_{n}(\sym T)$.
We will use the following tensor forms of the MSI rules, deferring
their proof to \S\ref{app:tensor-msi-identities}.
\begin{lem}
\label{lem:tensor-I2} For a symmetric matrix $M$, $I_{2}(M)=h^{\T}Mh-\tr M$. 
\end{lem}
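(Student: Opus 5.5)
The plan is to expand $M$ in the coordinate basis and use the product formula (Lemma~\ref{lem:msi-product}) for first-order integrals. Throughout, identify each $h_{i}=I_{1}(e_{i})$ with the orthonormal functions $e_{i}=\ind_{[i-1,i]}$. For a symmetric matrix $M$, the associated function is then $f_{M}=\sum_{i,j}M_{ij}\,e_{i}\otimes e_{j}$. This function is already symmetric, so no symmetrization is needed, and by linearity of $I_{2}$,
\[
I_{2}(M)=\sum_{i,j}M_{ij}\,I_{2}(e_{i}\otimes e_{j})\,,
\]
where $I_{2}(e_{i}\otimes e_{j})$ is understood as $I_{2}$ of the symmetrization of $e_{i}\otimes e_{j}$.

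Next, apply the product formula with $n=m=1$, $f=e_{i}$, $g=e_{j}$. This gives
\[
I_{1}(e_{i})\,I_{1}(e_{j})=I_{2}(e_{i}\otimes e_{j})+I_{0}(e_{i}\otimes_{1}e_{j})\,.
\]
Since $I_{0}$ of a constant is that constant, the contraction term equals $\inner{e_{i},e_{j}}_{L^{2}(\R_{+})}=\ind_{i=j}$. Here we use that $I_{2}(e_{i}\otimes e_{j})$ coincides with $I_{2}$ of the symmetrized tensor, by the convention in Definition~\ref{def:MSI}. Consequently,
\[
h_{i}h_{j}=I_{2}(e_{i}\otimes e_{j})+\ind_{i=j}\,.
\]

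Multiplying by $M_{ij}$ and summing over $i,j$ yields
\[
h^{\T}Mh=\sum_{i,j}M_{ij}h_{i}h_{j}=I_{2}(M)+\sum_{i}M_{ii}=I_{2}(M)+\tr M\,,
\]
which is exactly the claim.

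The only point needing care is the bookkeeping of conventions: the $r=0$ contraction $e_{i}\otimes_{0}e_{j}$ is the non-symmetric tensor product, and the combinatorial constants $r!\binom{1}{r}^{2}$ all equal $1$. No genuine obstacle is expected. As a sanity check, compare expectations: both sides of the final identity have mean $\tr M$, since $\E I_{2}(M)=0$ by \eqref{eq:orthogonality}.
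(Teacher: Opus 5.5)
Your proposal is correct and matches the paper's proof essentially step for step. Like the paper, you apply the product formula with $n=m=1$ to get $h_ih_j=I_2(e_i\otimes e_j)+\delta_{ij}$, then sum against $M_{ij}$ using $f_M=\sum_{i,j}M_{ij}\,e_i\otimes e_j$.
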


\begin{lem}
\label{lem:tensor-I3} For a symmetric third-order tensor $C\in(\Rd)^{\otimes3}$,
$I_{3}(C)=C[h^{\otimes3}]-3\inner{\tr C,h}$, where $\tr C\in\Rd$
is defined as $(\tr C)_{a}:=\sum_{b=1}^{d}C_{abb}$.
\end{lem}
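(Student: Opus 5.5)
The proof will mirror the one for Lemma~\ref{lem:tensor-I2}. We build the tensor MSI from products of first-order integrals $h_i=I_1(e_i)$ using the product formula (Lemma~\ref{lem:msi-product}), then extend by linearity. Since $C\mapsto I_3(C)$ and $C\mapsto C[h^{\otimes3}]-3\inner{\tr C,h}$ are both linear in $C$, it suffices to check the identity on the spanning set of symmetrized elementary tensors $\sym(e_a\otimes e_b\otimes e_c)$. Equivalently, we can check it on $f=e_a\otimes e_b\otimes e_c$, using the convention $I_3(f)=I_3(\tilde f)$ and the fact that $C[h^{\otimes3}]$ and $\tr C$ only see the symmetric part. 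Here $\tr$ is taken over the last two indices, and for symmetric $C$ this equals any partial trace.

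The computation runs in two steps. First, apply Lemma~\ref{lem:msi-product} with $n=m=1$:
\[
h_ah_b=I_1(e_a)I_1(e_b)=I_2(e_a\otimes e_b)+\inner{e_a,e_b}=I_2(e_a\otimes e_b)+\delta_{ab}.
\]
Second, multiply by $h_c=I_1(e_c)$ and apply the product formula with $n=2,m=1$, using the symmetrized kernel $g=\sym(e_a\otimes e_b)$. The $r=0$ term is $I_3(g\otimes e_c)=I_3(e_a\otimes e_b\otimes e_c)$. The $r=1$ term is $2\,I_1(g\otimes_1 e_c)$, and
\[
g\otimes_1 e_c=\tfrac12(\delta_{bc}e_a+\delta_{ac}e_b),
\]
so this term equals $\delta_{bc}h_a+\delta_{ac}h_b$. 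Putting the two steps together,
\[
h_ah_bh_c=I_3(e_a\otimes e_b\otimes e_c)+\delta_{bc}h_a+\delta_{ac}h_b+\delta_{ab}h_c,
\]
which is the third Hermite polynomial identity.

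Now contract with a symmetric $C$. The left side gives $\sum_{abc}C_{abc}h_ah_bh_c=C[h^{\otimes3}]$. By linearity, the $I_3$ term gives $I_3(C)$. Each of the three Kronecker terms contributes $\sum_{a,b}C_{abb}h_a=\inner{\tr C,h}$ after relabeling indices and using the symmetry of $C$, for a total of $3\inner{\tr C,h}$. Rearranging yields $I_3(C)=C[h^{\otimes3}]-3\inner{\tr C,h}$.

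The only delicate point is the bookkeeping in the $r=1$ contraction. The kernel must be symmetrized before contracting, since the formula in Lemma~\ref{lem:msi-product} requires symmetric kernels. Done correctly, the factor $r!\binom21\binom11=2$ cancels the $\tfrac12$ from symmetrization and produces exactly the two Kronecker terms. As a sanity check, $d=1$, $C=1$ gives $I_3=h^3-3h=H_3(h)$, the third Hermite polynomial.
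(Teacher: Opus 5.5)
Your proposal is correct and matches the paper's proof. Both derive $h_ah_bh_c=I_3(e_a\otimes e_b\otimes e_c)+\delta_{bc}h_a+\delta_{ac}h_b+\delta_{ab}h_c$ from two applications of the product formula and then contract with the symmetric $C$; your explicit symmetrization of $e_a\otimes e_b$ before the $r=1$ contraction is a slightly more careful version of a step the paper leaves implicit.

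One aside in your framing is false, though harmless: the last-two-index trace of a non-symmetric tensor does not ``only see the symmetric part.'' For example, $T=e_1\otimes e_2\otimes e_2$ has $\tr T=e_1$ but $\tr(\sym T)=\tfrac13 e_1$, so the identity cannot be checked directly on non-symmetric elementary tensors. Your actual argument never relies on this.
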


\begin{lem}
\label{lem:tensor-I2-product} For symmetric matrices $M,N$, 
\[
I_{2}(M)\,I_{2}(N)=I_{4}(M\otimes N)+4I_{2}\bpar{\sym(MN)}+2\tr(MN)\,.
\]
\end{lem}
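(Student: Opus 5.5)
We specialize the product formula (Lemma~\ref{lem:msi-product}) to $n=m=2$ with $f=f_{M}$ and $g=f_{N}$ from \eqref{eq:isometry}. Since $M,N$ are symmetric, $f_{M},f_{N}\in L_{s}^{2}(\R_{+}^{2})$, so $I_{2}(M)=I_{2}(f_{M})$ and $I_{2}(N)=I_{2}(f_{N})$. The formula then gives
\[
I_{2}(f_{M})\,I_{2}(f_{N})=I_{4}(f_{M}\otimes_{0}f_{N})+4\,I_{2}(f_{M}\otimes_{1}f_{N})+2\,I_{0}(f_{M}\otimes_{2}f_{N})\,,
\]
because the coefficients $r!\binom{2}{r}^{2}$ equal $1,4,2$ for $r=0,1,2$. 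It then remains to identify each contraction as the function induced by a tensor.

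The tool for all three identifications is the orthonormality $\int_{\R_{+}}e_{a}(x)e_{b}(x)\,\D x=\ind_{a=b}$ of the family $(e_{i})$. It converts integrals of products of the $e_{i}$ into index contractions.

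\begin{itemize}
\item For $r=0$: $f_{M}\otimes_{0}f_{N}=\sum M_{ij}N_{kl}\,e_{i}\otimes e_{j}\otimes e_{k}\otimes e_{l}=f_{M\otimes N}$. Hence this term is $I_{4}(M\otimes N)$, where by convention $I_{4}$ of the non-symmetric tensor means $I_{4}$ of its symmetrization.
\item For $r=1$: $(f_{M}\otimes_{1}f_{N})(t,s)=\sum_{i,j,a,b}M_{ia}N_{jb}\,e_{i}(t)e_{j}(s)\int e_{a}e_{b}=\sum_{i,j}(MN^{\T})_{ij}\,e_{i}(t)e_{j}(s)=f_{MN}(t,s)$, using $N=N^{\T}$.
\item For $r=2$: $f_{M}\otimes_{2}f_{N}=\sum_{a,b}M_{ab}N_{ab}=\tr(MN^{\T})=\tr(MN)$. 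This is a constant, and $I_{0}$ of a constant is that constant.
\end{itemize}

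For the $r=1$ term, $MN$ need not be symmetric. The step needing some care is checking that the tensor convention $I_{2}(T):=I_{2}(\sym T)$ agrees with the function convention $I_{n}(f):=I_{n}(\tilde{f})$. This holds because permuting the arguments $(t_{1},\dots,t_{n})$ of $f_{T}$ is the same as permuting the indices of $T$, so $\widetilde{f_{T}}=f_{\sym T}$. Hence the $r=1$ term is $4I_{2}(\sym(MN))$. The same observation justifies the $r=0$ term. Combining the three pieces yields the claimed identity.

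The main (mild) obstacle is exactly this bookkeeping: the identification $T\mapsto f_{T}$ must commute with contractions and with symmetrization. As a sanity check, we would take expectations of both sides. By Lemma~\ref{lem:tensor-I2}, $\E[I_{2}(M)I_{2}(N)]=\cov(h^{\T}Mh,h^{\T}Nh)=2\tr(MN)$. On the right-hand side, the $I_{4}$ and $I_{2}$ terms have zero mean, leaving the constant $2\tr(MN)$, which matches.
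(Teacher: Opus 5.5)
Your proposal is correct and follows the paper's proof essentially step for step. You apply the product formula with $n=m=2$ to $f_M,f_N$, then use orthonormality of the $e_i$ to identify the three contractions as $f_{M\otimes N}$, $f_{MN}$, and $\tr(MN)$. Your remark that $\widetilde{f_T}=f_{\sym T}$ is a slightly more explicit justification of the paper's one-line comment that $I_2$ only sees the symmetric part of its coefficient.
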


\begin{lem}
\label{lem:tensor-isometry} If $M,N$ are tensors of orders $n,m$,
then $\E[I_{n}(M)\,I_{m}(N)]=0$ for $n\neq m$, and 
\[
\norm{I_{n}(M)}_{L^{2}(\Omega)}^{2}=n!\,\norm{\sym M}_{\frob}^{2}\leq n!\,\norm M_{\frob}^{2}\,.
\]
\end{lem}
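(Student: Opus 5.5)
The plan is to reduce Lemma~\ref{lem:tensor-isometry} to the orthogonality/isometry relation \eqref{eq:orthogonality} for multiple stochastic integrals of functions, using the identification \eqref{eq:isometry} between tensors and step functions. Recall that for a (possibly non-symmetric) tensor $M$ we defined $I_{n}(M):=I_{n}(f_{\sym M})$. So the first step is to check that symmetrization commutes with this identification: $\widetilde{f_{M}}=f_{\sym M}$. This holds because permuting the arguments $(t_{1},\dots,t_{n})$ of $e_{i_{1}}\otimes\cdots\otimes e_{i_{n}}$ is the same as permuting the indices $(i_{1},\dots,i_{n})$. Summing over permutations therefore gives $f_{\sym M}$ by linearity.

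The second step is the isometry of the map $T\mapsto f_{T}$. Since $\{e_{i}\}$ is orthonormal in $L^{2}(\R_{+})$, the family $\{e_{i_{1}}\otimes\cdots\otimes e_{i_{n}}\}$ is orthonormal in $L^{2}(\R_{+}^{n})$, because each inner product factorizes into a product of $\inner{e_{i_{j}},e_{k_{j}}}$. Hence
\[
\inner{f_{T},f_{S}}_{L^{2}(\R_{+}^{n})}=\sum_{i_{1},\dots,i_{n}}T_{i_{1}\dots i_{n}}S_{i_{1}\dots i_{n}}=\inner{T,S}_{\frob}.
\]

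Then apply \eqref{eq:orthogonality}. For $n\neq m$ it gives $\E[I_{n}(M)I_{m}(N)]=0$ directly. For $n=m$ and $N=M$,
\[
\E[I_{n}(M)^{2}]=n!\,\norm{f_{\sym M}}_{L^{2}}^{2}=n!\,\norm{\sym M}_{\frob}^{2}.
\]

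The last inequality follows because $\sym$ is the average of the permutation operators on $(\Rd)^{\otimes n}$. Each of these operators is a Frobenius isometry, since it merely permutes entries. By the triangle inequality (equivalently, $\sym$ is an orthogonal projection), $\norm{\sym M}_{\frob}\leq\norm M_{\frob}$.

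I expect no real obstacle. The only point needing care is the bookkeeping that the symmetrization convention for tensor MSI agrees with the function-level symmetrization in \eqref{eq:orthogonality}, which is the first step above.
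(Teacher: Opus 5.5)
Your proposal is correct and follows the same route as the paper. Both transport the MSI isometry \eqref{eq:orthogonality} through the tensor--function identification, using $\tilde{f}_{M}=f_{\sym M}$ and the orthonormality of the product functions $e_{i_{1}}\otimes\cdots\otimes e_{i_{n}}$ to get $\norm{f_{T}}_{L^{2}}=\norm{T}_{\frob}$; you additionally verify the commutation of symmetrization with the identification and the contraction $\norm{\sym M}_{\frob}\le\norm{M}_{\frob}$, both of which the paper uses without comment.
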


\subsubsection{The quartic Gaussian polynomial}

For the quartic term, we introduce the notation $N_{k}:=\Dd N(x)[e_{k}]$
for $k\in[d]$. Also, for each $i\in[m]$, define matrices 
\[
(M_{i})_{k\ell}:=e_{i}^{\T}N_{k}Be_{\ell}\,,\qquad S_{i}:=\sym M_{i}\,.
\]
Then,
\[
e_{i}^{\T}\Dd N[h]\,Bh=\sum_{k\ell}h_{k}h_{\ell}e_{i}^{\T}N_{k}Be_{\ell}=h^{\T}M_{i}h=h^{\T}S_{i}h\,.
\]

\paragraph{Decomposition of $H_{2}(0)$.}

We now decompose $H_{2}(0)$ into its Wiener-chaos pieces. Define
$K:=W^{-1/2}U=W^{-1/p}A_{x}$ (so $KK^{\T}=W_{x}^{-1/2}U_{x}U_{x}^{\T}W_{x}^{-1/2}$
and $(KK^{\T})_{ii}=1$), and denote each row by $k_{i}$ (so $\norm{k_{i}}=1$).
Let $K_{i}:=k_{i}k_{i}^{\T}$. By Lemma~\ref{lem:tensor-I2} and
$\norm{k_{i}}=1$,
\[
I_{2}(K_{i})=h^{\T}K_{i}h-\tr K_{i}=h^{\T}K_{i}h-1\,.
\]

Recall that $H_{2}(0)=q^{\T}N'[h]v$ for $q=W_{x}^{1/2}(W_{x}^{-1/p}A_{x}h)^{\circ2}$
and $v=W_{x}^{1/2}A_{x}h=B_{x}h$. Hence, $q_{i}=w_{i}^{1/2}(k_{i}^{\T}h)^{2}=w_{i}^{1/2}\,(1+I_{2}(K_{i}))$
and $e_{i}^{\T}N'[h]Bh=h^{\T}S_{i}h=\tr S_{i}+I_{2}(S_{i})$. Using
Lemma~\ref{lem:tensor-I2-product},
\begin{align*}
H_{2}(0) & =\sum_{i}w_{i}^{1/2}\,\bpar{1+I_{2}(K_{i})}\bpar{\tr S_{i}+I_{2}(S_{i})}\\
 & =\sum_{i}w_{i}^{1/2}\,\bpar{I_{2}(K_{i})\,I_{2}(S_{i})+\tr S_{i}\,I_{2}(K_{i})+I_{2}(S_{i})+\tr S_{i}}\\
 & =\sum_{i}w_{i}^{1/2}\,\Bpar{I_{4}(K_{i}\otimes S_{i})+4I_{2}\bpar{\sym(K_{i}S_{i})}+2\tr(K_{i}S_{i})+\tr S_{i}\,I_{2}(K_{i})+I_{2}(S_{i})+\tr S_{i}}\\
 & =:F_{4}+F_{2}+F_{0}\,,
\end{align*}
where
\begin{align*}
F_{4} & =I_{4}(T_{4})\quad\text{for }T_{4}:=\sum_{i}w_{i}^{1/2}K_{i}\otimes S_{i}\,,\\
F_{2} & =I_{2}(C_{2})\quad\text{for }C_{2}:=\sum_{i}w_{i}^{1/2}\,\bpar{4\sym(K_{i}S_{i})+\tr(S_{i})\,K_{i}+S_{i}}\,,\\
F_{0} & =\sum_{i}w_{i}^{1/2}\,\bpar{2\tr(K_{i}S_{i})+\tr S_{i}}\,.
\end{align*}
Since $\norm{H_{2}(0)}^{2}=\norm{F_{4}}^{2}+\norm{F_{2}}^{2}+\norm{F_{0}}^{2}$
due to the orthogonality of $\{I_{n}\}$ \eqref{eq:orthogonality},
it suffices to estimate the $L^{2}$-norm of $F_{4},F_{2},F_{0}$.
By Lemma~\ref{lem:tensor-isometry}, $\norm{F_{4}}^{2}=\norm{I_{4}(T_{4})}^{2}\leq24\,\norm{T_{4}}_{\frob}^{2}$
and $\norm{F_{2}}^{2}=\norm{I_{2}(C_{2})}^{2}=2\,\norm{C_{2}}_{\frob}^{2}$,
so the remaining task is to bound the Frobenius norms of structured
tensors such as $T_{4}$ and $C_{2}$.

\paragraph{Useful estimates.}

We extract essential terms that repeatedly appear in the tensor analysis:
\[
\mc E_{1}:=\sum_{k=1}^{d}\norm{N_{k}B}_{\frob}^{2}\,,\qquad\mc T_{1}:=\sum_{i}(\tr S_{i})^{2}\,.
\]
We will show $\mc E_{1}\lesssim d$ and $\mc T_{1}\lesssim d^{2}$. 
\begin{lem}
\label{lem:Nr-cancellation} It holds that
\[
\bar{\Lambda}w^{1/2}=Nw^{1/2}=0\,,\qquad N_{a}w^{1/2}=\half\,N^{2}Be_{a}\,,\qquad\sum_{i}w_{i}^{1/2}S_{i}=\half\,B^{\T}N^{2}B\,.
\]
\end{lem}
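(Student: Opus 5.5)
The plan is to verify the four identities in order, each one following from the previous. For the first, use that $P=UU^{\T}$ is an orthogonal projection (as $U^{\T}U=I_{d}$), so $P^{2}=P$, and that $P_{ii}=w_{i}$ by the Lewis-weight fixed point. Then
\[
(P^{(2)}\mathbf 1)_{i}=\sum_{j}P_{ij}^{2}=(P^{2})_{ii}=P_{ii}=w_{i}\,.
\]
This gives $W^{-1/2}P^{(2)}W^{-1/2}w^{1/2}=W^{-1/2}P^{(2)}\mathbf 1=W^{-1/2}w=w^{1/2}$, and hence $\bar{\Lambda}w^{1/2}=w^{1/2}-w^{1/2}=0$.

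For $N$, write $N=2\bar{\Lambda}R$ with $R:=(I-c_{p}\bar{\Lambda})^{-1}$. Since $R$ is a function of $\bar{\Lambda}$, it commutes with $\bar{\Lambda}$, so $N=2R\bar{\Lambda}$. Therefore $Nw^{1/2}=2R\bar{\Lambda}w^{1/2}=0$ by the first identity.

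For the third identity, the key point is that $N_{x}w_{x}^{1/2}=0$ holds at every $x\in\inter\K$, not only at the base point. I will differentiate this identity in the direction $e_{a}$, using the smoothness of Lewis weights implicit in Lemma~\ref{lem:DWh}. This gives
\[
N_{a}w^{1/2}+N\,\Dd(w^{1/2})[e_{a}]=0\,.
\]
By Lemma~\ref{lem:DWh}, $w'_{e_{a}}=-W^{1/2}NW^{1/2}A_{x}e_{a}$. Consequently
\[
\Dd(w^{1/2})[e_{a}]=\tfrac12W^{-1/2}w'_{e_{a}}=-\tfrac12NW^{1/2}A_{x}e_{a}=-\tfrac12NBe_{a}\,,
\]
where I use $B=W^{1/p}U=W^{1/2}A_{x}$. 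Substituting back yields $N_{a}w^{1/2}=\tfrac12N^{2}Be_{a}$.

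For the last identity, I first note that each $N_{k}$ is symmetric, since it is a derivative of the symmetric matrix function $N_{x}$ (Lemma~\ref{lem:LS-comp-tool}). Then
\[
\Bpar{\sum_{i}w_{i}^{1/2}M_{i}}_{k\ell}=(w^{1/2})^{\T}N_{k}Be_{\ell}=(N_{k}w^{1/2})^{\T}Be_{\ell}=\tfrac12e_{k}^{\T}B^{\T}N^{2}Be_{\ell}\,.
\]
The right-hand side is already symmetric in $(k,\ell)$. Hence symmetrizing each $M_{i}$ leaves the sum unchanged, and $\sum_{i}w_{i}^{1/2}S_{i}=\tfrac12B^{\T}N^{2}B$.

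The only delicate step is the differentiation in the third identity. It requires $x\mapsto N_{x}$ and $x\mapsto w_{x}$ to be differentiable, which follows from the smoothness of Lewis weights in \cite{LS19solving}. It also requires keeping track of which objects depend on $x$: the identity $N_{x}w_{x}^{1/2}=0$ is differentiated as a whole, so the $x$-dependence of $A_{x}$ only enters through $\Dd w[e_{a}]$, and that derivative is already given by Lemma~\ref{lem:DWh}.
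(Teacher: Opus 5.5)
Your proof is correct and follows the paper's argument step for step. It uses $P^{\circ2}\mathbf 1=w$ to get $\bar{\Lambda}w^{1/2}=0$ and hence $Nw^{1/2}=0$, differentiates $N_{x}w_{x}^{1/2}=0$ using Lemma~\ref{lem:DWh} to get $N_{a}w^{1/2}=\tfrac12 N^{2}Be_{a}$, and contracts with $w^{1/2}$ using symmetry of $N_{k}$ for the last identity. Your small variations (commuting $R$ with $\bar{\Lambda}$ instead of noting $Rw^{1/2}=w^{1/2}$, and symmetrizing after summing) make no difference, and you usefully make explicit the facts $P^{2}=P$, $P_{ii}=w_{i}$, and $N_{k}^{\T}=N_{k}$ that the paper uses tacitly.
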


\begin{proof}
Since $P^{\circ2}\ind=w$,
\[
\bar{\Lambda}w^{1/2}=(I-W^{-1/2}P^{\circ2}W^{-1/2})\,w^{1/2}=w^{1/2}-W^{-1/2}P^{\circ2}\ind=0\,.
\]
Since $(I-c_{p}\bar{\Lambda})\,w^{1/2}=w^{1/2}$, we also have $(I-c_{p}\bar{\Lambda})^{-1}w^{1/2}=w^{1/2}$.
Then, $Nw^{1/2}=2\bar{\Lambda}(1-c_{p}\bar{\Lambda})^{-1}w^{1/2}=2\bar{\Lambda}w^{1/2}=0$.
Moreover, using $\Dd w[h]=-W^{1/2}NW^{\half}A_{x}h$ (Lemma~\ref{lem:DWh}),
\begin{equation}
\omega_{a}:=\Dd w^{1/2}[e_{a}]=\half\,W^{-1/2}\Dd w[e_{a}]=-\half\,NBe_{a}\,.\label{eq:omega-a}
\end{equation}
Differentiating $Nw^{1/2}=0$ in direction $e_{a}$ gives
\[
N_{a}w^{1/2}=-N\omega_{a}=\half\,N^{2}Be_{a}\,.
\]
Using this identity,
\begin{align*}
\Bpar{\sum_{i}w_{i}^{1/2}S_{i}}_{k\ell} & =\half\sum_{i}w_{i}^{1/2}\,(e_{i}^{\T}N_{k}Be_{\ell}+e_{i}^{\T}N_{\ell}Be_{k})=\half\,\bpar{(w^{1/2})^{\T}N_{k}Be_{\ell}+(w^{1/2})^{\T}N_{\ell}Be_{k}}\\
 & =\half\,\bpar{\half\,e_{k}^{\T}B^{\T}N^{2}Be_{\ell}+\half\,e_{\ell}^{\T}B^{\T}N^{2}Be_{k}}=\half\,e_{k}^{\T}B^{\T}N^{2}Be_{\ell}=\half\,(B^{\T}N^{2}B)_{k\ell}\,.
\end{align*}
This completes the proof.
\end{proof}
Before proceeding, note that $\norm B\leq1$ and $\norm B_{\frob}^{2}\leq d$
from $B^{\T}B=U^{\T}W^{2/p}U\preceq U^{\T}U=I_{d}$.
\begin{lem}
\label{lem:mcE1} $\mc E_{1}=\sum_{k=1}^{d}\norm{N_{k}B}_{\frob}^{2}\lesssim d$.
\end{lem}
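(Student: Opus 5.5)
We will bound $\mc E_{1}=\sum_{k}\norm{N_{k}B}_{\frob}^{2}$ by rewriting it as a Gaussian average, and then control that average with the first-order Lewis-weight derivative bound of Lemma~\ref{lem:LS-comp-tool} (Lemma 37). The key observation is that for $h\sim\msf N(0,I_{d})$ we have $\Dd N[h]=\sum_{k}h_{k}N_{k}$. Hence, for each fixed column index $\ell$,
\[
\E_{h}\norm{\Dd N[h]\,Be_{\ell}}^{2}=\sum_{k}\norm{N_{k}Be_{\ell}}^{2}\,,
\]
since the cross terms $k\neq k'$ vanish by independence of the coordinates. Summing over $\ell$ gives
\[
\mc E_{1}=\E_{h}\norm{\Dd N[h]\,B}_{\frob}^{2}\,.
\]
So the task reduces to bounding the Frobenius norm of $\Dd N[h]\,B$ in expectation over a Gaussian direction.

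Next we apply the operator-norm bound. Lemma~\ref{lem:LS-comp-tool} gives
\[
\norm{(I+N)^{-1/2}\,\Dd N[h]\,(I+N)^{-1/2}}\leq4p^{5/2}\norm h_{\theta_{x}}\,,\qquad\theta_{x}=A_{x}^{\T}WA_{x}\preceq g_{0}(x)=I_{d}\,.
\]
Since $0\preceq N\preceq pI$, the factor $(I+N)^{1/2}$ has operator norm at most $(1+p)^{1/2}$. This yields $\norm{\Dd N[h]}\lesssim p^{7/2}\norm h$. Combining this with $\norm B_{\frob}^{2}\leq d$,
\[
\norm{\Dd N[h]\,B}_{\frob}\leq\norm{\Dd N[h]}\,\norm B_{\frob}\lesssim\norm h\,d^{1/2}\,,
\]
and taking expectations gives only $\mc E_{1}\lesssim\E\norm h^{2}\cdot d=d^{2}$. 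This loses a factor $d$, so the naive operator-norm route is insufficient. This is the main obstacle: we need a Frobenius-type estimate, not an operator-norm estimate.

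To remove the lost factor, we will not pair $\Dd N[h]$ with all of $B$. Instead we exploit that the directional derivative is driven by rank-structured quantities. From \eqref{eq:N-derivatives} at $t=0$ we have $\Dd N[h]=2R\,\Dd\bar{\Lambda}[h]\,R$ with $\norm R\lesssim1$ (up to polylog factors), so it suffices to bound $\E\norm{\Dd\bar{\Lambda}[h]\,B}_{\frob}^{2}$. Moreover, since $\bar{\Lambda}=I-\Phi\Phi^{\T}$ and $\norm{\Phi}\leq1$, it suffices to control
\[
\E\norm{\Phi'\Phi^{\T}B}_{\frob}^{2}+\E\norm{\Phi(\Phi')^{\T}B}_{\frob}^{2}\,.
\]
We bound each term as follows.
\begin{itemize}
\item For the first term, $\norm{\Phi'\Phi^{\T}B}_{\frob}\leq\norm{\Phi'}_{\frob}\norm{\Phi^{\T}B}$, and $\E\norm{\Phi'}_{\frob}^{2}\lesssim\E\norm{U'}_{\frob}^{2}\leq\E\norm{z}_{W}^{2}\lesssim d$ by the computation in Lemma~\ref{lem:good-event}; that computation is exact in expectation, because $z=-(I+\beta N)W^{-1/2}v$ is linear in $h$.
\item For the second term, $\norm{\Phi(\Phi')^{\T}B}_{\frob}\leq\norm{\Phi'}_{\frob}\,\norm B\leq\norm{\Phi'}_{\frob}$, giving the same $\E\lesssim d$ bound.
\end{itemize}
Here the point is that one factor carries the Frobenius norm while the other, $\Phi^{\T}B$ or $B$, only contributes an operator norm $\leq1$. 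For this we need $\norm{\Phi'}_{\frob}$ in expectation rather than on the good event. We obtain it from \eqref{eq:Phi-derivative-bounds} together with $U'=P^{\perp}ZU$ and $\norm{U'}_{\frob}^{2}\leq\norm{z}_{W}^{2}$. Then $\E\norm z_{W}^{2}=\E\norm{(I+\beta N)v}^{2}\lesssim\tr(B^{\T}B)\leq d$.

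The Frobenius-norm bound for $\Phi'$ from the row-map estimate of Lemma~\ref{lem:row-map} is deterministic, so it transfers directly. Collecting the terms gives $\mc E_{1}\lesssim d$.
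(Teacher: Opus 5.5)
Your proposal is correct and is essentially the paper's argument. Writing $\mc E_{1}=\E_{h}\norm{\Dd N[h]B}_{\frob}^{2}$ and bounding it by $\E\norm{z[h]}_{W}^{2}\lesssim d$ is the same as the paper's deterministic bound $\norm{N_{k}B}_{\frob}\le\norm{N_{k}}_{\frob}\norm B\lesssim\norm{z[e_{k}]}_{W}\lesssim1$, summed over the coordinate directions, and both use the identical chain $\Dd N\to\Dd\bar{\Lambda}\to\Dd\Phi\to\Dd U\to z$.

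There are two harmless slips. First, $R$ sits between $\Dd\bar{\Lambda}[h]$ and $B$, so you should pair $\Dd\bar{\Lambda}[h]$ with $RB$, using $\norm{RB}\lesssim p$. Second, the correct formula is $z=-W^{-1/2}(I+\beta N)v$, not $-(I+\beta N)W^{-1/2}v$; your identity $\norm z_{W}=\norm{(I+\beta N)v}$ is nonetheless right.
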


\begin{proof}
Fix a direction $r\in\Rd$ and consider a path $x_{t}=x+tr$. Using
the computations along the path as in Lemma~\ref{lem:good-event},
for $z[r]=\beta\Dd\ell_{x}[r]-A_{x}r$,
\[
\norm{\Dd N[r]}_{\frob}\lesssim\norm{\Dd\Lambda[r]}_{\frob}\lesssim\norm{\Dd\Phi[r]}_{\frob}\lesssim\norm{\Dd U[r]}_{\frob}\leq\norm{z[r]}_{W}\,.
\]
Since $\norm B\leq1$, $\norm N\lesssim1$, and $W\preceq I$,
\begin{align*}
\norm{\Dd N[r]\,B}_{\frob} & \leq\norm{\Dd N[r]}_{\frob}\norm B\lesssim\norm{W^{1/2}z[r]}=\norm{(I+\beta N)\,W^{1/2}A_{x}r}\\
 & \lesssim\norm{W^{1/2}A_{x}r}\leq\sqrt{r^{\T}A_{x}^{\T}W_{x}^{\alpha}A_{x}r}=\norm r\,.
\end{align*}
Substituting $r=e_{k}$, we obtain $\norm{N_{k}B}_{\frob}\lesssim1$,
so squaring and summing over $k$ gives $\sum_{k}\norm{N_{k}B}_{\frob}^{2}\lesssim d$.
\end{proof}
\begin{lem}
\label{lem:mcT1} $\mc T_{1}=\sum_{i}(\tr S_{i})^{2}\lesssim d^{2}$.
\end{lem}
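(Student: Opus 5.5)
The plan is to write $\mc T_{1}$ as the squared Euclidean norm of a single vector, and then apply the triangle inequality and Cauchy--Schwarz, using Lemma~\ref{lem:mcE1} as the only nontrivial input.

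\textbf{Step 1 (vector form).} Since $S_{i}=\sym M_{i}$, the trace is unchanged by symmetrization, so
\[
\tr S_{i}=\tr M_{i}=\sum_{k=1}^{d}(M_{i})_{kk}=\sum_{k=1}^{d}e_{i}^{\T}N_{k}Be_{k}\,.
\]
Define $\tau\in\R^{m}$ by $\tau_{i}:=\tr S_{i}$. Then
\[
\tau=\sum_{k=1}^{d}N_{k}Be_{k}
\qquad\text{and}\qquad
\mc T_{1}=\norm{\tau}^{2}\,.
\]

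\textbf{Step 2 (triangle inequality and Cauchy--Schwarz).} Applying the triangle inequality to the sum over $k\in[d]$, and then Cauchy--Schwarz to that sum of $d$ terms, gives
\[
\norm{\tau}\leq\sum_{k=1}^{d}\norm{N_{k}Be_{k}}\leq d^{1/2}\Bpar{\sum_{k=1}^{d}\norm{N_{k}Be_{k}}^{2}}^{1/2}\,.
\]
Each $N_{k}Be_{k}$ is a single column of the matrix $N_{k}B$, so $\norm{N_{k}Be_{k}}\leq\norm{N_{k}B}_{\frob}$. It follows that
\[
\mc T_{1}=\norm{\tau}^{2}\leq d\sum_{k=1}^{d}\norm{N_{k}B}_{\frob}^{2}=d\,\mc E_{1}\,.
\]

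\textbf{Step 3 (conclude).} Lemma~\ref{lem:mcE1} gives $\mc E_{1}\lesssim d$, up to the suppressed polylogarithmic factors. Substituting this into Step 2 yields $\mc T_{1}\lesssim d^{2}$.

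\textbf{Main obstacle.} Given Lemma~\ref{lem:mcE1}, there is no genuine obstacle. The only step requiring care is recognizing that $\tr S_{i}$ collapses to the $i$-th coordinate of $\sum_{k}N_{k}Be_{k}$. Once this is seen, the argument needs no information about how $N_{k}$ acts on individual rows.

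A sharper bound might be obtainable from the cancellation identity $N_{a}w^{1/2}=\half N^{2}Be_{a}$ in Lemma~\ref{lem:Nr-cancellation}. However, the crude Cauchy--Schwarz loss of $d^{1/2}$ in Step 2 already matches the target $d^{2}$, so I would not attempt to refine it.
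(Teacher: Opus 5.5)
Your proof is correct and follows essentially the same route as the paper: the paper also identifies $\tr S_{i}$ as the $i$-th coordinate of $\sum_{k}N_{k}Be_{k}$. It then applies Cauchy--Schwarz over the $d$ summands, bounds each column by $\norm{N_{k}B}_{\frob}$, and concludes from Lemma~\ref{lem:mcE1}.
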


\begin{proof}
For $B_{k}:=Be_{k}$ and $v:=\sum_{k}N_{k}B_{k}\in\mathbb{R}^{m}$,
then the $i$-th coordinate of $v$ is exactly $v_{i}=\tr S_{i}$.
Hence,
\[
\mc T_{1}=\sum_{i}(\tr S_{i})^{2}=\sum_{i}v_{i}^{2}=\Bnorm{\sum_{k}N_{k}B_{k}}^{2}\leq d\sum_{k}\norm{N_{k}B_{k}}^{2}\leq d\sum_{k}\norm{N_{k}B}_{\frob}^{2}\lesssim d^{2}\,,
\]
where the last step is Lemma~\ref{lem:mcE1}.
\end{proof}
\begin{lem}
\label{lem:S-contractions}$\sum_{i}\norm{S_{i}}_{\frob}^{2}\lesssim d$
and $\sum_{i}(k_{i}^{\T}S_{i}k_{i})^{2}\lesssim d$.
\end{lem}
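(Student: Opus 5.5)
The first bound reduces to $\mc E_{1}$ by a direct reindexing. Since $S_{i}=\sym M_{i}$, we have $\norm{S_{i}}_{\frob}\le\norm{M_{i}}_{\frob}$, so it suffices to bound $\sum_{i}\norm{M_{i}}_{\frob}^{2}$. By definition $(M_{i})_{k\ell}=e_{i}^{\T}N_{k}Be_{\ell}=(N_{k}B)_{i\ell}$. Summing over $i$ and $\ell$ for fixed $k$ therefore gives $\norm{N_{k}B}_{\frob}^{2}$, and so
\[
\sum_{i}\norm{S_{i}}_{\frob}^{2}\le\sum_{i}\sum_{k,\ell}(N_{k}B)_{i\ell}^{2}=\sum_{k}\norm{N_{k}B}_{\frob}^{2}=\mc E_{1}\lesssim d
\]
by Lemma~\ref{lem:mcE1}.

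For the second bound I would use only $\norm{k_{i}}=1$ and Cauchy--Schwarz. For unit $k_{i}$ we have $\abs{k_{i}^{\T}S_{i}k_{i}}\le\norm{S_{i}}\le\norm{S_{i}}_{\frob}$, which gives $\sum_{i}(k_{i}^{\T}S_{i}k_{i})^{2}\le\sum_{i}\norm{S_{i}}_{\frob}^{2}\lesssim d$ by the first part. It is also convenient to record the alternative form $k_{i}^{\T}S_{i}k_{i}=k_{i}^{\T}M_{i}k_{i}=\sum_{k,\ell}k_{i,k}k_{i,\ell}(N_{k}B)_{i\ell}=e_{i}^{\T}\Dd N[k_{i}]\,Bk_{i}$, in case a sharper version is needed later. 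The crude bound already matches the target $d$, so no cancellation (such as Lemma~\ref{lem:Nr-cancellation}) is needed here.

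I do not expect a real obstacle. The only point requiring care is the indexing identity $\sum_{i}\norm{M_{i}}_{\frob}^{2}=\sum_{k}\norm{N_{k}B}_{\frob}^{2}$, namely that the three indices $(i,k,\ell)$ of the same $3$-tensor are simply regrouped. One also has to use that the symmetrization $\sym$ is an orthogonal projection, so it does not increase the Frobenius norm. The nontrivial input, $\mc E_{1}\lesssim d$, is already supplied by Lemma~\ref{lem:mcE1}.
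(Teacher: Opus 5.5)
Your proof is correct and matches the paper's argument. The first bound is the same reindexing, $\sum_i\|M_i\|_{\mathrm{F}}^2=\sum_k\|N_kB\|_{\mathrm{F}}^2=\mathcal{E}_1$. For the second bound, the paper expands $k_i^{\T}M_ik_i=\sum_k k_{ik}\,e_i^{\T}N_kBk_i$ and applies Cauchy--Schwarz twice to reach $\sum_{i}\|M_i\|_{\mathrm{F}}^2$, which is the same bound your one-line estimate $|k_i^{\T}S_ik_i|\le\|S_i\|\le\|S_i\|_{\mathrm{F}}$ gives more directly.
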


\begin{proof}
For the Frobenius bound, by the definition of $S_{i}$ as the symmetrization
of $M_{i}$,
\[
\sum_{i}\norm{S_{i}}_{\frob}^{2}\leq\sum_{i}\norm{M_{i}}_{\frob}^{2}=\sum_{i,k,\ell}(e_{i}^{\T}N_{k}Be_{\ell})^{2}=\sum_{k}\norm{N_{k}B}_{\frob}^{2}\lesssim d
\]
by Lemma~\ref{lem:mcE1}. Next, using $\norm{k_{i}}=1$,
\begin{align*}
k_{i}^{\T}S_{i}k_{i} & =k_{i}^{\T}M_{i}k_{i}=\sum_{j,\ell}e_{i}^{\T}N_{j}Be_{\ell}\,k_{ij}k_{i\ell}=e_{i}^{\T}N'[k_{i}]Bk_{i}\\
 & =\sum_{k}k_{ik}\,e_{i}^{\T}N_{k}Bk_{i}\leq\Bpar{\sum_{k}(e_{i}^{\T}N_{k}Bk_{i})^{2}}^{1/2}\leq\Bpar{\sum_{k}\norm{e_{i}^{\T}N_{k}B}^{2}}^{1/2}\,.
\end{align*}
Squaring and summing over $i$ gives
\[
\sum_{i}\bpar{k_{i}^{\T}S_{i}k_{i}}^{2}\leq\sum_{i,k}\norm{e_{i}^{\T}N_{k}B}^{2}=\sum_{k}\norm{N_{k}B}_{\frob}^{2}\lesssim d\,.
\]
This completes the proof.
\end{proof}

\paragraph{Norm of $H_{2}(0)$.}

Combining the computational lemmas above, we can readily bound $\norm{H_{2}(0)}_{L^{2}}$.
\begin{lem}
\label{lem:qNv-l2} $\norm{H_{2}(0)}_{L^{2}}\lesssim d$, so $H_{2}(0)=\Otilde(d)$
with high probability.
\end{lem}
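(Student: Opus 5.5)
By orthogonality of chaoses \eqref{eq:orthogonality}, $\norm{H_{2}(0)}_{L^{2}}^{2}=\norm{F_{4}}^{2}+\norm{F_{2}}^{2}+F_{0}^{2}$. So the plan is to bound each of $\norm{T_{4}}_{\frob}$, $\norm{C_{2}}_{\frob}$ and $\abs{F_{0}}$ by $O(d)$ and then apply Lemma~\ref{lem:tensor-isometry}. The high-probability statement then follows from Lemma~\ref{lem:conc-gaussian-poly} with degree $n=4$.

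\textbf{The quartic chaos $F_{4}$.} Expand
\[
\norm{T_{4}}_{\frob}^{2}=\sum_{i,j}w_{i}^{1/2}w_{j}^{1/2}\inner{K_{i},K_{j}}\inner{S_{i},S_{j}},\qquad \inner{K_{i},K_{j}}=(k_{i}^{\T}k_{j})^{2}=\frac{P_{ij}^{2}}{w_{i}w_{j}}.
\]
This rewrites the sum as $\sum_{ij}(P^{\circ2})_{ij}\,w_{i}^{-1/2}w_{j}^{-1/2}\inner{S_{i},S_{j}}$. View it as a quadratic form of the Gram matrix $G_{ij}=\inner{S_{i},S_{j}}$ against the matrix $W^{-1/2}P^{\circ2}W^{-1/2}=\Phi\Phi^{\T}\preceq I$. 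Since $G$ is PSD, a Schur-product argument gives
\[
\tr\bpar{(\Phi\Phi^{\T})\circ G}\le\norm{\Phi\Phi^{\T}}\,\max_{i}(\text{diag})\cdots .
\]
More cleanly, write $G=\mc S\mc S^{\T}$, where $\mc S$ has rows $\mathrm{vec}(S_{i})$. Then the sum equals $\sum_{a}\mc S_{\cdot a}^{\T}(\Phi\Phi^{\T})\mc S_{\cdot a}\le\sum_{i}\norm{S_{i}}_{\frob}^{2}\lesssim d$ by Lemma~\ref{lem:S-contractions}.

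\textbf{The quadratic chaos $F_{2}$.} Split $C_{2}$ into three parts and use the triangle inequality.
\begin{itemize}
\item The term $\sum_{i}w_{i}^{1/2}S_{i}=\frac12 B^{\T}N^{2}B$ by Lemma~\ref{lem:Nr-cancellation}. Its Frobenius norm is $\lesssim\norm{B}_{\frob}\lesssim d^{1/2}$.
\item The term $\sum_{i}w_{i}^{1/2}\tr(S_{i})K_{i}$ has squared Frobenius norm $\sum_{ij}(w_{i}w_{j})^{1/2}\tr S_{i}\tr S_{j}(k_{i}^{\T}k_{j})^{2}$. By the same $\Phi\Phi^{\T}\preceq I$ trick this equals $\tau^{\T}\Phi\Phi^{\T}\tau$ with $\tau_{i}=\tr S_{i}$, so it is at most $\mc T_{1}\lesssim d^{2}$ by Lemma~\ref{lem:mcT1}. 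This gives $O(d)$ in norm.
\item The term $\sum_{i}w_{i}^{1/2}\sym(K_{i}S_{i})$, with $K_{i}S_{i}=k_{i}(S_{i}k_{i})^{\T}$. Test it against a unit-Frobenius symmetric $X$: the pairing is $\sum_{i}w_{i}^{1/2}k_{i}^{\T}XS_{i}k_{i}$. Cauchy--Schwarz bounds this by $(\sum_{i}w_{i}\norm{Xk_{i}}^{2})^{1/2}(\sum_{i}\norm{S_{i}k_{i}}^{2})^{1/2}$. The first factor is $\tr(X K^{\T}WK X)=\tr(XU^{\T}UX)=1$. The second factor is at most $(\sum_{i}\norm{S_{i}}^{2})^{1/2}\lesssim d^{1/2}$.
\end{itemize}
Hence $\norm{C_{2}}_{\frob}\lesssim d$.

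\textbf{The constant $F_{0}$.} By Lemma~\ref{lem:Nr-cancellation}, $\sum_{i}w_{i}^{1/2}\tr S_{i}=\frac12\tr(B^{\T}N^{2}B)\lesssim d$. For the other piece, $\tr(K_{i}S_{i})=k_{i}^{\T}S_{i}k_{i}$, and Cauchy--Schwarz with $\sum_{i}w_{i}=d$ and Lemma~\ref{lem:S-contractions} give $\abs{\sum_{i}w_{i}^{1/2}k_{i}^{\T}S_{i}k_{i}}\le d^{1/2}\cdot d^{1/2}=d$.

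Combining, $\norm{H_{2}(0)}_{L^{2}}\lesssim d$.

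The main obstacle is the $F_{4}$ and $\tr(S_{i})K_{i}$ cross terms. A naive triangle inequality there loses a factor $d^{1/2}$, so one must exploit $P^{\circ2}\preceq W$ (in the form $\Phi\Phi^{\T}\preceq I$) in the quadratic-form arguments above.
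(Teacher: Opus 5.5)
Your proposal is correct and follows the paper's proof essentially step for step. It uses the same three-chaos split, the same kernel inequality $W^{-1/2}P^{\circ2}W^{-1/2}\preceq I$ for $T_4$ and for the $\tr(S_i)K_i$ term (your $\mathcal S$-column argument is exactly the paper's $\tr(GH)\le\tr H$), and the same Cauchy--Schwarz bound for $F_0$. Your duality bound for $\sum_i w_i^{1/2}\sym(K_iS_i)$, which uses $K^{\T}WK=I$, and your $\norm{B}_{\frob}$ bound for $\sum_i w_i^{1/2}S_i$ are sharper than the paper's: they give $d^{1/2}$ instead of $d$. They do not change the final rate, which is set by $\mathcal T_1$.

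One small correction to your closing remark concerns $T_4$. There the naive triangle inequality already gives $\norm{T_4}_{\frob}\le(\sum_i w_i)^{1/2}(\sum_i\norm{S_i}_{\frob}^2)^{1/2}\lesssim d$, which suffices. So the kernel inequality is genuinely needed only for the $\tr(S_i)K_i$ term. Also delete the half-written Schur-product line: as written, $\tr((\Phi\Phi^{\T})\circ G)$ picks out only the diagonal and is not the quantity you need.
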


\begin{proof}
We estimate the three orthogonal chaos components $F_{4},F_{2},F_{0}$.
By Lemma~\ref{lem:tensor-isometry},
\[
\norm{F_{4}}_{L^{2}}^{2}\leq24\norm{T_{4}}_{\frob}^{2}\,,\qquad\norm{F_{2}}_{L^{2}}^{2}=2\norm{C_{2}}_{\frob}^{2}\,.
\]

For $F_{4}$,
\[
\norm{T_{4}}_{\frob}^{2}=\Bnorm{\sum_{i}w_{i}^{1/2}K_{i}\otimes S_{i}}_{\frob}^{2}=\sum_{i,j}\sqrt{w_{i}w_{j}}\,\inner{K_{i},K_{j}}\inner{S_{i},S_{j}}\,.
\]
Observe that
\begin{equation}
\sqrt{w_{i}w_{j}}\inner{K_{i},K_{j}}=\sqrt{w_{i}w_{j}}\,(k_{i}^{\T}k_{j})^{2}=\frac{P_{ij}^{2}}{\sqrt{w_{i}w_{j}}}=\bpar{W^{-1/2}P^{\circ2}W^{-1/2}}_{ij}\,.\label{eq:wwKK}
\end{equation}
Set $G:=W^{-1/2}P^{\circ2}W^{-1/2}$. The preceding identity gives
$\sqrt{w_{i}w_{j}}\inner{K_{i},K_{j}}=G_{ij}$. Here we use the kernel
inequality in the following form. If $H_{ij}:=\inner{A_{i},A_{j}}$
is any positive-semidefinite Gram matrix, then $0\preceq G\preceq I$
implies
\begin{equation}
\sum_{i,j}G_{ij}\inner{A_{i},A_{j}}=\tr(GH)\leq\tr H=\sum_{i}\norm{A_{i}}^{2}\,.\label{eq:gaa-bound}
\end{equation}
Applying this with $A_{i}=S_{i}$ and using the first estimate in
Lemma~\ref{lem:S-contractions} 
\[
\norm{T_{4}}_{\frob}^{2}\leq\sum_{i}\norm{S_{i}}_{\frob}^{2}\lesssim d\,.
\]
Thus, $\norm{F_{4}}_{L^{2}}^{2}\lesssim d$.

For $F_{2}$, recall $C_{2}=\sum_{i}w_{i}^{1/2}\,(S_{i}+4\sym(K_{i}S_{i})+\tr(S_{i})K_{i})$.
Note that Lemma~\ref{lem:Nr-cancellation} yields
\begin{equation}
\sum_{i}w_{i}^{1/2}\tr S_{i}=\tr\Bpar{\sum_{i}w_{i}^{1/2}S_{i}}=\half\tr(B^{\T}N^{2}B)\lesssim\tr(B^{\T}B)\lesssim d\,.\label{eq:wtrS}
\end{equation}
For the first summand, we again use Lemma~\ref{lem:Nr-cancellation}:
\[
\Bnorm{\sum_{i}w_{i}^{1/2}S_{i}}_{\frob}=\half\,\norm{B^{\T}N^{2}B}_{\frob}\lesssim\tr(B^{\T}B)\lesssim d\,.
\]
For the second summand, by $\sum_{i}w_{i}\lesssim d$ and the first
estimate of Lemma~\ref{lem:S-contractions},
\[
\Bnorm{\sum_{i}w_{i}^{1/2}\sym(K_{i}S_{i})}_{\frob}\leq\sum_{i}w_{i}^{1/2}\norm{K_{i}S_{i}}_{\frob}\leq\sum_{i}w_{i}^{1/2}\norm{S_{i}}_{\frob}\leq\Bpar{\sum_{i}w_{i}}^{1/2}\Bpar{\sum_{i}\norm{S_{i}}_{\frob}^{2}}^{1/2}\lesssim d\,.
\]
For the trace summand, by \eqref{eq:wwKK} and Lemma~\ref{lem:mcT1},
\[
\Bnorm{\sum_{i}w_{i}^{1/2}K_{i}\tr S_{i}}_{\frob}^{2}=\sum_{i,j}w_{i}^{1/2}w_{j}^{1/2}\inner{K_{i},K_{j}}\tr S_{i}\,\tr S_{j}\leq\sum_{i}(\tr S_{i})^{2}\lesssim d^{2}\,.
\]
 Hence, $\norm{C_{2}}_{\frob}\lesssim d$, and $\norm{F_{2}}_{L^{2}}^{2}\lesssim d^{2}$.

For $F_{0}$, recall $F_{0}=\sum_{i}w_{i}^{1/2}\,(2\tr(K_{i}S_{i})+\tr S_{i})$.
Since $\tr(K_{i}S_{i})=k_{i}^{\T}S_{i}k_{i}$, Lemma~\ref{lem:S-contractions}
leads to

\[
\sum_{i}w_{i}^{1/2}\tr(K_{i}S_{i})\leq\Bpar{\sum_{i}w_{i}}^{1/2}\Bpar{\sum_{i}(k_{i}^{\T}S_{i}k_{i})^{2}}^{1/2}\lesssim d\,.
\]
The second term is bounded by $O(d)$ as in \eqref{eq:wtrS}. Hence,
$\norm{F_{0}}_{L^{2}}^{2}=|F_{0}|^{2}\lesssim d^{2}$.

Combining the orthogonal chaos pieces,
\[
\norm{H_{2}(0)}_{L^{2}}^{2}=\norm{F_{4}}_{L^{2}}^{2}+\norm{F_{2}}_{L^{2}}^{2}+\norm{F_{0}}_{L^{2}}^{2}\lesssim d^{2}\,.
\]
The high-probability statement follows from Lemma~\ref{lem:conc-gaussian-poly}.
\end{proof}

\subsubsection{The quintic Gaussian polynomial\label{sec:quintic-gaussian-polynomial}}

Define $N_{ab}:=\Dd^{2}N_{x}[e_{a},e_{b}]$ for $a,b\in[d]$. Also,
for each row $i$, define the order-$3$ tensor $C_{i}^{0}$ by 
\[
C_{i}^{0}(a,b,\ell):=e_{i}^{\T}N_{ab}Be_{\ell}\,,\qquad C_{i}:=\sym C_{i}^{0}\,.
\]
With this notation, $C_{i}[h^{\otimes3}]=e_{i}^{\T}N_{0}''[h,h]Bh$.
We also define $\tau_{i}:=\tr C_{i}$ and $(\tau_{i})_{a}:=\sum_{b}C_{i}[a,b,b]$.

\paragraph{Decomposition of $H_{3}(0)$.}

As in the quartic term, note that $(k_{i}^{\T}h)^{2}=h^{\T}K_{i}h=1+I_{2}(K_{i})$.
Also, $C_{i}[h^{\otimes3}]=I_{3}(C_{i})+3I_{1}(\tau_{i})$ by Lemma~\ref{lem:tensor-I3}.
Thus,
\[
H_{3}(0)=\sum_{i}w_{i}^{1/2}\,\bpar{1+I_{2}(K_{i})}\bpar{I_{3}(C_{i})+3I_{1}(\tau_{i})}\,.
\]
By Lemma~\ref{lem:msi-product},
\[
I_{2}(K_{i})I_{3}(C_{i})=I_{5}(K_{i}\otimes C_{i})+6I_{3}(K_{i}\otimes_{1}C_{i})+6I_{1}(K_{i}\otimes_{2}C_{i})\,,\quad I_{2}(K_{i})I_{1}(\tau_{i})=I_{3}(K_{i}\otimes\tau_{i})+2I_{1}(K_{i}\otimes_{1}\tau_{i})\,.
\]
This implies that 
\begin{align*}
\bpar{1+I_{2}(K_{i})}\bpar{I_{3}(C_{i})+3I_{1}(\tau_{i})} & =I_{5}(K_{i}\otimes C_{i})+I_{3}(C_{i}+6K_{i}\otimes_{1}C_{i}+3K_{i}\otimes\tau_{i})+\\
 & \quad+I_{1}(3\tau_{i}+6K_{i}\otimes_{2}C_{i}+6K_{i}\otimes_{1}\tau_{i})\,.
\end{align*}
Hence, we can write $H_{3}(0)=I_{5}(T_{5})+I_{3}(T_{3})+I_{1}(T_{1})$,
where
\begin{align*}
T_{5} & :=\sum_{i}w_{i}^{1/2}K_{i}\otimes C_{i}\,,\\
T_{3} & :=\sum_{i}w_{i}^{1/2}\,(C_{i}+6K_{i}\otimes_{1}C_{i}+3K_{i}\otimes\tau_{i})\,,\\
T_{1} & :=\sum_{i}w_{i}^{1/2}\,(3\tau_{i}+6K_{i}\otimes_{2}C_{i}+6K_{i}\otimes_{1}\tau_{i})\,.
\end{align*}

\paragraph{Useful estimates.}

As in the analysis of $H_{2}(0)$, we first focus on bounding useful
terms 
\[
\mc E_{2}:=\sum_{a,b=1}^{d}\norm{N_{ab}B}_{\frob}^{2}\,,\qquad\mc T_{2}:=\sum_{i}\norm{\tr C_{i}}^{2}\,,\qquad S_{\tau}:=\sum_{i}w_{i}^{1/2}\tr C_{i}\,.
\]
To this end, we need two computational lemmas.
\begin{lem}
\label{lem:two-direction-score-basis}$\mc Z_{2}:=\sum_{a,b}\norm{W^{1/2}z_{ab}}^{2}\lesssim d^{2}$
and $\mc B_{2}:=\sum_{a,b}\norm{B_{a,b}}^{2}\lesssim d^{2}$.
\end{lem}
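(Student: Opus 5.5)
The plan is to reduce both double sums to second moments of a quadratic form evaluated at a Gaussian direction, and then to bound that form by a deterministic, degree-two homogeneous estimate at the base point. I take $z_{ab}$ to be the mixed second directional derivative at $x$ of $z[r]=\beta\Dd\ell_{x}[r]-A_{x}r$ in the directions $e_{a},e_{b}$ (the polarized analogue of $z_{t}'$ at $t=0$), and $B_{a,b}$ to be the corresponding mixed second derivative of the moving frame (the analogue of $U_{t}''$ at $t=0$). This reading of the notation is my assumption, since neither object is defined in the text above. Both are symmetric bilinear in $(e_{a},e_{b})$, with values in $\R^{m}$ and $\R^{m\times d}$ respectively.

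\textbf{Step 1 (polarization).} Let $X[\cdot,\cdot]$ be a symmetric, vector- or matrix-valued bilinear form, and let $h\sim\msf N(0,I_{d})$. Expanding $X[h,h]=\sum_{a,b}h_{a}h_{b}X_{ab}$ and applying Wick's formula gives
\[
\E\norm{X[h,h]}^{2}=2\sum_{a,b}\norm{X_{ab}}^{2}+\Bnorm{\sum_{a}X_{aa}}^{2}\geq2\sum_{a,b}\norm{X_{ab}}^{2}\,.
\]
So it suffices to show $\E\norm{W^{1/2}z[h,h]}^{2}\lesssim d^{2}$ and $\E\norm{B[h,h]}_{\frob}^{2}\lesssim d^{2}$. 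Here $z[h,h]$ and $B[h,h]$ are exactly $z_{0}'$ and $U_{0}''$ along the path $x_{t}=th$.

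\textbf{Step 2 (deterministic homogeneous bounds).} I will redo the computations from the proof of Lemma~\ref{lem:good-event} at $t=0$, without the good event. Wherever that proof used the probabilistic bound $\norm{s_{0}}_{\infty}\lesssim1$, I substitute the deterministic bound $\norm{A_{x}h}_{\infty}\lesssim\norm h$ from Lemma~\ref{lem:usefulFactLewis}. Likewise I use $\norm{\ell'}_{\infty}\lesssim\norm h$ from Lemma~\ref{lem:LS-comp-tool} or Lemma~\ref{lem:weight-close}, and $\norm{v_{0}},\norm{N v_{0}},\norm{z[h]}_{W}\lesssim\norm h$ from $N\preceq pI$ and $B^{\T}B\preceq I$. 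Finally, $\norm{\Dd N[h]}\lesssim\norm h$ follows from Lemma~\ref{lem:LS-comp-tool}-3 (Lemma 37), since $\norm h_{\theta_{x}}\le\norm h$.

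Each term of $W^{1/2}z'=-\frac{\beta}{2}W^{1/2}(\ell')^{\circ2}-\beta N'v-\beta Nv'+W^{1/2}s^{\circ2}$ then carries exactly two factors that are each $\lesssim\norm h$. For example,
\[
\norm{W^{1/2}(\ell')^{\circ2}}\leq\norm{\ell'}_{\infty}\,\norm{Nv}\lesssim\norm h^{2}\,,\qquad\norm{W^{1/2}s^{\circ2}}\leq\norm s_{\infty}\,\norm v\lesssim\norm h^{2}\,.
\]
Hence $\norm{W^{1/2}z[h,h]}\lesssim\norm h^{2}$ deterministically.

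For the frame, I use the decomposition \eqref{eq:Ut-derivatives}:
\[
\norm{U''}_{\frob}\leq\norm{P'}\,\norm{z}_{W}+\norm{z'}_{W}+\norm z_{\infty}\,\norm{U'}_{\frob}\,,\qquad\norm{U'}_{\frob}\leq\norm z_{W}\,,\qquad\norm{P'}\leq2\norm{U'}_{\frob}\,.
\]
Combined with the bounds above, this gives $\norm{B[h,h]}_{\frob}\lesssim\norm h^{2}$.

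\textbf{Step 3 (conclusion).} Squaring the two deterministic bounds and taking expectations gives $\E\norm h^{4}\lesssim d^{2}$, and Step~1 then yields $\mc Z_{2},\mc B_{2}\lesssim d^{2}$.

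The main obstacle is Step~2, specifically checking that every bound used on the good event really comes from a deterministic inequality that is linear in $\norm h$ per derivative. The delicate inputs are the $\infty$-norm controls on $s$, $\ell'$ and $z$, which must be taken from Lemmas~33--35 of \cite{LS19solving} rather than from the high-probability $\norm{\rho_{0}}_{\infty}\lesssim1$. The alternative, integrating the good-event bounds directly, would fail because those bounds hold only with high probability and do not control an $L^{2}$ moment. The polylogarithmic losses in $m$ and $p$ coming from Lemmas~33--35 are absorbed into $\lesssim$.
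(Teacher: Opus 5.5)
\paragraph{Verdict.} Your argument for $\mc Z_{2}$ is correct. The half on $\mc B_{2}$, however, bounds a different quantity, so the second claim is not established.

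\paragraph{The gap: $B_{a,b}$ is misread.} In the paper, $B=W^{1/2}A_{x}$ is the base-point matrix introduced just before the quartic analysis, and $B_{a}:=Be_{a}$. The object in the statement is
\[
B_{a,b}:=\Dd B[e_{b}]\,e_{a}=\Diag(\tfrac12\ell_{b}-s_{b})\,B_{a}\in\R^{m}\,,
\]
with $\ell_{b}=\Dd\ell[e_{b}]$ and $s_{b}=A_{x}e_{b}$. It is a vector, which is why the statement uses $\norm{\cdot}$ rather than $\norm{\cdot}_{\frob}$. It is also a \emph{first} derivative of $B$, not a second derivative of the moving frame. Along the ray, $\sum_{a,b}h_{a}h_{b}B_{a,b}=v_{0}'$, not $U_{0}''$. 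What you actually bounded is, after polarization, the last estimate of the later Lemma~\ref{lem:moving-frame-second-derivatives}, which the paper derives \emph{from} this lemma.

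Your Step~1 also does not transfer verbatim to the true $B_{a,b}$. The term $\tfrac12\ell_{b}\circ B_{a}$ is in general not symmetric in $(a,b)$. The quadratic polarization identity only sees the symmetric part: an antisymmetric family has $X[h,h]\equiv0$.

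\paragraph{The repair.} It takes one line. Sum over $a$ first:
\[
\sum_{a}\norm{B_{a,b}}^{2}=\norm{\Diag(\tfrac12\ell_{b}-s_{b})B}_{\frob}^{2}\le\norm{\tfrac12\ell_{b}-s_{b}}_{\infty}^{2}\,\norm{B}_{\frob}^{2}\lesssim d\,.
\]
This uses Lemmas~\ref{lem:usefulFactLewis} and~\ref{lem:LS-comp-tool} with $\norm{e_{b}}_{A_{x}^{\T}WA_{x}}\le\norm{e_{b}}_{g_{0}(x)}=1$, together with $\norm{B}_{\frob}^{2}\le d$. Then sum over $b$. In your Gaussian language, use linearity rather than polarization: $\mc B_{2}=\E\norm{\Dd B[h]}_{\frob}^{2}$ and $\norm{\Dd B[h]}_{\frob}\lesssim\norm{h}\,d^{1/2}$. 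This is exactly the paper's argument for $\mc B_{2}$.

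\paragraph{Comparison on $\mc Z_{2}$.} Your route differs from the paper's and is sound. Since $z_{ab}$ is the Hessian of $\beta\log w-\log\sigma$, it is symmetric, so $\mc Z_{2}\le\tfrac12\E\norm{W^{1/2}z_{0}'}^{2}$. Each term of $W^{1/2}z_{0}'$ is deterministically $\lesssim\norm{h}^{2}$, using $\norm{\Dd N[h]}\le(1+p)\,4p^{5/2}\norm{h}$ from Lemma~\ref{lem:LS-comp-tool}.

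The paper instead differentiates $W^{1/2}z_{a}=-C_{0}B_{a}$ in $e_{b}$, obtaining
\[
-W^{1/2}z_{ab}=\beta N_{b}B_{a}+C_{0}B_{a,b}-\tfrac12\ell_{b}\circ C_{0}B_{a}\,.
\]
It then sums the three pieces using $\mc E_{1}$, $\mc B_{2}$ and $\norm{\ell_{b}}_{\infty}\lesssim1$.

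What each buys:
\begin{itemize}
\item Your polarization device reuses the path computation of Lemma~\ref{lem:good-event} wholesale and makes $\mc Z_{2}$ independent of $\mc B_{2}$.
\item The paper's coordinate route needs no symmetry, and it produces explicit formulas for $z_{ab}$ and $B_{a,b}$ that are reused in Lemmas~\ref{lem:moving-frame-second-derivatives} and~\ref{lem:S-tau-bound}.
\end{itemize}
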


\begin{proof}
Let $C_{0}:=I+\beta N$, $B_{a}:=Be_{a}$, and $z_{a}:=z[e_{a}]$.
Recall from \eqref{eq:qt-ztWt} that $W^{1/2}z_{a}=-C_{0}W^{1/2}A_{x}e_{a}=-C_{0}B_{a}$.
Differentiating this identity in direction $-e_{b}$,
\[
-\frac{1}{2}\,\ell_{b}\circ W^{1/2}z_{a}-W^{1/2}z_{ab}=\beta N_{b}B_{a}+C_{0}B_{a,b}\,,
\]
where $B_{a,b}=\Dd B[e_{b}]e_{a}=\Diag(\half\,\ell_{b}-s_{b})\,B_{a}$.
Substituting this and $W^{1/2}z_{a}=-C_{0}B_{a}$ yields
\[
-W^{1/2}z_{ab}=\beta N_{b}B_{a}+C_{0}\,\bpar{(\half\,\ell_{b}-s_{b})\circ B_{a}}-\half\,\ell_{b}\circ C_{0}B_{a}\,.
\]
Note that $\norm{e_{b}}_{A^{\T}WA}\leq\norm{e_{b}}_{g_{0}}=1$. By
Lemma~\ref{lem:weight-close} with \eqref{eq:sigma-rho-derv}, $\norm{\ell_{b}}_{\infty},\norm{s_{b}}_{\infty}\lesssim1$,
and recall $\norm B_{\frob}^{2}\lesssim d$. Thus,
\[
\mc B_{2}=\sum_{a,b}\norm{B_{a,b}}^{2}=\sum_{b}\bnorm{\Diag(\half\,\ell_{b}-s_{b})\,B}_{\frob}^{2}\lesssim d^{2}\,.
\]
Since $\norm{N_{b}}\lesssim1$ (Lemma~\ref{lem:LS-comp-tool}) and
$\mc E_{1}=\sum_{k}\norm{N_{k}B}_{\frob}^{2}\lesssim d$ (Lemma~\ref{lem:mcE1}),
\[
\mc Z_{2}=\sum_{a,b}\norm{W^{1/2}z_{ab}}^{2}\lesssim\sum_{b}\norm{N_{b}B}_{\frob}^{2}+\mc B_{2}+\sum_{b}\norm{\ell_{b}}_{\infty}^{2}\norm{C_{0}B}_{\frob}^{2}\lesssim d^{2}\,.
\]
This completes the proof.
\end{proof}
\begin{lem}
\label{lem:moving-frame-second-derivatives}It holds that
\[
\sum_{a}\norm{U_{a}}_{\frob}^{2}\lesssim d\,,\qquad\sum_{a}\norm{e_{i}^{\T}U_{a}}^{2}\lesssim d\,w_{i}\,,\qquad\sum_{i}\sum_{a}\norm{e_{i}^{\T}U_{a}}^{2}\lesssim d\,,\qquad\sum_{a,b}\norm{U_{ab}}_{\frob}^{2}\lesssim d^{2}\,.
\]
\end{lem}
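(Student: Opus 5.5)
We work at the base point in the moving-frame gauge of Lemma~\ref{lem:path-normal-frame}. Here $U_{a}:=\Dd U[e_{a}]$ and $U_{ab}:=\Dd^{2}U[e_{a},e_{b}]$ are the first and second directional derivatives of the frame. Since the gauge condition $U_{t}^{\T}U_{t}'=0$ holds along every straight line, \eqref{eq:Ut-derivatives} applied to the path $x+te_{a}$ gives $U_{a}=P^{\perp}Z_{a}U$, where $Z_{a}=\Diag z_{a}$. We then bound the four sums in turn, repeating the computations of Lemma~\ref{lem:good-event} coordinate by coordinate and summing over $a$ (and $b$). Pathwise sup-norm bounds such as $\norm{z_{t}}_{\infty}\lesssim d^{1/2}$ are never used. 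Instead we rely on the identities $W^{1/2}z_{a}=-C_{0}B_{a}$ with $C_{0}=I+\beta N$, together with the fact that $\sum_{a}B_{a}B_{a}^{\T}=BB^{\T}$.

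\textbf{First three bounds.} As in Lemma~\ref{lem:good-event}, $\norm{U_{a}}_{\frob}^{2}=z_{a}^{\T}(W-P^{\circ2})z_{a}\le\norm{W^{1/2}z_{a}}^{2}=\norm{C_{0}B_{a}}^{2}\lesssim\norm{B_{a}}^{2}$. Summing over $a$ yields $\norm{B}_{\frob}^{2}\le d$. For the row-wise bound we write $e_{i}^{\T}U_{a}=z_{a,i}u_{i}^{\T}-e_{i}^{\T}PZ_{a}U$. The first piece has squared norm $z_{a,i}^{2}w_{i}$. By Lemma~\ref{lem:weight-close}, applied to the unit direction $e_{a}$ (note $\norm{e_{a}}_{g_{0}}=1$), we have $\norm{z_{a}}_{\infty}\lesssim1$, so the first piece is $\lesssim w_{i}$ for each $a$. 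The second piece is controlled, as in \eqref{eq:u'ti-bound}, by $e_{i}^{\T}P^{\circ2}z_{a}^{\circ2}\le\norm{z_{a}}_{\infty}^{2}w_{i}\lesssim w_{i}$. Summing over $a\in[d]$ gives $\sum_{a}\norm{e_{i}^{\T}U_{a}}^{2}\lesssim d\,w_{i}$. The third bound is simply the first one rewritten as a double sum, since $\sum_{i}\norm{e_{i}^{\T}U_{a}}^{2}=\norm{U_{a}}_{\frob}^{2}$. Equivalently, it follows from the second bound once one uses $\sum_{i}w_{i}=d$, although that route only gives order $d$ up to the same logarithmic factors.

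\textbf{Fourth bound.} We polarize \eqref{eq:Ut-derivatives} into the two-direction identity
\[
U_{ab}=-P_{b}Z_{a}U+P^{\perp}Z_{ab}U+P^{\perp}Z_{a}U_{b}\,,\qquad P_{b}=U_{b}U^{\T}+U U_{b}^{\T}\,.
\]
Here $Z_{ab}=\Diag z_{ab}$, and we symmetrize in $(a,b)$ if necessary. The three terms are handled as follows.
\begin{itemize}
\item \emph{Middle term.} $\sum_{a,b}\norm{Z_{ab}U}_{\frob}^{2}=\sum_{a,b}\norm{W^{1/2}z_{ab}}^{2}=\mc Z_{2}\lesssim d^{2}$ by Lemma~\ref{lem:two-direction-score-basis}.
\item \emph{Last term.} $\norm{Z_{a}U_{b}}_{\frob}\le\norm{z_{a}}_{\infty}\norm{U_{b}}_{\frob}\lesssim\norm{U_{b}}_{\frob}$. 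Summing over $a$ and $b$ gives $d\sum_{b}\norm{U_{b}}_{\frob}^{2}\lesssim d^{2}$.
\item \emph{First term.} $\norm{P_{b}Z_{a}U}_{\frob}\le2\norm{U_{b}}\,\norm{Z_{a}U}_{\frob}\lesssim\norm{U_{b}}_{\frob}\norm{B_{a}}$. The double sum then factorizes as $\bigl(\sum_{b}\norm{U_{b}}_{\frob}^{2}\bigr)\bigl(\sum_{a}\norm{B_{a}}^{2}\bigr)\lesssim d\cdot d$.
\end{itemize}
Together these give $\sum_{a,b}\norm{U_{ab}}_{\frob}^{2}\lesssim d^{2}$.

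\textbf{Main obstacle.} The delicate step is justifying the polarized second-derivative identity. The gauge $U^{\T}\Dd U[\cdot]=0$ is imposed only along lines. A genuinely two-parameter frame, for which mixed partials commute, need not satisfy $U^{\T}U_{b}=0$ for all $b$ at once, so extra rotational terms of the form $U\Omega_{ab}$ may appear. We would handle this in one of two ways. The first option is to define $U_{ab}$ as the polarization $\tfrac12(D^{2}[e_{a}+e_{b}]-D^{2}[e_{a}]-D^{2}[e_{b}])$ of the straight-line second derivatives, each of which obeys \eqref{eq:Ut-derivatives}. The second option is to note that any extra term lies in $\cols U$ and has the form $U\Omega$ with $\Omega$ skew; such terms are harmless for $\Phi$, and their norm is controlled by $\norm{U^{\T}U_{b}}$-type quantities that are again bounded by the same sums. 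We would try the polarization route first, since it directly reuses \eqref{eq:Ut-derivatives}.
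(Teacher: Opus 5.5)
Your first three bounds follow the paper's argument exactly. For the fourth bound your proposal is correct, but it handles the gauge problem by a genuinely different route.

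\textbf{What the paper does.} It builds an honest two-parameter frame in Lemma~\ref{lem:normal-frame-calculus}. A first-order rotation gives $U^{\T}U_{a}=0$ at the base point. A second-order rotation $\exp(\tfrac12\sum_{a,b}v_{a}v_{b}K_{ab})$ then makes $\Gamma_{ab}=U^{\T}U_{ab}$ symmetric, which forces $\Gamma_{ab}=-\tfrac12(U_{b}^{\T}U_{a}+U_{a}^{\T}U_{b})$. The paper bounds only the normal part $P^{\perp}U_{ab}=P^{\perp}(P^{\perp}Z_{a}U)_{b}$ with the three-term identity, and bounds the tangential part separately by $(\sum_{a}\norm{U_{a}}_{\frob}^{2})^{2}$.

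\textbf{What you do.} You never leave one-parameter frames. You polarize the path-gauge quantity $U''_{h}$ from \eqref{eq:Ut-derivatives} and bound everything in one pass. The tangential component sits inside $-P_{b}Z_{a}U$ as $-UU_{b}^{\T}Z_{a}U=-UU_{b}^{\T}U_{a}$.

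\textbf{Trade-off.} Your route is shorter and uses only Lemma~\ref{lem:path-normal-frame}. The paper's construction buys genuine mixed partials, so the chain rule for $\Phi_{ab}$ and $\bar{\Lambda}_{ab}$ in Lemma~\ref{lem:two-direction-derivatives} applies with no further comment.

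\textbf{What you need to add.} Two facts make your route airtight; you only gesture at them.
\begin{enumerate}
\item With a common initial frame, $h\mapsto U''_{h}$ is a quadratic form. Each term of \eqref{eq:Ut-derivatives} is built from $P'_{h}$, $Z_{h}$ and $U'_{h}=P^{\perp}Z_{h}U$, which are linear in $h$, together with $z_{hh}$. So its polarization is a symmetric bilinear form, namely the \emph{symmetrized} three-term formula. The symmetrization is not optional. The tangential part $-UU_{b}^{\T}U_{a}$ of your unsymmetrized identity is not symmetric in $(a,b)$; this is exactly the curvature obstruction you noticed.
\item The polarized object coincides with the paper's $U_{ab}$.
\begin{itemize}
\item Its normal part agrees, because $P^{\perp}U'=P^{\perp}ZU$ holds for every frame and both frames share $U'$ at the base point.
\item Its tangential part is $-\tfrac12(U_{b}^{\T}U_{a}+U_{a}^{\T}U_{b})=\Gamma_{ab}$.
\end{itemize}
Equivalently, the downstream Hessian identities hold as polarizations of their one-line versions, since $\bar{\Lambda}$ does not depend on the choice of frame.
\end{enumerate}

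\textbf{Drop your second fallback.} Once $U^{\T}U_{a}=0$ at the base point, the skew part of $U^{\T}U_{ab}$ is a free gauge parameter that no estimate controls. It must be \emph{fixed}, as the paper does with $K_{ab}$, not bounded. It is also not harmless for $\Phi$ itself, only for $\Phi\Phi^{\T}$.
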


\begin{proof}
By \eqref{eq:Ut-derivatives} and \eqref{eq:ptperp-zt-ut}, $U_{a}=P^{\perp}Z_{a}U$
and $\norm{U_{a}}_{\frob}^{2}\leq\norm{z_{a}}_{W}^{2}$. Using $\norm{s_{a}}_{\infty},\norm{\ell_{a}}_{\infty}\lesssim1$
and $\norm{C_{0}}\lesssim1$, 
\[
\sum_{a}\norm{U_{a}}_{\frob}^{2}\leq\sum_{a}\norm{z_{a}}_{W}^{2}=\sum_{a}\norm{C_{0}B_{a}}^{2}\lesssim\norm B_{\frob}^{2}\lesssim d\,.
\]
This implies that $\sum_{i}\sum_{a}\norm{e_{i}^{\T}U_{a}}^{2}=\sum_{a}\norm{U_{a}}_{\frob}^{2}\lesssim d$.
Next, as $\norm{e_{i}^{\T}U_{a}}\lesssim\norm{z_{a}}_{\infty}\,w_{i}^{1/2}\lesssim w_{i}^{1/2}$
from \eqref{eq:u'ti-bound}, we obtain that $\sum_{a}\norm{e_{i}^{\T}U_{a}}^{2}\lesssim dw_{i}$.

As for the last item, define $U_{ab}^{\perp}:=P^{\perp}U_{ab}$. By
Lemma~\ref{lem:normal-frame-calculus},
\[
U_{ab}^{\perp}=P^{\perp}(P^{\perp}Z_{a}U)_{b}=-P^{\perp}P_{b}Z_{a}U+P^{\perp}Z_{ab}U+P^{\perp}Z_{a}U_{b}\,.
\]
Since $\norm{P_{b}}_{\frob}\lesssim\norm{U_{b}}_{\frob}$ from $P_{b}=U_{b}U^{\T}+UU_{b}^{\T}$,
by Lemma~\ref{lem:two-direction-score-basis}, 
\begin{align*}
\sum_{a,b}\norm{P^{\perp}P_{b}Z_{a}U}_{\frob}^{2} & \lesssim d\sum_{b}\norm{U_{b}}_{\frob}^{2}\lesssim d^{2}\,,\\
\sum_{a,b}\norm{P^{\perp}Z_{ab}U}_{\frob}^{2} & \leq\sum_{a,b}\norm{Z_{ab}U}_{\frob}^{2}=\sum_{a,b}\norm{W^{1/2}z_{ab}}^{2}=\mc Z_{2}\lesssim d^{2}\,,\\
\sum_{a,b}\norm{P^{\perp}Z_{a}U_{b}}_{\frob}^{2} & \lesssim d\sum_{b}\norm{U_{b}}_{\frob}^{2}\lesssim d^{2}\,.
\end{align*}
Thus, $\sum_{a,b}\norm{P^{\perp}U_{ab}}_{\frob}^{2}\lesssim d^{2}$.
By Lemma~\ref{lem:normal-frame-calculus} again, we can take a suitable
$U$ so that
\[
\sum_{a,b}\norm{U_{ab}}_{\frob}^{2}\lesssim\sum_{a,b}\norm{P^{\perp}U_{ab}}_{\frob}^{2}+\Bpar{\sum_{a}\norm{U_{a}}_{\frob}^{2}}^{2}\lesssim d^{2}\,.
\]
This proves the lemma.
\end{proof}
We now bound the useful quantities.
\begin{lem}
\label{lem:two-direction-derivatives}$\mc E_{2}=\sum_{a,b}\norm{N_{ab}B}_{\frob}^{2}\lesssim d^{2}$.
\end{lem}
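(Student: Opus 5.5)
Since $\norm B\leq1$, it suffices to prove $\sum_{a,b}\norm{N_{ab}}_{\frob}^{2}\lesssim d^{2}$; then $\norm{N_{ab}B}_{\frob}\leq\norm{N_{ab}}_{\frob}$ gives the claim. The bound on $\sum_{a,b}\norm{N_{ab}}_{\frob}^{2}$ will come from the same chain of reductions used for $N_{t}''$ in Lemma~\ref{lem:good-event}, carried out with two independent coordinate directions $e_{a},e_{b}$ instead of a single path direction. Concretely, differentiating $N=2\bar\Lambda R$ with $R=(I-c_{p}\bar\Lambda)^{-1}$ twice, as in \eqref{eq:N-derivatives}, gives
\[
N_{ab}=2R\bar\Lambda_{ab}R+2c_{p}R\bar\Lambda_{a}R\bar\Lambda_{b}R+2c_{p}R\bar\Lambda_{b}R\bar\Lambda_{a}R\,.
\]
Since $\norm R\lesssim1$, this yields $\norm{N_{ab}}_{\frob}\lesssim\norm{\bar\Lambda_{ab}}_{\frob}+\norm{\bar\Lambda_{a}}_{\frob}\norm{\bar\Lambda_{b}}_{\frob}$. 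The contribution of the product terms is then $\sum_{a,b}\norm{\bar\Lambda_{a}}_{\frob}^{2}\norm{\bar\Lambda_{b}}_{\frob}^{2}=(\sum_{a}\norm{\bar\Lambda_{a}}_{\frob}^{2})^{2}$.

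For the first-order factor, write $\bar\Lambda=I-\Phi\Phi^{\T}$, so that $\norm{\bar\Lambda_{a}}_{\frob}\lesssim\norm{\Phi_{a}}_{\frob}\lesssim\norm{U_{a}}_{\frob}$ by $\norm\Phi\leq1$ and Lemma~\ref{lem:row-map}. Lemma~\ref{lem:moving-frame-second-derivatives} then gives $\sum_{a}\norm{\bar\Lambda_{a}}_{\frob}^{2}\lesssim d$, so the product terms contribute $O(d^{2})$.

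For the second-order factor, $\bar\Lambda_{ab}=-(\Phi_{ab}\Phi^{\T}+\Phi_{a}\Phi_{b}^{\T}+\Phi_{b}\Phi_{a}^{\T}+\Phi\Phi_{ab}^{\T})$. The cross terms are bounded as before: $\sum_{a,b}\norm{\Phi_{a}}_{\frob}^{2}\norm{\Phi_{b}}_{\frob}^{2}\lesssim d^{2}$. It remains to show $\sum_{a,b}\norm{\Phi_{ab}}_{\frob}^{2}\lesssim d^{2}$. Row-wise, $\phi_{i,ab}=\Dd^{2}R(u_{i})[u_{i,a},u_{i,b}]+\Dd R(u_{i})[u_{i,ab}]$, where $R(u)=u\otimes u/\norm u$. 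By the row-map estimates of Lemma~\ref{lem:row-map} ($\norm{\Dd R(u)}\lesssim1$ and $\norm{\Dd^{2}R(u)}\lesssim\norm u^{-1}=w_{i}^{-1/2}$), we get
\[
\sum_{a,b}\norm{\Phi_{ab}}_{\frob}^{2}\lesssim\sum_{a,b}\norm{U_{ab}}_{\frob}^{2}+\sum_{i}\frac{1}{w_{i}}\Bpar{\sum_{a}\norm{e_{i}^{\T}U_{a}}^{2}}^{2}\,.
\]
The first sum is $\lesssim d^{2}$ by the last item of Lemma~\ref{lem:moving-frame-second-derivatives}. For the second sum, use the bound $\sum_{a}\norm{e_{i}^{\T}U_{a}}^{2}\lesssim dw_{i}$ on one of the two factors, which gives $\sum_{i}w_{i}^{-1}(\cdot)^{2}\lesssim d\sum_{i}\sum_{a}\norm{e_{i}^{\T}U_{a}}^{2}\lesssim d^{2}$.

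The main obstacle is this last step. There are two points to check. First, the row-map bound for $\Dd^{2}R$ must be applied with two distinct directions $u_{i,a}$ and $u_{i,b}$, so I need to verify that the summation over $(a,b)$ factorizes as $\norm{\Dd^{2}R}^{2}\sum_{a}\norm{u_{i,a}}^{2}\sum_{b}\norm{u_{i,b}}^{2}$. Second, Lemma~\ref{lem:moving-frame-second-derivatives} controls the mixed partials $U_{ab}$ only after choosing a suitable frame (via Lemma~\ref{lem:normal-frame-calculus}), whereas $N_{ab}$ does not depend on the frame. I therefore need to make sure that the frame realizing the $U_{ab}$ bound is the one used in the $\Phi$-representation. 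This is legitimate because $\Phi\Phi^{\T}=W^{-1/2}P^{\circ2}W^{-1/2}$ is frame-invariant, so any admissible frame may be chosen.
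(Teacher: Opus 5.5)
Your proposal is correct and follows essentially the same route as the paper: row-map estimates plus Lemma~\ref{lem:moving-frame-second-derivatives} bound $\sum_{a,b}\norm{\Phi_{ab}}_{\frob}^{2}$, then $\bar\Lambda_{ab}$, then the second-derivative formula for $N_{ab}$. The only inessential difference is in the quadratic terms $R\bar\Lambda_{a}R\bar\Lambda_{b}R$: you discard $B$ and bound them by $(\sum_{a}\norm{\bar\Lambda_{a}}_{\frob}^{2})^{2}\lesssim d^{2}$, whereas the paper rewrites them as $\frac{1}{4}N_{a}R^{-1}N_{b}B$ and uses $\norm{N_{a}}\lesssim1$ together with $\mathcal{E}_{1}\lesssim d$.
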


\begin{proof}
For $\phi_{i}=R(u_{i})=u_{i}\otimes u_{i}/\norm{u_{i}}$, we can check
that $\Dd^{2}\phi_{i}[e_{a},e_{b}]=\Dd R(u_{i})[u_{i,ab}]+\Dd^{2}R(u_{i})[u_{i,a},u_{i,b}]$.
By Lemma~\ref{lem:row-map}, 
\[
\sum_{a,b}\norm{\Dd^{2}\phi_{i}[e_{a},e_{b}]}^{2}\lesssim\sum_{a,b}\norm{u_{i,ab}}^{2}+\frac{1}{w_{i}}\,\Bpar{\sum_{a}\norm{u_{i,a}}^{2}}^{2}=\sum_{a,b}\norm{e_{i}^{\T}U_{ab}}_{\frob}^{2}+\frac{1}{w_{i}}\,\Bpar{\sum_{a}\norm{e_{i}^{\T}U_{a}}^{2}}^{2}\,.
\]
By Lemma~\ref{lem:moving-frame-second-derivatives},
\[
\sum_{a,b}\norm{\Phi_{ab}}_{\frob}^{2}=\sum_{i}\sum_{a,b}\norm{\Dd^{2}\phi_{i}[e_{a},e_{b}]}^{2}\lesssim\sum_{a,b}\norm{U_{ab}}_{\frob}^{2}+\sum_{i}d\sum_{a}\norm{e_{i}^{\T}U_{a}}^{2}\lesssim d^{2}\,.
\]

Differentiating $\bar{\Lambda}=I-\Phi\Phi^{\T}$ gives $\bar{\Lambda}_{ab}=-\Phi_{ab}\Phi^{\T}-\Phi_{b}\Phi_{a}^{\T}-\Phi_{a}\Phi_{b}^{\T}-\Phi\Phi_{ab}^{\T}$.
Since $\norm{\Phi}\leq1$, the summation of the Frobenius norm of
the second-derivative terms is bounded by $O(d^{2})$. Due to \eqref{eq:Phi-derivative-bounds},
$\norm{\Phi_{a}}\lesssim\norm{U_{a}}$. Hence,
\[
\sum_{a,b}\norm{\Phi_{b}\Phi_{a}^{\T}}_{\frob}^{2}\leq\Bpar{\sum_{a}\norm{\Phi_{a}}_{\frob}^{2}}^{2}\lesssim\Bpar{\sum_{a}\norm{U_{a}}_{\frob}^{2}}^{2}\lesssim d^{2}\,.
\]
Hence, $\sum_{a,b}\norm{\bar{\Lambda}_{ab}}_{\frob}^{2}\lesssim d^{2}$.

Next, differentiating $N=2\bar{\Lambda}R$ in $e_{a}$ and $e_{b}$
yields $N_{ab}=2R\bar{\Lambda}_{ab}R+2c_{p}R\bar{\Lambda}_{a}R\bar{\Lambda}_{b}R+2c_{p}R\bar{\Lambda}_{b}R\bar{\Lambda}_{a}R$.
Then,
\[
\sum_{a,b}\norm{R\bar{\Lambda}_{ab}RB}_{\frob}^{2}\leq\norm R^{2}\,\norm{RB}^{2}\sum_{a,b}\norm{\bar{\Lambda}_{ab}}_{\frob}^{2}\lesssim\sum_{a,b}\norm{\bar{\Lambda}_{ab}}_{\frob}^{2}\lesssim d^{2}\,.
\]
For the quadratic terms, using $N_{a}=2R\bar{\Lambda}_{a}R$ and $\norm{R^{-1}}\leq1$,
we have $R\bar{\Lambda}_{a}R\bar{\Lambda}_{b}RB=\frac{1}{4}\,N_{a}R^{-1}N_{b}B$.
Since $\norm{N_{a}}\lesssim1$ (Lemma~\ref{lem:LS-comp-tool}) and
$\mc E_{1}\lesssim d$,
\[
\sum_{a,b}\norm{N_{a}R^{-1}N_{b}B}_{\frob}^{2}\leq\sum_{a}\norm{N_{a}R^{-1}}^{2}\sum_{b}\norm{N_{b}B}_{\frob}^{2}\lesssim d^{2}\,.
\]
Therefore, $\mc E_{2}=\sum_{a,b}\norm{N_{ab}B}_{\frob}^{2}\lesssim d^{2}$.
\end{proof}
\begin{lem}
\label{lem:T2-bound}$\mc T_{2}=\sum_{i}\norm{\tr C_{i}}^{2}\lesssim d^{3}$.
\end{lem}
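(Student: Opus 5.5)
We mirror the proof of Lemma~\ref{lem:mcT1}, with $\mc E_{2}$ (Lemma~\ref{lem:two-direction-derivatives}) playing the role of $\mc E_{1}$. First we write $\tr C_{i}$ explicitly in terms of the unsymmetrized tensor $C_{i}^{0}$. Symmetrization over the three slots $(a,b,\ell)$ gives, for each coordinate $a$,
\[
(\tau_{i})_{a}=\sum_{b}C_{i}[a,b,b]=\frac{1}{3}\sum_{b}\bpar{C_{i}^{0}(a,b,b)+C_{i}^{0}(b,a,b)+C_{i}^{0}(b,b,a)}\,,
\]
using that $C_{i}^{0}$ is already symmetric in its first two slots, since $N_{ab}=N_{ba}$. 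We then define the vectors in $\R^{m}$
\[
v^{(1)}_{a}:=\sum_{b}N_{ab}B_{b}\,,\qquad v^{(2)}_{a}:=\sum_{b}N_{bb}B_{a}\,,
\]
where $B_{b}=Be_{b}$. Their $i$-th coordinates are $\sum_{b}C_{i}^{0}(a,b,b)$ and $\sum_{b}C_{i}^{0}(b,b,a)$, respectively. Hence $(\tau_{i})_{a}=\frac{1}{3}\bpar{2(v_{a}^{(1)})_{i}+(v_{a}^{(2)})_{i}}$, and
\[
\mc T_{2}=\sum_{i}\sum_{a}(\tau_{i})_{a}^{2}\lesssim\sum_{a}\norm{v_{a}^{(1)}}^{2}+\sum_{a}\norm{v_{a}^{(2)}}^{2}\,.
\]

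For $v^{(1)}$, Cauchy--Schwarz over $b$ gives
\[
\norm{v_{a}^{(1)}}^{2}\le d\sum_{b}\norm{N_{ab}B_{b}}^{2}\le d\sum_{b}\norm{N_{ab}B}_{\frob}^{2}\,.
\]
Summing over $a$ and applying Lemma~\ref{lem:two-direction-derivatives} yields $\sum_{a}\norm{v_{a}^{(1)}}^{2}\le d\,\mc E_{2}\lesssim d^{3}$.

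For $v^{(2)}$, set $Q:=\sum_{b}N_{bb}$. Then $\sum_{a}\norm{v_{a}^{(2)}}^{2}=\norm{QB}_{\frob}^{2}$. By the triangle inequality followed by Cauchy--Schwarz,
\[
\norm{QB}_{\frob}\le\sum_{b}\norm{N_{bb}B}_{\frob}\le d^{1/2}\Bpar{\sum_{b}\norm{N_{bb}B}_{\frob}^{2}}^{1/2}\le d^{1/2}\mc E_{2}^{1/2}\,,
\]
so again $\norm{QB}_{\frob}^{2}\lesssim d^{3}$. Combining the two pieces gives $\mc T_{2}\lesssim d^{3}$.

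The step most likely to need care is the bookkeeping of the symmetrization, namely checking that every permutation class of $C_{i}^{0}$ contracted on two indices falls into one of the two patterns $v^{(1)}$ or $v^{(2)}$. This relies on the symmetry $N_{ab}=N_{ba}$, which holds because mixed second directional derivatives commute. The norm estimates are otherwise immediate, because only the crude diagonal-sum Cauchy--Schwarz is needed: the target $d^{3}$ is exactly $d\cdot\mc E_{2}$, with no cancellation required.
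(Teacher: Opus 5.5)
Your proof is correct and is essentially the paper's argument. Your $v^{(1)}_{a}$ and $v^{(2)}_{a}$ are the paper's $\Gamma_{a}$ and $\Delta_{a}$, obtained from the same three-class symmetrization using $N_{ab}=N_{ba}$, and each piece is bounded by $d\,\mc E_{2}\lesssim d^{3}$ via the same Cauchy--Schwarz steps and Lemma~\ref{lem:two-direction-derivatives}.
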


\begin{proof}
Using $N_{ab}=N_{ba}$, we have $C_{i}(a,b,c)=\frac{1}{3}\,(e_{i}^{\T}N_{ab}B_{c}+e_{i}^{\T}N_{bc}B_{a}+e_{i}^{\T}N_{ca}B_{b})$.
Thus, for $\tau_{i}=\tr C_{i}$,
\[
\tau_{i}(a)=\sum_{b}C_{i}(a,b,b)=\frac{2}{3}\,e_{i}^{\T}\sum_{b}N_{ab}B_{b}+\frac{1}{3}\,e_{i}^{\T}\sum_{b}N_{bb}B_{a}=:e_{i}^{\T}(\frac{2}{3}\,\Gamma_{a}+\frac{1}{3}\,\Delta_{a})\,.
\]
Hence, $\mc T_{2}=\sum_{i}\norm{\tau_{i}}^{2}\lesssim\sum_{a}\norm{\Gamma_{a}}^{2}+\sum_{a}\norm{\Delta_{a}}^{2}$,
where
\begin{align*}
\sum_{a}\norm{\Gamma_{a}}^{2} & \leq d\sum_{a,b}\norm{N_{ab}B_{b}}^{2}\leq d\sum_{a,b}\norm{N_{ab}B}_{\frob}^{2}\leq d\mc E_{2}\\
\sum_{a}\norm{\Delta_{a}}^{2} & =\Bnorm{\sum_{b}N_{bb}B}_{\frob}^{2}\leq d\sum_{b}\norm{N_{bb}B}_{\frob}^{2}\leq d\mc E_{2}\,.
\end{align*}
Therefore, $\mc T_{2}=\sum_{i}\norm{\tau_{i}}^{2}\lesssim d^{3}$.
\end{proof}
\begin{lem}
\label{lem:S-tau-bound}$\norm{S_{\tau}}^{2}\lesssim d^{3}$ for $S_{\tau}=\sum_{i}w_{i}^{1/2}\tr C_{i}$.
\end{lem}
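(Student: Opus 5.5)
Two routes suggest themselves. The trivial one is Cauchy--Schwarz: $\norm{S_\tau}^2\le(\sum_i w_i)\,\mc T_2\lesssim d\cdot d^3=d^4$, which loses a factor of $d$. So I would instead exploit the cancellation $Nw^{1/2}=0$ from Lemma~\ref{lem:Nr-cancellation}, mirroring how $\sum_i w_i^{1/2}S_i=\half B^\T N^2B$ was obtained in the quartic case.

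Using the formula from the proof of Lemma~\ref{lem:T2-bound},
\[
(S_\tau)_a=(w^{1/2})^\T\bpar{\tfrac23\Gamma_a+\tfrac13\Delta_a},\qquad \Gamma_a=\sum_b N_{ab}B_b,\quad \Delta_a=\sum_b N_{bb}B_a .
\]
Each entry is therefore a linear combination of scalars $(w^{1/2})^\T N_{ab}B_c$. The key step is to differentiate the identity $N_a w^{1/2}=\half N^2 B e_a$ from Lemma~\ref{lem:Nr-cancellation} once more, in direction $e_b$. With $\omega_b=-\half NB_b$ from \eqref{eq:omega-a}, this gives
\[
N_{ab}w^{1/2}=-N_a\omega_b+\half(N_bN+NN_b)B_a+\half N^2B_{a,b},
\]
where $B_{a,b}=\Dd B[e_b]e_a$ is as in Lemma~\ref{lem:two-direction-score-basis}. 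Since $N$ is symmetric, so is $N_{ab}$, and hence
\[
(w^{1/2})^\T N_{ab}B_c=\bpar{\half N_aNB_b+\half(N_bN+NN_b)B_a+\half N^2B_{a,b}}^\T B_c .
\]

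I would then bound $\sum_a\bpar{\sum_b(w^{1/2})^\T N_{ab}B_b}^2$ and $\sum_a\bpar{\sum_b(w^{1/2})^\T N_{bb}B_a}^2$ separately. For each type of term, rewrite the sum over $b$ as a trace-like contraction and apply Cauchy--Schwarz over $b$ (picking up a factor $d$) together with the available estimates: $\norm N,\norm{N_a}\lesssim1$ (Lemma~\ref{lem:LS-comp-tool}), $\norm B\le1$, $\norm B_\frob^2\le d$, $\mc E_1\lesssim d$ (Lemma~\ref{lem:mcE1}), and $\mc B_2\lesssim d^2$ (Lemma~\ref{lem:two-direction-score-basis}). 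Representative terms:
\begin{itemize}
\item $\sum_a\bpar{\sum_b B_b^\T N N_a B_b}^2=\sum_a\tr(B^\T NN_aB)^2\le\sum_a\norm B_\frob^2\,\norm{N_aB}_\frob^2\lesssim d\,\mc E_1\lesssim d^2$.
\item $\sum_a\bpar{\sum_b B_{a,b}^\T N^2B_b}^2\le d\sum_{a,b}\norm{B_{a,b}}^2\lesssim d^3$.
\item $\sum_a\bpar{\sum_b B_a^\T(N_bN+NN_b)B_b}^2\le\sum_a\norm{B_a}^2\cdot\Bnorm{\sum_b(N_bN+NN_b)B_b}^2\lesssim d\cdot d\,\mc E_1\lesssim d^3$.
\end{itemize}
The first term uses $\abs{\tr(X^\T Y)}\le\norm X_\frob\norm Y_\frob$. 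The third splits off $\norm{B_a}^2$ and then applies Cauchy--Schwarz over $b$; its $NN_bB_b$ part is handled by $\norm{N_bB_b}\le\norm{N_bB}_\frob$, and its $N_bNB_b$ part by $\norm{N_b}\lesssim1$ together with $\sum_b\norm{NB_b}^2\le d$.

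The $\Delta_a$ contributions are handled the same way, with $\Delta_a=\bpar{\sum_b N_{bb}}B_a$: the inner sum over $b$ is again contracted against $w^{1/2}$ using the differentiated identity, and each resulting vector is paired with $B_a$, giving at most $d^3$ after summing over $a$. Adding everything yields $\norm{S_\tau}^2\lesssim d^3$.

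The main obstacle is the bookkeeping of the differentiated cancellation identity. The term $N_a\omega_b$ and the term involving $B_{a,b}$ need care. In particular, $B_{a,b}$ is the derivative of $B=W^{1/2}A_x$, not of $U$. I also need to verify that the symmetrization in $C_i$ does not produce a contraction pattern that escapes the Cauchy--Schwarz argument. If some term resists, the fallback is to bound it via $\norm{\sum_b N_{ab}B_b}^2\le d\sum_b\norm{N_{ab}B}_\frob^2$ and $\mc E_2\lesssim d^2$ (Lemma~\ref{lem:two-direction-derivatives}) for that piece, after first moving $w^{1/2}$ onto the factors where the cancellation still applies.
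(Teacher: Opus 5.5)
Your proposal is correct and follows essentially the paper's proof. You use the same formula $(S_\tau)_a=(w^{1/2})^\T(\frac23\Gamma_a+\frac13\Delta_a)$ and the same once-differentiated cancellation $N_{ab}w^{1/2}=\frac12(N_aNB_b+N_bNB_a+NN_bB_a+N^2B_{a,b})$. You then apply the same Cauchy--Schwarz bookkeeping with $\mc E_1$, $\mc B_2$, $\norm{N_a}\lesssim1$ and $\norm{B}_\frob^2\le d$; your trace bound on the $N_aN$ term gives $d^2$, slightly sharper than the paper's $d^3$.

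The one detail to watch is the $\Delta_a$ pairing, which you only sketched. There you should bound $\sum_a(B_a^\T v)^2=\norm{B^\T v}^2\le\norm v^2$ using $\norm B\le1$, as the paper does, rather than splitting off $\norm{B_a}^2$ as in your third bullet. Splitting off $\norm{B_a}^2$ makes the $N^2B_{b,b}$ piece cost $d^4$ if you only use $\mc B_2\lesssim d^2$. That route still works if you also use $\norm{B_{b,b}}\lesssim\norm{B_b}$, which holds because $\norm{\ell_b}_\infty,\norm{s_b}_\infty\lesssim1$.
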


\begin{proof}
Recall from the previous lemma that with $\Gamma_{a}=\sum_{b}N_{ab}B_{b}$,
$\Delta_{a}=\sum_{b}N_{bb}B_{a}$, and $\omega:=w^{1/2}$,
\[
(S_{\tau})_{a}=\omega^{\T}\bpar{\frac{2}{3}\Gamma_{a}+\frac{1}{3}\Delta_{a}}=\frac{2}{3}\sum_{b}B_{b}^{\T}N_{ab}\omega+\frac{1}{3}\sum_{b}B_{a}^{\T}N_{bb}\omega\,.
\]
Using $\bar{\Lambda}\omega=0$, $N\omega=0$, and $N_{a}\omega=-N\omega_{a}=\half N^{2}B_{a}$
(Lemma~\ref{lem:Nr-cancellation}), and $\omega_{b}=-\half NB_{b}$
\eqref{eq:omega-a},
\[
N_{ab}\omega=\half\,(N_{a}NB_{b}+N_{b}NB_{a}+NN_{b}B_{a}+N^{2}B_{a,b})\,.
\]
Thus, $\norm{S_{\tau}}^{2}$ is bounded, up to a universal constant,
by the sum of the seven quantities
\begin{align*}
A_{1} & :=\sum_{a}\Bpar{\sum_{b}B_{b}^{\T}N_{a}NB_{b}}^{2}\,, & A_{2} & :=\sum_{a}\Bpar{\sum_{b}B_{b}^{\T}N_{b}NB_{a}}^{2}\,,\\
A_{3} & :=\sum_{a}\Bpar{\sum_{b}B_{b}^{\T}NN_{b}B_{a}}^{2}\,, & A_{4} & :=\sum_{a}\Bpar{\sum_{b}B_{b}^{\T}N^{2}B_{a,b}}^{2}\,,\\
A_{5} & :=\sum_{a}\Bpar{B_{a}^{\T}\sum_{b}N_{b}NB_{b}}^{2}\,, & A_{6} & :=\sum_{a}\Bpar{B_{a}^{\T}\sum_{b}NN_{b}B_{b}}^{2}\,,\\
A_{7} & :=\sum_{a}\Bpar{B_{a}^{\T}\sum_{b}N^{2}B_{b,b}}^{2}\,.
\end{align*}

We now estimate these terms. We use $\sum_{b}\norm{B_{b}}^{2}=\norm B_{\frob}^{2}\lesssim d$,
$\norm N,\norm{N_{a}},\norm{B^{\T}},\norm{B^{\T}N},\norm{B^{\T}N^{2}}\lesssim1$,
$\sum_{b}\norm{N_{b}B}_{\frob}^{2}=\mc E_{1}\lesssim d$, and $\mc B_{2}=\sum_{a,b}\norm{B_{a,b}}^{2}\lesssim d^{2}$.
Then,
\begin{align*}
A_{1} & \le\sum_{a}\Bpar{\sum_{b}\norm{N_{a}B_{b}}\,\norm{NB_{b}}}^{2}\lesssim\sum_{a}\norm{N_{a}}^{2}\Bpar{\sum_{b}\norm{B_{b}}^{2}}^{2}\lesssim d^{3},\\
A_{4} & \leq\sum_{a}\Bpar{\sum_{b}\norm{NB_{b}}\,\norm{NB_{a,b}}}^{2}\lesssim\sum_{a}\Bpar{\sum_{b}\norm{B_{b}}^{2}}\Bpar{\sum_{b}\norm{B_{a,b}}^{2}}\lesssim d\mc B_{2}\lesssim d^{3},\\
A_{7} & =\Bnorm{B^{\T}\sum_{b}N^{2}B_{b,b}}^{2}\lesssim\Bnorm{\sum_{b}B_{b,b}}^{2}\leq d\sum_{b}\norm{B_{b,b}}^{2}\leq d\mc B_{2}\lesssim d^{3}\,.
\end{align*}
For the remaining terms,
\begin{align*}
A_{2} & =\norm{B^{\T}N\sum_{b}N_{b}B_{b}}^{2}\lesssim\norm{\sum_{b}N_{b}B_{b}}^{2}\leq d\sum_{b}\norm{N_{b}B_{b}}^{2}\lesssim d\sum_{b}\norm{B_{b}}^{2}=d^{2}\,,\\
A_{3} & \lesssim\sum_{a}\Bpar{\sum_{b}\norm{B_{b}}^{2}}\Bpar{\sum_{b}\norm{N_{b}B_{a}}^{2}}\lesssim d\sum_{b}\norm{N_{b}B}_{\frob}^{2}\lesssim d^{2}\,,\\
A_{5} & \lesssim\Bnorm{\sum_{b}N_{b}NB_{b}}^{2}\leq d\sum_{b}\norm{N_{b}NB_{b}}^{2}\lesssim d\sum_{b}\norm{B_{b}}^{2}\lesssim d^{2}\,,\\
A_{6} & \lesssim\Bnorm{\sum_{b}NN_{b}B_{b}}^{2}\leq d\sum_{b}\norm{NN_{b}B_{b}}^{2}\lesssim d\sum_{b}\norm{B_{b}}^{2}\lesssim d^{2}\,.
\end{align*}
Combining these bounds, we prove the claim.
\end{proof}

\paragraph{Norm of $H_{3}(0)$.}

We are now ready to bound $\norm{H_{3}(0)}_{L^{2}}$.
\begin{lem}
\label{lem:quintic-l2} $\norm{H_{3}(0)}_{L^{2}}^{2}\lesssim d^{3}$,
so $H_{3}(0)=\Otilde(d^{3/2})$ with high probability.
\end{lem}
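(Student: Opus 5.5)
The plan mirrors the quartic case (Lemma~\ref{lem:qNv-l2}). By the decomposition $H_{3}(0)=I_{5}(T_{5})+I_{3}(T_{3})+I_{1}(T_{1})$ and the orthogonality \eqref{eq:orthogonality}, we have $\norm{H_{3}(0)}_{L^{2}}^{2}=\norm{I_{5}(T_{5})}^{2}+\norm{I_{3}(T_{3})}^{2}+\norm{I_{1}(T_{1})}^{2}$. Lemma~\ref{lem:tensor-isometry} then reduces everything to showing $\norm{T_{5}}_{\frob}^{2},\norm{T_{3}}_{\frob}^{2},\norm{T_{1}}^{2}\lesssim d^{3}$. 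The high-probability statement follows at the end from Lemma~\ref{lem:conc-gaussian-poly} with $n=5$.

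\textbf{The tensor $T_{5}$.} Expand $\norm{T_{5}}_{\frob}^{2}=\sum_{i,j}\sqrt{w_{i}w_{j}}\inner{K_{i},K_{j}}\inner{C_{i},C_{j}}$. By \eqref{eq:wwKK}, the coefficient matrix is $G=W^{-1/2}P^{\circ2}W^{-1/2}$, which satisfies $0\preceq G\preceq I$. The kernel inequality \eqref{eq:gaa-bound} then gives $\norm{T_{5}}_{\frob}^{2}\le\sum_{i}\norm{C_{i}}_{\frob}^{2}$. Because symmetrization is a contraction, this is at most $\sum_{i}\norm{C_{i}^{0}}_{\frob}^{2}=\sum_{a,b}\norm{N_{ab}B}_{\frob}^{2}=\mc E_{2}\lesssim d^{2}$ by Lemma~\ref{lem:two-direction-derivatives}. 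This piece is comfortably within budget.

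\textbf{The tensor $T_{3}$.} Treat the three summands separately.
\begin{itemize}
\item For $\sum_{i}w_{i}^{1/2}C_{i}$, Cauchy--Schwarz with $\sum_{i}w_{i}\lesssim d$ and $\sum_{i}\norm{C_{i}}_{\frob}^{2}\lesssim d^{2}$ gives a squared norm of order $d^{3}$.
\item For the contraction term, $K_{i}\otimes_{1}C_{i}=C_{i}[k_{i},\cdot,\cdot]\otimes k_{i}$ up to index placement. Since $\norm{k_{i}}=1$, its Frobenius norm is at most $\norm{C_{i}}_{\frob}$, and the same Cauchy--Schwarz argument again yields $d^{3}$.
\item For $\sum_{i}w_{i}^{1/2}K_{i}\otimes\tau_{i}$, the squared norm is $\sum_{i,j}G_{ij}\inner{\tau_{i},\tau_{j}}$. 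By \eqref{eq:gaa-bound} this is at most $\sum_{i}\norm{\tau_{i}}^{2}=\mc T_{2}\lesssim d^{3}$ by Lemma~\ref{lem:T2-bound}.
\end{itemize}

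\textbf{The vector $T_{1}$.} Again treat the three summands separately.
\begin{itemize}
\item The term $3\sum_{i}w_{i}^{1/2}\tau_{i}=3S_{\tau}$ is bounded by Lemma~\ref{lem:S-tau-bound}, giving $d^{3}$. A naive Cauchy--Schwarz here would lose a factor of $d$, which is why the cancellations $N\omega=0$ and $N_{a}\omega=\frac{1}{2}N^{2}B_{a}$ are needed.
\item For $K_{i}\otimes_{1}\tau_{i}=k_{i}\inner{k_{i},\tau_{i}}$, write $\sum_{i}w_{i}^{1/2}\inner{k_{i},\tau_{i}}k_{i}=K^{\T}c$ with $c_{i}:=w_{i}^{1/2}\inner{k_{i},\tau_{i}}$. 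Here $K^{\T}WK=U^{\T}U=I$, i.e.\ $\norm{W^{1/2}K}\le1$. So I would write $K^{\T}c=(W^{1/2}K)^{\T}\tilde c$ with $\tilde c_{i}=\inner{k_{i},\tau_{i}}$, giving $\norm{K^{\T}c}^{2}\le\sum_{i}\inner{k_{i},\tau_{i}}^{2}\le\mc T_{2}\lesssim d^{3}$.
\item For $K_{i}\otimes_{2}C_{i}=C_{i}[k_{i},k_{i},\cdot]$, write $\sum_{i}w_{i}^{1/2}C_{i}[k_{i},k_{i},\cdot]$. Cauchy--Schwarz gives a squared norm at most $\bigl(\sum_{i}w_{i}\bigr)\sum_{i}\norm{C_{i}[k_{i},k_{i},\cdot]}^{2}\le d\sum_{i}\norm{C_{i}}_{\frob}^{2}\lesssim d\cdot d^{2}=d^{3}$.
\end{itemize}

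Summing all pieces gives $\norm{H_{3}(0)}_{L^{2}}^{2}\lesssim d^{3}$.

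The main obstacle is the first-chaos component $T_{1}$, specifically the $S_{\tau}$ term; since that is already settled by Lemma~\ref{lem:S-tau-bound}, the remaining care is only in the contraction bookkeeping. That bookkeeping involves two points. First, $\sym$ inside $I_{n}$ only decreases Frobenius norms, so working with unsymmetrized representatives is legitimate. Second, the contractions must be taken against the symmetric $C_{i}$, so that $\norm{C_{i}[k_{i},\cdot,\cdot]}_{\frob}\le\norm{C_{i}}_{\frob}$ holds for unit $k_{i}$.
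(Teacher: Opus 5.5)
Your proposal is correct and follows the paper's proof essentially step for step. It uses the same orthogonal chaos split, the kernel inequality \eqref{eq:gaa-bound} for $T_5$ and for the $K_i\otimes\tau_i$ term, Cauchy--Schwarz with $\sum_i w_i\lesssim d$ for the remaining contraction terms, and Lemma~\ref{lem:S-tau-bound} for $S_\tau$. Your handling of $\sum_i w_i^{1/2}K_i\tau_i$ via $W^{1/2}K=U$ with $\norm{U}\le 1$ is only a rephrasing of the paper's bound, which rests on the coefficient matrix $(\sqrt{w_iw_j}\inner{k_i,k_j})_{ij}=UU^{\T}=P\preceq I$.
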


\begin{proof}
Recall that $\norm{H_{3}(0)}_{L^{2}}^{2}\lesssim\norm{T_{5}}_{\frob}^{2}+\norm{T_{3}}_{\frob}^{2}+\norm{T_{1}}^{2}$,
where $T_{5}=\sum_{i}w_{i}^{1/2}K_{i}\otimes C_{i}$ and
\[
T_{3}=\sum_{i}w_{i}^{1/2}\,(C_{i}+6K_{i}\otimes_{1}C_{i}+3K_{i}\otimes\tau_{i})\,,\qquad T_{1}=\sum_{i}w_{i}^{1/2}\,(3\tau_{i}+6K_{i}\otimes_{2}C_{i}+6K_{i}\otimes_{1}\tau_{i})\,.
\]
First of all, since $C_{i}=\sym C_{i}^{0}$, we have $\sum_{i}\norm{C_{i}}_{\frob}^{2}\leq\sum_{i}\norm{C_{i}^{0}}_{\frob}^{2}=\sum_{a,b}\norm{N_{ab}B}_{\frob}^{2}=\mc E_{2}\lesssim d^{2}$.
Following \eqref{eq:gaa-bound}, 
\[
\norm{T_{5}}_{\frob}^{2}=\sum_{ij}G_{ij}\inner{C_{i},C_{j}}\leq\sum_{i}\norm{C_{i}}_{\frob}^{2}\leq d^{2}\,.
\]

Recall that for a matrix $K$, a third-order tensor $C$, and a vector
$\tau$,
\[
(K\otimes_{1}C)_{abc}=\sum_{r}K_{ar}C_{rbc}\,,\quad(K\otimes_{2}C)_{a}=\sum_{r,s}K_{rs}C_{rsa}\,,\quad(K\otimes\tau)_{abc}=K_{ab}\tau_{c}\,,\quad(K\otimes_{1}\tau)_{a}=\sum_{r}K_{ar}\tau_{r}\,.
\]
Since $K_{i}=k_{i}k_{i}^{\T}$ and $\norm{k_{i}}=1$, we have
\[
\begin{aligned}\norm{K_{i}\otimes_{1}C_{i}}_{\frob}^{2} & =\sum_{b,c}\norm{K_{i}(C_{i})_{:bc}}^{2}\leq\sum_{b,c}\norm{(C_{i})_{:bc}}^{2}=\norm{C_{i}}_{\frob}^{2}\,,\\
\norm{K_{i}\otimes_{2}C_{i}}^{2} & =\sum_{c}\tr^{2}\bpar{K_{i}(C_{i})_{::c}}\leq\sum_{c}\norm{K_{i}}_{\frob}^{2}\norm{(C_{i})_{::c}}_{\frob}^{2}\leq\sum_{c}\norm{(C_{i})_{::c}}_{\frob}^{2}=\norm{C_{i}}_{\frob}^{2}\,.
\end{aligned}
\]

For $T_{3}$, following \eqref{eq:gaa-bound} for the last line,
\begin{align*}
\Bnorm{\sum_{i}w_{i}^{1/2}C_{i}}_{\frob}^{2} & \leq\sum_{i}w_{i}\cdot\sum_{i}\norm{C_{i}}_{\frob}^{2}\leq d\sum_{i}\norm{C_{i}}_{\frob}^{2}\leq d^{3}\,,\\
\Bnorm{\sum_{i}w_{i}^{1/2}K_{i}\otimes_{1}C_{i}}_{\frob}^{2} & \leq\sum_{i}w_{i}\cdot\sum_{i}\norm{K_{i}\otimes_{1}C_{i}}_{\frob}^{2}\leq d\sum\norm{C_{i}}_{\frob}^{2}\leq d^{3}\,,\\
\Bnorm{\sum_{i}w_{i}^{1/2}K_{i}\otimes\tau_{i}}_{\frob}^{2} & =\sum_{ij}\sqrt{w_{i}w_{j}}\inner{K_{i},K_{j}}\inner{\tau_{i},\tau_{j}}\leq\sum\norm{\tau_{i}}^{2}=\mc T_{2}\lesssim d^{3}\,.
\end{align*}

For $T_{1}$, note that $\norm{\sum_{i}w_{i}^{1/2}\tau_{i}}^{2}=\norm{S_{\tau}}^{2}\lesssim d^{3}$.
For the remaining terms,
\begin{align*}
\Bnorm{\sum_{i}w_{i}^{1/2}K_{i}\otimes_{2}C_{i}}^{2} & \leq\sum_{i}w_{i}\cdot\sum\norm{K_{i}\otimes_{2}C_{i}}^{2}\leq d\sum_{i}\norm{C_{i}}_{\frob}^{2}\lesssim d^{3}\,,\\
\Bnorm{\sum_{i}w_{i}^{1/2}K_{i}\otimes_{1}\tau_{i}}^{2} & =\Bnorm{\sum_{i}w_{i}^{1/2}K_{i}\tau_{i}}^{2}=\sum_{ij}\sqrt{w_{i}w_{j}}\inner{k_{i},k_{j}}\inner{k_{i},\tau_{i}}\inner{k_{j},\tau_{j}}\leq\sum_{i}\inner{k_{i},\tau_{i}}^{2}\leq\sum_{i}\norm{\tau_{i}}^{2}\lesssim d^{3}\,.
\end{align*}
Putting all these together, $\norm{H_{3}(0)}_{L^{2}}^{2}\lesssim d^{2}+d^{3}+d^{3}\lesssim d^{3}$.
Finally, the high-probability bound follows from Lemma~\ref{lem:conc-gaussian-poly}.
\end{proof}

\paragraph{Proof of the main claims.}
\begin{proof}
[Proof of Proposition~\ref{prop:baseline-asc}] Combining \eqref{eq:hard-recursion}
with Lemmas~\ref{lem:good-term-bounds}, \ref{lem:base-cubic}, \ref{lem:qNv-l2},
and~\ref{lem:quintic-l2}, and choosing $R_{\veps}$ suitably so
that the union of the exceptional events has probability at most $\veps$,
we get 
\[
\abs{F(\eta)-F(0)}\lesssim\eta\,\Otilde(d^{1/2})+\eta^{2}\,\Otilde(d)+\eta^{3}\,\Otilde(d^{3/2})+\eta^{4}\,\Otilde(d^{5/2})=\Otilde(r+r^{2}+r^{3}+r^{4}d^{1/2})\,.
\]
By the defining condition on $R_{\veps}$, with hidden constants small
enough, the RHS is at most $2\veps$ whenever $r\leq R_{\veps}$.
Since $\norm{y-x}_{g_{0}(y)}^{2}-\norm{y-x}_{g_{0}(x)}^{2}=\eta^{2}(F(\eta)-F(0))$,
this proves the claim.
\end{proof}
\begin{proof}
[Proof of Theorem~\ref{thm:target-mixing}] By Proposition~\ref{prop:baseline-asc},
the baseline metric satisfies ASC for $R_{\veps}=\widetilde{\Theta}(d^{-1/8})$.
By Proposition~\ref{lem:scaling-rule}, we can take $r=O(1)$ and
$g_{\star}=\widetilde{O}(d^{1/4})\,g_{0}$ while ensuring ASC. 

Recall that the unscaled LS metric is SSC, has convex log determinant,
and is $O(d\log^{3}m)$-symmetric \cite[Lemmas~4.2 and~4.3]{LLV20strong}.
Scaling the metric by $\widetilde{O}(d^{1/4})$ multiplies $\bar{\nu}$
by $\widetilde{O}(d^{1/4})$ while preserving SSC and LTSC. Thus,
$g_{\star}$ has symmetry parameter $\widetilde{O}(d^{1+1/4})$ and
satisfies SSC and LTSC. Lemma~\ref{prop:gcdw-framework} gives the
stated mixing time.
\end{proof}
\begin{acknowledgement*}
We are deeply grateful to Santosh Vempala for discussions, encouragement,
and comments on an earlier draft of this paper, and to Yuansi Chen
and Minhui Jiang for their feedback and discussions about future directions.
This work was supported in part by NSF Award CCF-2106444. We also
acknowledge the use of ChatGPT to assist with proofreading an earlier
draft. In particular, it helped with notation choices, computational
simplifications in Lemma 3.3, and the orthonormal-frame calculus in
Appendix B.
\end{acknowledgement*}
\bibliographystyle{alpha}
\bibliography{main}

\appendix

\section{Self-concordance definitions\label{app:self-concordance-definitions}}

We recall several self-concordance used in Lemma~\ref{prop:gcdw-framework}
from \cite[Definition~1.1]{KV24ipm}.
\begin{defn}[SC, SSC, and LTSC]
\label{def:sc-ssc-ltsc} Let $g:\inter\K\to\pd$ be a $C^{2}$ local
metric. We say that $g$ is \emph{self-concordant} (SC) if for every
$x\in\inter\K$ and $h\in\Rd$, 
\[
-2\norm h_{g(x)}\,g(x)\preceq\Dd g(x)[h]\preceq2\norm h_{g(x)}\,g(x)\,.
\]
We say that $g$ is \emph{strongly self-concordant} (SSC) if 
\[
\norm{g(x)^{-1/2}\Dd g(x)[h]\,g(x)^{-1/2}}_{\frob}\leq2\norm h_{g(x)}
\]
for every $x\in\inter\K$ and $h\in\Rd$. We say that $g$ is \emph{lower
trace self-concordant} (LSTC) if 
\[
\tr\bpar{g(x)^{-1}\Dd^{2}g(x)[h,h]}\geq-\norm h_{g(x)}^{2}
\]
for every $x\in\inter\K$ and $h\in\Rd$. 
\end{defn}

Note that SSC implies SC since the operator norm is bounded by the
Frobenius norm.

\section{Deferred computations}

\subsection{Tensor computations\label{app:tensor-msi-identities}}
\begin{proof}
[Proof of Lemma~\ref{lem:tensor-I2}] By Lemma~\ref{lem:msi-product},
\[
h_{i}h_{j}=I_{1}(e_{i})I_{1}(e_{j})=I_{2}(e_{i}\otimes e_{j})+e_{i}\otimes_{1}e_{j}=I_{2}(e_{i}\otimes e_{j})+\delta_{ij}\,.
\]
Thus $I_{2}(e_{i}\otimes e_{j})=h_{i}h_{j}-\delta_{ij}$. Writing
$f_{M}=\sum_{ij}M_{ij}e_{i}\otimes e_{j}$ gives 
\[
I_{2}(M)=I_{2}(f_{M})=\sum_{ij}M_{ij}(h_{i}h_{j}-\delta_{ij})=h^{\T}Mh-\tr M\,.
\]
This proves the claim.
\end{proof}
\begin{proof}
[Proof of Lemma~\ref{lem:tensor-I3}] For indices $i,j,k$, Lemma~\ref{lem:tensor-I2}
and~\ref{lem:msi-product} give
\[
h_{i}h_{j}h_{k}=\bpar{I_{2}(e_{i}\otimes e_{j})+\delta_{ij}}I_{1}(e_{k})=I_{3}(e_{i}\otimes e_{j}\otimes e_{k})+\delta_{ik}h_{j}+\delta_{jk}h_{i}+\delta_{ij}h_{k}\,.
\]
Hence, $I_{3}(e_{i}\otimes e_{j}\otimes e_{k})=h_{i}h_{j}h_{k}-\delta_{ij}h_{k}-\delta_{ik}h_{j}-\delta_{jk}h_{i}$.
Multiplying by $C_{ijk}$ and summing over $i,j,k$ gives
\[
I_{3}(C)=C[h^{\otimes3}]-\sum_{i,j,k}C_{ijk}\delta_{ij}h_{k}-\sum_{i,j,k}C_{ijk}\delta_{ik}h_{j}-\sum_{i,j,k}C_{ijk}\delta_{jk}h_{i}\,.
\]
Since $C$ is symmetric, each of the sums equals $\inner{\tr C,h}$,
where $(\tr C)_{a}=\sum_{b}C_{abb}$. Therefore $I_{3}(C)=C[h^{\otimes3}]-3\inner{\tr C,h}=C[h^{\otimes3}]-3I_{1}(\tr C)$.
\end{proof}
\begin{proof}
[Proof of Lemma~\ref{lem:tensor-I2-product}] Write $f_{M}=\sum_{ij}M_{ij}e_{i}\otimes e_{j}$
and $f_{N}=\sum_{ij}N_{ij}e_{i}\otimes e_{j}$. By Lemma~\ref{lem:msi-product},
\[
I_{2}(M)I_{2}(N)=I_{4}(f_{M}\otimes_{0}f_{N})+4I_{2}(f_{M}\otimes_{1}f_{N})+2I_{0}(f_{M}\otimes_{2}f_{N})\,.
\]
The zeroth contraction gives $I_{4}(f_{M}\otimes_{0}f_{N})=I_{4}(M\otimes N)$.
For the first contraction, 
\begin{align*}
(f_{M}\otimes_{1}f_{N})(t,s) & =\int\Bpar{\sum_{ij}M_{ij}e_{i}(t)e_{j}(x)}\,\Bpar{\sum_{kl}N_{kl}e_{k}(s)e_{l}(x)}\,\D x\\
 & =\sum_{ijk}M_{ij}N_{jk}e_{i}(t)e_{k}(s)=f_{MN}(t,s)\,.
\end{align*}
Since $I_{2}$ only sees the symmetric part of its matrix coefficient,
this contributes $I_{2}(\sym(MN))$. Finally, 
\[
f_{M}\otimes_{2}f_{N}=\int\Bpar{\sum_{ij}M_{ij}e_{i}(x)e_{j}(y)}\,\Bpar{\sum_{kl}N_{kl}e_{k}(x)e_{l}(y)}\,\D x\D y=\tr(MN)\,.
\]
Combining the three contractions proves the identity.
\end{proof}
\begin{proof}
[Proof of Lemma~\ref{lem:tensor-isometry}] This is the MSI isometry
transported through the tensor-kernel identification. Indeed, 
\[
\norm{I_{n}(M)}_{L^{2}(\Omega)}^{2}=n!\,\inner{\tilde{f}_{M},\tilde{f}_{M}}=n!\,\inner{f_{\sym M},f_{\sym M}}=n!\,\norm{\sym M}_{\frob}^{2}\,.
\]
The identity $\norm{f_{T}}_{L^{2}(\R_{+}^{n})}=\norm T_{\frob}$ follows
from the orthonormality of the functions $e_{i}$.
\end{proof}

\subsection{Row-map calculus\label{app:good-event-proof}}
\begin{lem}[Derivatives of the row map]
\label{lem:row-map} For the map $R(a):=(a\otimes a)/\norm a$ and
vectors $a\neq0$, $b$, and $c$, 
\[
\norm{\Dd R(a)[b]}\leq3\,\norm b\,,\qquad\norm{\Dd^{2}R(a)[b,c]}\lesssim\frac{\norm b\norm c}{\norm a}\,,\qquad\norm{\Dd^{3}R(a)[b,b,b]}\lesssim\frac{\norm b^{3}}{\norm a^{2}}\,.
\]
\end{lem}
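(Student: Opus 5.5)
The plan is to use homogeneity and explicit differentiation. The map $R(a)=(a\otimes a)/\norm a$ is positively homogeneous of degree one: $R(\lambda a)=\lambda R(a)$ for $\lambda>0$. Consequently $\Dd^{k}R$ is homogeneous of degree $1-k$ in the base point, that is,
\[
\Dd^{k}R(\lambda a)[b_{1},\dots,b_{k}]=\lambda^{1-k}\,\Dd^{k}R(a)[b_{1},\dots,b_{k}].
\]
So it suffices to prove the bounds for $\norm a=1$, with constants depending only on $k$. The factors $1/\norm a$ and $1/\norm a^{2}$ then come for free.

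First, write $R(a)=\psi(a)\,a\otimes a$ with $\psi(a)=\norm a^{-1}$. Its derivatives are
\[
\Dd\psi(a)[b]=-\frac{\inner{a,b}}{\norm a^{3}},\qquad \Dd^{2}\psi(a)[b,c]=-\frac{\inner{b,c}}{\norm a^{3}}+\frac{3\inner{a,b}\inner{a,c}}{\norm a^{5}}.
\]
The third derivative is analogous, a sum of terms each bounded by $O(\norm b\norm c\norm e/\norm a^{4})$. The polynomial part $a\otimes a$ has first derivative $b\otimes a+a\otimes b$, second derivative $b\otimes c+c\otimes b$, and vanishing third derivative. Applying the Leibniz rule,
\[
\Dd R(a)[b]=\frac{b\otimes a+a\otimes b}{\norm a}-\frac{\inner{a,b}}{\norm a^{3}}\,a\otimes a .
\]
By the triangle inequality, $\norm{b\otimes a}=\norm a\norm b$, and Cauchy--Schwarz, this is at most $2\norm b+\norm b=3\norm b$. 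This gives the first bound with the exact constant $3$.

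For the second derivative, $\Dd^{2}R(a)[b,c]$ has three kinds of terms:
\begin{itemize}
\item $\psi\,(b\otimes c+c\otimes b)$, of norm at most $2\norm b\norm c/\norm a$;
\item cross terms $\Dd\psi[b](c\otimes a+a\otimes c)$ and $\Dd\psi[c](b\otimes a+a\otimes b)$, each of norm at most $2\norm b\norm c/\norm a$;
\item $\Dd^{2}\psi[b,c]\,a\otimes a$, of norm at most $4\norm b\norm c/\norm a$.
\end{itemize}
Summing gives $\lesssim\norm b\norm c/\norm a$.

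For the third derivative along $b,b,b$, the Leibniz rule gives
\[
\Dd^{3}R=3\,\Dd\psi[b]\,(2\,b\otimes b)+3\,\Dd^{2}\psi[b,b]\,(b\otimes a+a\otimes b)+\Dd^{3}\psi[b,b,b]\,a\otimes a .
\]
Each term is bounded by counting powers: a $\psi$-derivative of order $j$ contributes $\norm b^{j}/\norm a^{1+j}$, and the matching polynomial factor contributes $\norm b^{3-j}\norm a^{j-1}$. The product is always $\norm b^{3}/\norm a^{2}$.

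No step is a real obstacle; the only care needed is tracking the powers of $\norm a$. The homogeneity observation settles this uniformly, and the rest is bookkeeping of the Leibniz expansion. Throughout, the norm on $\Rd\otimes\Rd$ is the Frobenius/Euclidean norm, for which $\norm{x\otimes y}=\norm x\norm y$.
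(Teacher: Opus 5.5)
Your proposal is correct and follows essentially the paper's proof: the factorization $R=\norm a^{-1}\cdot(a\otimes a)$, the explicit first-derivative formula giving the constant $3$, and the Leibniz expansion for the second and third derivatives (the paper differentiates $\Dd R(a)[b]$ directly for the second, but this is the same computation). Each term is then bounded using $\abs{\inner{a,b}}\le\norm a\norm b$ and $\norm{v\otimes w}=\norm v\norm w$. The homogeneity remark is harmless but not needed, since your term-by-term power counting already tracks the factors of $\norm a$.
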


\begin{proof}
By direct differentiation, 
\[
\Dd R(a)[b]=\frac{b\otimes a+a\otimes b}{\norm a}-\frac{\inner{a,b}}{\norm a^{3}}\,a\otimes a\,.
\]
Since $\norm{v\otimes w}=\norm v\norm w$, 
\[
\norm{\Dd R(a)[b]}\leq2\norm b+\norm b=3\norm b\,.
\]

For the second derivative, differentiating $\Dd R(a)[b]$ in direction
$c$, 
\begin{align*}
\Dd^{2}R(a)[b,c] & =\frac{b\otimes c+c\otimes b}{\norm a}-\frac{\inner{a,c}}{\norm a^{3}}(b\otimes a+a\otimes b)\\
 & \qquad-\frac{\inner{b,c}}{\norm a^{3}}a\otimes a-\frac{\inner{a,b}}{\norm a^{3}}(c\otimes a+a\otimes c)+3\frac{\inner{a,b}\inner{a,c}}{\norm a^{5}}a\otimes a\,.
\end{align*}
Each term is bounded by a universal multiple of $\norm b\norm c/\norm a$,
so $\norm{\Dd^{2}R(a)[b,c]}\lesssim\norm b\norm c/\norm a$. 

For the third derivative, write $R=\theta\Phi$, where $\theta(a):=\norm a^{-1}$
and $\Phi(a):=a\otimes a$. Denoting $\alpha:=\inner{a,b}$ and $\beta:=\norm b^{2}$,
and differentiating $\theta$ and $\Phi$ in direction $b$, 
\begin{align*}
\Dd\theta(a)[b] & =-\frac{\alpha}{\norm a^{3}}\,,\qquad\Dd^{2}\theta(a)[b,b]=-\frac{\beta}{\norm a^{3}}+\frac{3\alpha^{2}}{\norm a^{5}}\,,\qquad\Dd^{3}\theta(a)[b,b,b]=\frac{9\alpha\beta}{\norm a^{5}}-\frac{15\alpha^{3}}{\norm a^{7}}\,.\\
\Dd\Phi(a)[b] & =b\otimes a+a\otimes b\,,\qquad\Dd^{2}\Phi(a)[b,b]=2b\otimes b\,,\qquad\Dd^{3}\Phi(a)[b,b,b]=0\,.
\end{align*}
Thus the product rule yields
\[
\Dd^{3}R(a)[b,b,b]=\bpar{\frac{9\alpha\beta}{\norm a^{5}}-\frac{15\alpha^{3}}{\norm a^{7}}}\,a\otimes a+3\,\bpar{\frac{3\alpha^{2}}{\norm a^{5}}-\frac{\beta}{\norm a^{3}}}(b\otimes a+a\otimes b)-\frac{6\alpha}{\norm a^{3}}\,b\otimes b\,.
\]
Using $\abs{\alpha}\leq\norm a\norm b$ and $\norm{v\otimes w}=\norm v\norm w$,
we obtain $\norm{\Dd^{3}R(a)[b,b,b]}\lesssim\norm b^{3}/\norm a^{2}$.
\end{proof}

\subsection{Elementary frame calculus\label{app:moving-frame-calculus}}

The short calculation below is a standard linear-algebra fact about
choosing a ``nice'' orthonormal basis for a smoothly varying subspace.
In differential-geometric language, it is part of the standard calculus
of orthonormal frames on \emph{Stiefel/Grassmann manifolds}; see~\cite{AMS08optimization}
for references. For self-containment, we rederive exactly what we
need using only orthogonal changes of basis and differentiation of
$U^{\T}U=I$.

The next two lemmas use the same elementary idea in two settings.
Lemma~\ref{lem:path-normal-frame} treats a one-dimensional path
and ensures $U_{t}^{\T}U_{t}'=0$ for every $t$. Lemma~\ref{lem:normal-frame-calculus}
treats a multi-parameter family and records only the derivatives needed
at one base point (in our setting, when $z=x$).
\begin{lem}
\label{lem:path-normal-frame} For $t\geq0$, consider a path of smoothly
varying subspaces $E_{t}$ and an orthogonal matrix $U_{t}$ for $E_{t}$
(i.e., $\cols U_{t}=E_{t}$). There exists an orthogonal matrix path
$O_{t}$ such that $\bar{U}_{t}:=U_{t}O_{t}$ satisfies $\bar{U}_{t}^{\T}\bar{U}_{t}'=0$
and $\bar{U}_{t}'=P_{t}^{\perp}\bar{U}_{t}'$ for every $t$.
\end{lem}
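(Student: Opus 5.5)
The plan is to construct $O_{t}$ as the solution of a linear matrix ODE chosen to cancel the in-subspace rotation. Write $\Omega_{t}:=U_{t}^{\T}U_{t}'$, a $k\times k$ matrix ($k=\dim E_{t}$). Differentiating $U_{t}^{\T}U_{t}=I$ gives $\Omega_{t}+\Omega_{t}^{\T}=0$, so $\Omega_{t}$ is skew-symmetric and depends smoothly on $t$, since $U_{t}$ is smooth.

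We define $O_{t}$ as the solution of
\[
O_{t}'=-\Omega_{t}O_{t}\,,\qquad O_{0}=I_{k}\,.
\]
Because the equation is linear with smooth coefficients, a unique smooth solution exists for all $t\geq0$. To see that $O_{t}$ is orthogonal, we differentiate:
\[
(O_{t}^{\T}O_{t})'=O_{t}^{\T}(\Omega_{t}^{\T}+\Omega_{t})O_{t}=0\,.
\]
Hence $O_{t}^{\T}O_{t}=I$ for all $t$. It follows that $\bar{U}_{t}=U_{t}O_{t}$ has orthonormal columns, and $\cols\bar{U}_{t}=\cols U_{t}=E_{t}$.

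Next we compute the derivative of the new frame:
\[
\bar{U}_{t}'=U_{t}'O_{t}+U_{t}O_{t}'=U_{t}'O_{t}-U_{t}\Omega_{t}O_{t}\,.
\]
Multiplying on the left by $\bar{U}_{t}^{\T}=O_{t}^{\T}U_{t}^{\T}$ gives
\[
\bar{U}_{t}^{\T}\bar{U}_{t}'=O_{t}^{\T}(\Omega_{t}-\Omega_{t})O_{t}=0\,.
\]
Finally, let $P_{t}=\bar{U}_{t}\bar{U}_{t}^{\T}=U_{t}U_{t}^{\T}$ be the orthogonal projector onto $E_{t}$. Then $P_{t}\bar{U}_{t}'=\bar{U}_{t}(\bar{U}_{t}^{\T}\bar{U}_{t}')=0$, so $\bar{U}_{t}'=(I-P_{t})\bar{U}_{t}'=P_{t}^{\perp}\bar{U}_{t}'$.

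The only point needing care is smoothness and global existence of $O_{t}$. Global existence holds because the ODE is linear, and because $O_{t}$ stays orthogonal it remains bounded. If the initial frame $U_{t}$ is only locally smooth, one first patches it, but in our application $U_{t}$ can be taken from a smooth Gram--Schmidt of $\widehat{B}_{t}$, which has full column rank along the path. Otherwise the argument is routine.
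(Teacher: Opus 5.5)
Your proposal is correct and follows the paper's proof essentially line by line. It uses the same skew-symmetric generator $U_t^{\T}U_t'$, the same linear ODE $O_t'=-U_t^{\T}U_t'O_t$ with $O_0=I$, the same orthogonality check, and the same projector argument for $\bar U_t'=P_t^{\perp}\bar U_t'$. One cosmetic slip: $(O_t^{\T}O_t)'=-O_t^{\T}(\Omega_t^{\T}+\Omega_t)O_t$ carries a minus sign, which is immaterial since the bracket vanishes.
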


\begin{proof}
Write $\bar{U}_{t}=U_{t}O_{t}$ for an orthogonal matrix $O_{t}$.
Since $\bar{U}_{t}'=U_{t}'O_{t}+U_{t}O_{t}'$,
\[
\bar{U}_{t}^{\T}\bar{U}_{t}'=O_{t}^{\T}U_{t}^{\T}(U_{t}'O_{t}+U_{t}O_{t}')=O_{t}^{\T}(U_{t}^{\T}U_{t}'O_{t}+O_{t}')\,.
\]
Hence, to ensure $\bar{U}_{t}^{\T}\bar{U}_{t}'=0$, it suffices to
enforce $O_{t}'=-U_{t}^{\T}U_{t}'O_{t}$ with $O_{0}=I$. Since $U_{t}^{\T}U_{t}=I$,
we have $(U_{t}')^{\T}U_{t}=-U_{t}^{\T}U_{t}'$. For $S_{t}:=U_{t}^{\T}U_{t}'$,
we have $S_{t}=-S_{t}^{\T}$. Then, this ODE is written as $O_{t}'=-S_{t}O_{t}$
with $O_{0}=I$. Since $S_{t}$ is smooth on $[0,\eta]$, a standard
result on the matrix ODE theory guarantees the existence of such $O_{t}$.
Moreover, $O_{t}$ remains orthogonal:
\[
\de_{t}(O_{t}^{\T}O_{t})=(O_{t}')^{\T}O_{t}+O_{t}^{\T}O_{t}'=-O_{t}^{\T}S_{t}^{\T}O_{t}-O_{t}^{\T}S_{t}O_{t}=0\,.
\]
Lastly, as $P_{t}=\bar{U}_{t}\bar{U}_{t}^{\T}$,
\[
\bar{U}_{t}'=P_{t}^{\perp}\bar{U}_{t}'+P_{t}\bar{U}_{t}'=P_{t}^{\perp}\bar{U}_{t}'+\bar{U}_{t}\bar{U}_{t}^{\T}\bar{U}_{t}'=P_{t}^{\perp}\bar{U}_{t}'\,,
\]
which proves the claim.
\end{proof}
\begin{lem}
\label{lem:normal-frame-calculus} Let $P(v)$ be a smooth orthogonal
projector, defined near $v=0$, and let $\widetilde{U}(v)$ be smooth
orthogonal matrix with $\cols\widetilde{U}(v)=\operatorname{im}P(v)$.
There exists an orthogonal matrix $O(v)$ such that $U:=\widetilde{U}O$
satisfies at $v=0$ that \textup{for every coordinate direction $a,b$,}
\[
U^{\T}U_{a}=0\,,\qquad U_{a}=P^{\perp}U_{a}=P_{a}U\,,\qquad P^{\perp}U_{ab}=P^{\perp}(P_{a}U)_{b}\,.
\]
This also satisfies $U^{\T}U_{ab}=-\frac{1}{2}\,(U_{b}^{\T}U_{a}+U_{a}^{\T}U_{b})$
and
\[
\sum_{a,b}\norm{U_{ab}}_{\frob}^{2}\lesssim\sum_{a,b}\norm{P^{\perp}U_{ab}}_{\frob}^{2}+\Bpar{\sum_{a}\norm{U_{a}}_{\frob}^{2}}^{2}\,.
\]
\end{lem}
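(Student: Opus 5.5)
We construct $O(v)$ by an explicit second-order gauge correction of the given frame $\widetilde U(v)$, in the spirit of Lemma~\ref{lem:path-normal-frame} but without solving an ODE. Write $\widetilde S_a:=\widetilde U^{\T}\widetilde U_a$ at $v=0$; this matrix is skew-symmetric, by differentiating $\widetilde U^{\T}\widetilde U=I$. We set
\[
O(v):=\exp\Bpar{-\textsum_a v_a\widetilde S_a+\tfrac12\textsum_{a,b}v_av_bQ_{ab}}\,,
\]
with skew-symmetric $Q_{ab}=Q_{ba}$ to be chosen below. Since the exponent is skew, $O$ is orthogonal, and $O(0)=I$, $O_a=-\widetilde S_a$ at $0$. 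Then $U_a=\widetilde U_a-\widetilde U\widetilde S_a$, so $U^{\T}U_a=\widetilde S_a-\widetilde S_a=0$. Consequently $U_a=P^{\perp}U_a+UU^{\T}U_a=P^{\perp}U_a$.

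For the identity $U_a=P_aU$, differentiate $PU=U$ to get $P_aU+PU_a=U_a$. Because $PU_a=UU^{\T}U_a=0$, this gives $U_a=P_aU$. Differentiating this identity once more in $b$ yields $U_{ab}=(P_aU)_b$, and projecting gives $P^{\perp}U_{ab}=P^{\perp}(P_aU)_b$. (Only the first derivatives of $O$ enter up to this point, so these first-order identities already hold.)

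For the tangential part of $U_{ab}$, differentiate $U^{\T}U=I$ twice at $0$:
\[
U_b^{\T}U_a+U^{\T}U_{ab}+U_{ab}^{\T}U+U_a^{\T}U_b=0\,.
\]
Hence $\sym(U^{\T}U_{ab})=-\tfrac12(U_b^{\T}U_a+U_a^{\T}U_b)$, and this part is gauge-independent. The skew part of $U^{\T}U_{ab}$, by contrast, is shifted by the second-order term of $O$. Concretely, expanding $U_{ab}=\widetilde U_{ab}+\widetilde U_aO_b+\widetilde U_bO_a+\widetilde UO_{ab}$, the contribution of $Q_{ab}$ to $U^{\T}U_{ab}$ is exactly $\sym$-free, namely $+\tfrac12(Q_{ab}+Q_{ba})=Q_{ab}$ plus fixed terms. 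We therefore choose $Q_{ab}$ to cancel the skew part, leaving
\[
U^{\T}U_{ab}=-\tfrac12(U_b^{\T}U_a+U_a^{\T}U_b)\,.
\]
This is the main bookkeeping step: the second derivative of the matrix exponential includes $\tfrac12(\widetilde S_a\widetilde S_b+\widetilde S_b\widetilde S_a)$, which is symmetric, so it does not obstruct the choice of skew $Q_{ab}$. The cancellation should therefore go through once the terms are carefully collected.

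Finally, the norm bound. Since $U_{ab}=P^{\perp}U_{ab}+U(U^{\T}U_{ab})$ and $U$ has orthonormal columns,
\[
\norm{U_{ab}}_{\frob}^2=\norm{P^{\perp}U_{ab}}_{\frob}^2+\norm{U^{\T}U_{ab}}_{\frob}^2\,,\qquad
\norm{U^{\T}U_{ab}}_{\frob}\le\norm{U_a}_{\frob}\norm{U_b}_{\frob}\,.
\]
Summing over $a,b$ gives
\[
\sum_{a,b}\norm{U_{ab}}_{\frob}^2\le\sum_{a,b}\norm{P^{\perp}U_{ab}}_{\frob}^2+\Bpar{\textsum_a\norm{U_a}_{\frob}^2}^2\,,
\]
which is the claimed estimate.
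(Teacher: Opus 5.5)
There is one genuine error, in your derivation of $P^{\perp}U_{ab}=P^{\perp}(P_aU)_b$. The identity $U_a=P_aU$ holds only at $v=0$. You obtained it from $PU_a=UU^{\T}U_a=0$, and $U^{\T}U_a$ vanishes only at the base point: with your final gauge, $(U^{\T}U_a)_b=U_b^{\T}U_a+U^{\T}U_{ab}=\tfrac12(U_b^{\T}U_a-U_a^{\T}U_b)$ at $v=0$, which is generically nonzero. So the identity cannot be differentiated, and the conclusion $U_{ab}=(P_aU)_b$ is in fact false. To see this, differentiate $P^2=P$ twice and use $P_bU=U_b$ and $U^{\T}P_a=U_a^{\T}$ at $v=0$. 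This gives $U^{\T}P_{ab}U=-(U_a^{\T}U_b+U_b^{\T}U_a)$, hence $U^{\T}(P_aU)_b=-U_b^{\T}U_a$. That expression is not symmetric under $a\leftrightarrow b$. It therefore contradicts both $U_{ab}=U_{ba}$ and your own later formula $U^{\T}U_{ab}=-\tfrac12(U_b^{\T}U_a+U_a^{\T}U_b)$. Your parenthetical remark that only first derivatives of $O$ enter is a symptom of the same slip. $U_{ab}$ does depend on $O_{ab}$, through the term $\widetilde U O_{ab}$; only its normal part is independent of it.

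The projected identity is nonetheless true, and the repair is the paper's argument. Since $PU=U$ and $P=UU^{\T}$ hold identically, $U_a=P_aU+U\,(U^{\T}U_a)$ for all $v$ near $0$. Differentiate this in $b$ and evaluate at $v=0$, where $U^{\T}U_a=0$, to get $U_{ab}=(P_aU)_b+U\,(U^{\T}U_a)_b$. Applying $P^{\perp}$ annihilates the last term.

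With this fix, the rest of your proposal matches the paper: an exponential gauge for first order, a skew quadratic correction for $U^{\T}U_{ab}$, and the Pythagorean split for the norm bound. The paper applies the two corrections successively while you fold them into one exponential, which is immaterial. The bookkeeping you left hedged does close. At $v=0$, $U^{\T}U_{ab}=\widetilde U^{\T}\widetilde U_{ab}-\tfrac12(\widetilde S_a\widetilde S_b+\widetilde S_b\widetilde S_a)+Q_{ab}$, so take $Q_{ab}:=-\tfrac12(\widetilde U^{\T}\widetilde U_{ab}-\widetilde U_{ab}^{\T}\widetilde U)$. You should record that this choice is symmetric in $(a,b)$ because $\widetilde U_{ab}=\widetilde U_{ba}$, since that is what makes it compatible with your requirement $Q_{ab}=Q_{ba}$.
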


\begin{proof}
All quantities in this proof are evaluated at $v=0$ unless stated
otherwise. Proof for the first-order condition is similar to the previous
lemma in overall. Since $\widetilde{U}^{\T}\widetilde{U}=I$, differentiating
gives $\widetilde{U}^{\T}\widetilde{U}_{a}+\widetilde{U}_{a}^{\T}\widetilde{U}=0$,
and $\Omega_{a}:=\widetilde{U}^{\T}\widetilde{U}_{a}$ is skew-symmetric.
Choose $O(v):=\exp(-\sum_{c}v_{c}\Omega_{c}(0))$. Then, $O(0)=I$
and $O_{a}(0)=-\Omega_{a}(0)$. For $U:=\widetilde{U}O$, we can check
that $U^{\T}U_{a}=0$ and $U_{a}=P^{\perp}U_{a}$ in a similar fashion
to Lemma~\ref{lem:path-normal-frame}. Lastly, since $PU=U$, we
have $P_{a}U+PU_{a}=U_{a}$. This implies $P_{a}U=P^{\perp}U_{a}$.

For second derivatives, note that $U_{a}=P^{\perp}U_{a}+PU_{a}=P_{a}U+PU_{a}=P_{a}U+UU^{\T}U_{a}$.
Defining $\Omega_{a}:=U^{\T}U_{a}$ and differentiating $U_{a}=P_{a}U+U\Omega_{a}$
in direction $b$ at $v=0$,
\[
U_{ab}=(P_{a}U)_{b}+U(\Omega_{a})_{b}+U_{b}\Omega_{a}=(P_{a}U)_{b}+U(\Omega_{a})_{b}\,.
\]
This implies that $P^{\perp}U_{ab}=P^{\perp}(P_{a}U)_{b}$. 

For $\Gamma_{ab}:=U^{\T}U_{ab}$, write 
\begin{equation}
U_{ab}=P^{\perp}U_{ab}+PU_{ab}=P^{\perp}U_{ab}+U\Gamma_{ab}\,.\label{eq:second-deriv}
\end{equation}
Just as we enforced $U^{\T}U_{a}=0$, we will enforce some condition
on $U$ so that $\Gamma_{ab}$ is symmetric, while preserving $U_{a}$
and ensuring $U^{\T}U_{a}=0$. We will replace $U(v)$ by $\widehat{U}(v):=U(v)O(v)$
with $O(v)=\exp(\frac{1}{2}\sum_{a,b}v_{a}v_{b}K_{ab})$ where $K_{ab}=K_{ba}$
and each $K_{ab}$ is skew-symmetric. Clearly, $O(0)=I$, $O_{a}(0)=0$,
and $O_{ab}(0)=K_{ab}$. Differentiating $\widehat{U}=UO$ and estimating
it at $v=0$ yields 
\begin{align*}
\widehat{U}_{a} & =U_{a}O+UO_{a}=U_{a}\,,\\
\widehat{U}_{ab} & =U_{ab}O+U_{a}O_{b}+U_{b}O_{a}+UO_{ab}=U_{ab}+UK_{ab}\,.
\end{align*}
Hence, $\widehat{U}^{\T}\widehat{U}_{a}=0$ and $\widehat{U}_{a}=P^{\perp}\widehat{U}_{a}$
remain valid. Next, from above, $P^{\perp}\widehat{U}_{ab}=P^{\perp}U_{ab}$
and 
\[
P^{\perp}(P_{a}\widehat{U})_{b}=P^{\perp}(P_{a}UO)_{b}=P^{\perp}(P_{a}U)_{b}\,,
\]
so $P^{\perp}\widehat{U}_{ab}=P^{\perp}(P_{a}\widehat{U})_{b}$.

Note that $\widehat{\Gamma}_{ab}=\widehat{U}^{\T}\widehat{U}_{ab}=U^{\T}\,(U_{ab}+UK_{ab})=\Gamma_{ab}+K_{ab}$.
Since $\Gamma_{ab}=\Gamma_{ba}$, choosing $K_{ab}=-\operatorname{skew}(\Gamma_{ab})$
is compatible with $K_{ab}=K_{ba}$ and makes $\widehat{\Gamma}_{ab}$
symmetric. From now on, we relabel $\widehat{U}$ as $U$ and $\widehat{\Gamma}_{ab}$
as $\Gamma_{ab}$. Differentiating $U^{\T}U=I$ in directions $a$
and $b$ and estimating it at $v=0$,
\[
U_{b}^{\T}U_{a}+U^{\T}U_{ab}+U_{ab}^{\T}U+U_{a}^{\T}U_{b}=0\,.
\]
Since $\Gamma_{ab}=U^{\T}U_{ab}$, we have $\Gamma_{ab}+\Gamma_{ab}^{\T}=-U_{b}^{\T}U_{a}-U_{a}^{\T}U_{b}$.
Then, since $\Gamma_{ab}$ is symmetric,
\[
\Gamma_{ab}=-\frac{1}{2}\,(U_{b}^{\T}U_{a}+U_{a}^{\T}U_{b})\,.
\]
Therefore,
\[
\sum_{a,b}\norm{\Gamma_{ab}}_{\frob}^{2}\lesssim\sum_{a,b}\norm{U_{a}}_{\frob}^{2}\norm{U_{b}}_{\frob}^{2}=\Bpar{\sum_{a}\norm{U_{a}}_{\frob}^{2}}^{2}\,.
\]
The final item follows from $U_{ab}=P^{\perp}U_{ab}+U\Gamma_{ab}$
\eqref{eq:second-deriv}.
\end{proof}

\end{document}